\numberwithin{equation}{section}
\newtheorem{proposition}[equation]{Proposition}
\newtheorem{theorem}[equation]{Theorem}
\newtheorem{lemma}[equation]{Lemma}
\theoremstyle{definition}
\newtheorem{definition}[equation]{Definition}
\newtheorem{example}[equation]{Example}
\begin{document}

\begin{abstract}
    Directed graphs arise in many applications where computing persistent
homology helps to encode the shape and structure of the input information.
However, there are only a few ways to turn the directed graph information into an undirected simplicial complex filtration required by the standard persistent homology framework.
In this paper, we present a new filtration constructed from a directed graph, called the walk-length filtration. 
This filtration mirrors the behavior of small walks visiting certain collections of vertices in the directed graph. 
We show that, while the persistence is not stable under the usual
$L_\infty$-style network distance, a generalized $L_1$-style distance is,
indeed, stable. 
We further provide an algorithm for its computation, and investigate the
behavior of this filtration in examples, including cycle networks and synthetic hippocampal networks with a focus on comparison to the often used Dowker filtration.

\end{abstract}

\maketitle

\section{Introduction}

Complex networks \cite{Easley2010, Barabasi2016} arise in many applications, such as road networks, circuits, social networks, and neuroscience. 
Topological data analysis (TDA) \cite{Dey2021, Ghrist2014, Munch2017} and particularly persistent homology \cite{Edelsbrunner2013,Kerber2016}, provides tools for measuring the overall shape of an input structure, including many methods for the case of where this network data can be encoded in a graph structure; see \cite{Aktas2019} for a survey in this special case.
The choice of TDA representation of the graph data depends on the particular properties of the graph; in particular, graphs may be directed or not, and may or may not have weights on the edges. 
However, homology, and thus persistent homology, takes as input undirected simplicial complexes, so methods for incorporating directionality pose a particular challenge. 
The current methods for studying these objects are often either mathematically precise but computationally impractical or miss aspects of the directed structure.
Still, it is tempting to find ways to replace the complex input structure, such as a directed graph, by a simpler representation, such as a persistence diagram, where additional structure such as metrics are available for comparison. 
In particular, this opens the door to tasks such as classification of input graphs, where we can try to automatically attach a label to an input graph given the labels of graphs with similar persistence diagrams, e.g.~\cite{Myers2019,Myers2023,Myers2023a,Immonen2023,Horn2022,Hofer2020,Rieck2019,Carriere2020,Chen2023}. 

In this paper, we use the terminology that a graph $G = (V,E)$ is a finite collection of vertices $V$ with edge set $E \subseteq V \times V$. 
This input can either be undirected, where $(u,v) = (v,u)$; else it is a directed graph, also known as a digraph. 
In either case, it is weighted if there is also a function $\omega: E \to \R$ and then we denote the structure as $G = (V,E,\omega)$. 
Following \cite{Carlsson2013,Chowdhury2023}, in this paper we use the term \textit{network} to mean a complete directed weighted graph, $(V,V \times V, \omega)$. 
This information can also be viewed as arising from an asymmetric function $\omega:V\times V \rightarrow \R$, and we will assume that we are given non-negative weights taking values in $\R_{\geq 0}$ and that $\omega(x,x')$ is zero if and only if $x=x'$.

We next survey methods for encoding network data, or more generally weighted digraph data, into something which can be measured with persistent homology. 
Arguably, the most prominent tool for this is Dowker persistence \cite{CM2018-Dowker}, arising from the Dowker complex of a relation \cite{Dowker1952}. 
In the setting of an input weighted digraph $G$, a Dowker sink complex is built by including a simplex in an undirected complex for every collection that is witnessed by a vertex; i.e., $(v_0,\cdots,v_n)$ is a simplex in a complex $K$ iff there is a vertex $v'$ with directed edges $(v_i,v')$ for all $i$ in the graph $G$. 
A similar version, called the Dowker source complex, can be defined replacing the sink $v_i$ with a source $v_i$ which has edges from (rather than to) all the vertices of the simplex. 
This complex is filtered by restricting to edges with weight at most $\delta$ for some $\delta \geq 0$,  and then standard persistent homology can be applied. 
In \cite{CM2018-Dowker}, Chowdhury and M\'emoli proved that while the sink and source complexes are not the same in general, the resulting persistence diagrams are indeed the same.
This complex is computable in polynomial time, making it practical for use in applications; e.g.~\cite{Chowdhury2016a} and \cite{Liu2022}.
Further, this construction is stable in the sense that for the $\infty$-network distance defined in \cite{Carlsson2010},  \cite{CM2018-Dowker} showed that the distance between input networks gives an upper bound for the bottleneck distance between the resulting persistence diagram. 
As we will give detailed comparisons with our results with Dowker persistence, further details on this construction will be given in Section \ref{ssec:dowker}. 
Additionally, there is some recent work implementing Dowker filtrations into graph neural networks to approximate persistent homology of directed graphs \cite{NeuralDowkerNetwork2024}.

Another collection of available methods for turning digraph data into something which can be measured with persistent homology uses ordered tuple complexes. 
In this case, one can redefine homology rather than constructing a simplicial complex. %
In this case, the $n$-chains of an \textit{ordered chain complex} (see e.g.~\cite[Page~76]{Munkres2}) are defined by ordered $n$-tuples given as $(v_0,\cdots,v_n)$, where the $v_i$ are vertices
but might not be distinct.
In~\cite{Turner2019a}, four different persistence modules are defined in terms of a real-valued function $f:X\times X\rightarrow \R$.
The first two include a symmetrization of the function and a construction by ordered tuple complexes; these two become equivalent in the case where $f$ is a symmetric function.
The other two methods use a filtration of directed graphs, either to encode strongly connected components or to build a filtration of ordered tuple complexes.
In \cite{Mendez2023}, they construct a new homology theory based on a directed simplicial complex (another name for ordered tuple complex) to encode exclusively directed cycles.
All of these methods give robust theoretical approaches to describe the asymmetry information.
However, because repeated vertices are allowed, there is an issue with computational blow-up for these complexes resulting in their limited use in practice.

A similar idea arises from using the theory of path homology \cite{Grigoryan2013} in directed graphs,
which builds a chain complex on the collection of unweighted length-$k$ paths, restricting to those paths for which the boundary map returns a valid path.
The persistent path homology (PPH) \cite{CM2018-PPH} filters the directed graph (or network) based on edge weights to obtain a filtration of directed graphs, where path homology is then utilized. 
Advances on 1-dimensional PPH computation were made in \cite{Dey2020}.
Furthermore, a modification of this method, called Grounded PPH \cite{Chaplin2024}, gives a method with more geometrically interpretable results, reflecting on its robustness under modifications of the input graph such as edge subdivisions, deletions, and collapses.

Finally, one can construct a directed flag complex, generalizing the more common construction seen in the undirected case \cite{Aharoni2005}.
A collection $(v_0,\cdots,v_n)$ is a directed $n$-simplex if for every $0 \leq i<j\leq n$ there is a directed edge from $v_i$ to $v_j$.
The idea of a directed flag complex was introduced for use in neuroscience data in \cite{Reimann2017} and is further studied, for example, in \cite{Conceicao2022}, as well as in terms of homotopy in \cite{Govc2020} and persistent homology in \cite{Govc2021, Ignacio2019}.
This complex can be computed using the \texttt{Flagser} package \cite{Luetgehetmann2020}.
Furthermore, a computational study of `almost $d$-simplices', that is, a directed simplex which is missing an edge, can be found in \cite{Unger2023}.

One particularly useful property of any of these filtration methods is that of stability, i.e.,~if the input data is close in some metric, then the resulting representation is closer in its own metric.
The main distance measure in this field to compare the input directed graph data is that of the network distance $d_\cN$ \cite[Definition 2.2.7]{Chowdhury2023}, generalizing the idea of the Gromov-Hausdorff distance \cite{Gromov2007}; in this paper, to clearly differentiate from other distances we will define, we call this the $\infty$-network distance.
This metric is used to study clustering methods in \cite{Carlsson2014, Chowdhury2016, Kim2024, Kim2020}.
Because the network distance arose from generalizing a distance for metric spaces, the main requirement for using this distance is that the input graph must be metric-like in the sense that it must be complete; a complete directed weighted graph is called a network in this literature.
A general study on network distances and the stability of some of the methods mentioned above can be found in \cite{Chowdhury2023}.

The novel contribution of this paper is to construct a new filtration of undirected simplicial complexes that arises from directed graph information, called the \emph{walk-length filtration}.
This filtration encodes information about when a walk can visit all the vertices of a particular list to determine when a simplex should be included.
Then, we can represent the structure using the persistence diagram of the input filtration. 
Because we construct a filtration on an undirected simplicial complex (rather than using non-traditional homology theories as in the case of PPH or the ordered chain complex), we will largely focus on comparing this construction to the Dowker persistence diagram.
Indeed, the proof techniques and structure throughout the paper largely follow that of \cite{CM2018-Dowker}, and thus we also showcase similar experiments that highlight the differences in the two methods. 
We show that the walk length filtration is not stable under the standard $\infty$-network distance like the Dowker filtration. 
However, we give a modified definition of the network distance,  called the 1-network distance $d_\cN^1$,  which  replaces the maximum distortion with a summation.
Under this new distance, we give a proof of stability which is network size dependent; that is that the bottleneck distance between the persistence diagrams is upper bounded by the size of the input networks times the new distance. 
Since this is not a particularly tight bound, we give a second modified version of the 1-network distance which requires a bijection between the vertex sets of the compared networks, and show that in this case, the walk length persistence is a stable in the traditional sense. 
Note that we define the walk-length filtration using a walk rather than a path that may only visit a vertex at most once; otherwise, computation time would be related to finding Hamiltonian paths, which is an NP-hard problem. 
We give a dynamic programming algorithm for its computation. 
Finally, we give examples highlighting the differences between the Dowker filtration and the walk-length filtration. 
Specifically, we replicate two studies from \cite{CM2018-Dowker}.
First, we examine the walk-length persistence of cycle networks and show that the filtration coincides with the Dowker filtration, but after making a modification in such graphs, we see that, in some cases, walk-length can identify the modification while Dowker cannot, up to 1-dimensional homology.
This demonstrates a higher sensitivity of walk-length persistence to non-directed cycles.
Second, we use the walk-length persistence for classification of simulated hippocampal networks from \cite{CM2018-Dowker}, and show that different choices of preprocessing of the weights as input to the two filtrations results in different representations and qualities of classification.

\section{Background}
\label{sec:background}

In this section, we will give the relevant background needed for defining the walk-length filtration.

\subsection{Persistent Homology}\label{sec:persistent-homology}

We describe the general construction of persistent homology; we direct the interested reader to \cite{Dey2021} for more details.

A \textit{simplicial complex} $X$ can be defined as a set of vertices $X_0$, the $0$-simplices, together with sets $X_n$ containing the \textit{$n$-simplices}, which are subsets of $X_0$ of size $n+1$.
We denote an $n$-simplex as an ordered set $\sigma=[v_0,\dots,v_n]$, as well as the even permutations of this order, while any odd permutation of the order of the vertices in $\sigma$ is then denoted as $-\sigma$.
Simplicial complexes have the property that any subset of a simplex is also a simplex, 
that is, if $\sigma\in X_n$ and $\sigma'\subset\sigma$, then $\sigma'\in X_k$ where $k+1$ is the size of $\sigma'$.
Now, given an abelian group $G$, let $C_n(X;G)$ denote the free group generated by the set of $n$-simplices $X_n$.
We write this as $C_n$ for simplicity. This group is called the group of $n$-chains; its elements are formal finite sums of the form 
$\sum_i a_i\,\sigma_i$
where $a_i\in G$ and $\sigma_i\in X_n$. We can define \textit{boundary maps} $\partial_n:C_n\rightarrow C_{n-1}$ given in the generators of $C_n$ as
\[\partial([v_0,\dots,v_n]) = \sum_{i=0}^n (-1)^i \, [v_0, \dots, \widehat{v_i}, \dots, v_n],\]
where $[v_0, \dots, \widehat{v_i}, \dots, v_n]$ denotes the $(n-1)$-simplex obtained from deleting the element $v_i$.
The boundary maps have the property that the composition $\partial_n\circ\partial_{n+1}$ is the zero map for all $n\geq2$, thus we obtain a sequence of maps, called a \textit{chain complex}, in the form
\[\dots \rightarrow C_{n+1} \xrightarrow{\partial_{n+1}} C_n \xrightarrow{\partial_n} C_{n-1} \rightarrow \dots \rightarrow C_1 \xrightarrow{\partial_1} C_0 \xrightarrow{\partial_0} 0.\]
The equation $\partial_n \partial_{n+1}=0$ is equivalent to $\Img(\partial_{n+1})\subseteq \Ker(\partial_n)$, hence we can define the \textit{$n$-th simplicial homology group of $X$ with coefficients in $G$} as the quotient group 
\[H_n(X;G) := \Ker(\partial_n) \,/\, \Img(\partial_{n+1}). \]
This is denoted just as $H_n(X)$ if the coefficients are clear from the context. Elements of $H_n(X;G)$ are cosets of $\Img(\partial_{n+1})\subset C_n$ and are called \textit{homology classes}. 

Moving on to define persistent homology, we use the coefficients $G=\K$, where $\K$ is a field, and so we think of the homology groups $H_n(X) = H_n(X;\K)$ as vector spaces over $\K$. 
Suppose that there is a filtered simplicial complex $K$ given as 
\[\emptyset = K_0 \subset K_1 \subset\dots\subset K_{m-1} \subset K_m = K.\]
Any simplicial map $K_i\rightarrow K_j$ induces a homomorphism in homology groups $H_n(K_i)\rightarrow H_n(K_j)$. 
A homology class $\alpha$ is said to be \textit{born at $K_i$} %
if it is in $H_n(K_i)$ but is not in the image of the map induced by the inclusion $K_{i-1}\subset K_i$. 
If a class $\alpha$ is born at $K_i$, it is said to \textit{die entering $K_j$} if the image of the map induced by $K_{i-1}\subset K_{j-1}$
does not contain the image of $\alpha$ but the image of the map induced by $K_{i-1}\subset K_j$ does, which is to say that $\alpha$ merged into an older class. 
The persistence of $\alpha$ is then encoded as the tuple $(i,j)$, where we set $j=\infty$ when the class never dies.

Furthermore, for each dimension $n\geq0$, the family of vector spaces $\{H_n(K_i)\}_{i=0}^m$ together with the linear maps $\{H_n(K_i)\rightarrow H_n(K_j)\}_{i<j}$ defines a \textit{persistence module}, specifically the \textit{$n$-dimensional persistence vector space of $K$}, which is denoted as $\mathcal{H}_n(K)$. This persistence module admits the decomposition
\[
\mathcal{H}_n(K) \cong \bigoplus_{\ell\in L} \mathbb{I}[p_\ell^*,q_\ell^*]
\]
as a finite direct sum of interval modules---%
an \textit{interval module} $\mathbb{I}[p^*,q^*]$ can be represented by the family of vector spaces $\{V_t\}_{t\in\R}$, with $V_t = \K$ if $t\in[p^*,q^*]$ and $V_t=0$ otherwise, together with identity maps between all pairs of nontrivial spaces, where $[p^*,q^*]$ denotes any of the intervals $[p,q],(p,q],[p,q),(p,q)$.
This decomposition is represented as a multiset, that is, a set with multiplicities, in the extended plane, called \textit{persistence diagram}, given by $\Dgm_n(K) := \{(p_\ell,q_\ell) \;:\; \ell\in L\} \subset \R \times \R\cup\{\infty\}$.
We refer to a point $(p_\ell,q_\ell)$ for which $q_\ell=\infty$ as a \textit{point at infinity}.
For a detailed construction of persistence modules, their decomposition and their diagrams, we refer the reader to \cite{Chazal2016}.

We now define the bottleneck distance between persistence diagrams. For simplicity in the definition, we adjoin countably many copies of the diagonal $\Delta:=\{(x,x): x\in\R\}$ to all diagrams.
Let $\mathscr{D} = \Dgm_n(K)$ and $\mathscr{D}' = \Dgm_n(K')$ be two persistence diagrams associated with two filtered simplicial complexes $K$ and $K'$, respectively. The $L_\infty$-distance between two points $u = (u_1,u_2)$ and $v = (v_1,v_2)$ in the extended plane is given by $\|u-v\|_\infty = \max\{|u_1-v_1|, |u_2-v_2|\}$, where
the difference between two infinite coordinates is defined as zero. 
The \textit{bottleneck distance} is then defined as 
\[d_B(\Dgm_p(K),\Dgm_p(K')) = \inf_\eta \sup_{x\in \mathscr{D}} \|x-\eta(x)\|_\infty,\]
where the infimum is taken over all bijections $\eta:\mathscr{D}\rightarrow \mathscr{D}'$. Note that adjoining the diagonal is required because two persistence diagrams may have a different number of off-diagonal points. Additionally, the bottleneck distance will be infinite if and only if two persistence diagrams have a different number of points at infinity.

Following the approach in \cite{CM2018-Dowker}, we will need the next lemma to prove stability. 
Two simplicial maps 
$f,g:\Sigma\rightarrow\Sigma'$ 
are \emph{contiguous} if for any simplex $\sigma\in\Sigma$, the union 
$f(\sigma)\cup g(\sigma)$ 
is a simplex in $\Sigma'$.

\begin{lemma}[{Stability Lemma, \cite[Lemma~8]{CM2018-Dowker}}]\label{lem:stability}
    Let $F,G$ be two filtered simplicial complexes given by
    \[\{s_{\delta,\delta'} : F_\delta\hookrightarrow F_{\delta'}\}_{\delta\leq\delta'\in\R} \quad\text{and}\quad
    \{t_{\delta,\delta'} : G_\delta\hookrightarrow G_{\delta'}\}_{\delta\leq\delta'\in\R}.\]
    Suppose there is $\eta\geq0$ for which there exist families of simplicial
    maps
    $\{\phi_\delta:F_\delta\rightarrow G_{\delta+\eta}\}_{\delta\in\R}$ and
    $\{\psi_\delta:G_\delta\rightarrow F_{\delta+\eta}\}_{\delta\in\R}$
    such that all the following pairs of maps are contiguous:

    \begin{enumerate}[(i)]
        \item $\phi_{\delta'}\circ s_{\delta,\delta'}$
            and $t_{\delta+\eta,\delta'+\eta}\circ
            \phi_\delta$,\label{item:stability-phimaps}
        \item $\psi_{\delta'}\circ t_{\delta,\delta'}$
            and $s_{\delta+\eta,\delta'+\eta}\circ \psi_\delta$,
            \label{item:stability-psimaps}
        \item $s_{\delta,\delta+2\eta}$
            and $\psi_{\delta+\eta}\circ\phi_{\delta}$,
        \item $t_{\delta,\delta+2\eta}$\label{item:stability-psiphi}
            and
            $\phi_{\delta+\eta}\circ\psi_{\delta}$.\label{item:stability-phipsi}
    \end{enumerate}

    Also, for each $k\in\Z_+$, let $\mathcal{H}_k(F)$ and $\mathcal{H}_k(G)$ denote the
    $k$-dimensional persistence vector spaces associated to $F$ and $G$,
    respectively. Then,
    for each $k\in\Z_+$
    \[d_B( \emph{Dgm}_k(\mathcal{H}_k(F)),\emph{Dgm}_k(\mathcal{H}_k(G)) ) \leq \eta.\]
\end{lemma}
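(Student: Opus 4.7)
The plan is to reduce the statement to the algebraic stability (isometry) theorem for persistence modules by first converting the four contiguity hypotheses into commutative diagrams on homology. The key classical input is the fact that two contiguous simplicial maps $f,g:\Sigma\to\Sigma'$ induce the same homomorphism $f_*=g_*$ on every simplicial homology group. Granting this, I would apply it to each of the pairs \emph{(i)}--\emph{(iv)} and fix a degree $k\geq 0$ throughout.

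Write $s^*_{\delta,\delta'}, t^*_{\delta,\delta'}, \phi^*_\delta, \psi^*_\delta$ for the linear maps induced on $H_k(\cdot;\K)$. Condition \emph{(i)} then yields
\[
\phi^*_{\delta'}\circ s^*_{\delta,\delta'} \;=\; t^*_{\delta+\eta,\delta'+\eta}\circ \phi^*_\delta
\]
for all $\delta\leq\delta'$, which is exactly the statement that the family $\{\phi^*_\delta\}_{\delta\in\R}$ is a morphism of persistence modules $\mathcal{H}_k(F)\to \mathcal{H}_k(G)[\eta]$, where $[\eta]$ denotes the $\eta$-shift. Condition \emph{(ii)} gives the analogous statement for $\psi^*$. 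Conditions \emph{(iii)} and \emph{(iv)} give
\[
\psi^*_{\delta+\eta}\circ\phi^*_\delta \;=\; s^*_{\delta,\delta+2\eta}, \qquad
\phi^*_{\delta+\eta}\circ\psi^*_\delta \;=\; t^*_{\delta,\delta+2\eta},
\]
so that the two compositions recover the internal structure maps of the shifted modules. Taken together, these four identities are precisely the definition of an $\eta$-interleaving between the persistence modules $\mathcal{H}_k(F)$ and $\mathcal{H}_k(G)$.

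Having produced an $\eta$-interleaving, I would then invoke the algebraic stability theorem of Chazal--de Silva--Glisse--Oudot (the isometry theorem), which asserts that if two pointwise finite-dimensional persistence modules are $\eta$-interleaved, their persistence diagrams are within bottleneck distance $\eta$. This yields $d_B(\mathrm{Dgm}_k(\mathcal{H}_k(F)),\mathrm{Dgm}_k(\mathcal{H}_k(G)))\leq \eta$, as required.

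The only nontrivial ingredient, and therefore the main step one might consider the ``hard part,'' is the contiguity-implies-equal-on-homology lemma. If one wants a self-contained argument rather than citing Munkres or Spanier, one can exhibit an explicit chain homotopy between the chain maps induced by $f$ and $g$ via the prism-style formula $h(\left[v_0,\dots,v_n\right]) = \sum_{i=0}^n (-1)^i \left[f(v_0),\dots,f(v_i),g(v_i),\dots,g(v_n)\right]$, which is well defined on chains precisely because contiguity guarantees that $f(\sigma)\cup g(\sigma)$ is a simplex of $\Sigma'$ for every $\sigma\in\Sigma$. Everything after that point is a routine diagram chase followed by the application of the isometry theorem.
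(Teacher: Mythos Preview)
Your argument is correct and is exactly the standard route: contiguity forces equality on homology, so hypotheses \emph{(i)}--\emph{(iv)} become the data of an $\eta$-interleaving of persistence modules, and the algebraic stability/isometry theorem finishes. Note that the present paper does not supply its own proof of this lemma---it simply quotes it from \cite[Lemma~8]{CM2018-Dowker}---and your sketch is precisely the argument given there.
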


\subsection{Graphs and Networks}\label{sec:graphs-networks}

In this section we give the relevant definitions and notation, largely following the setup of \cite{Chowdhury2023}, to define the network distance. 
A \emph{weighted directed graph (digraph)} is a triple~$D=(V,E,w)$, where $V$
is a finite set of vertices,~$E\subseteq V\times V$ is a set of directed
edges,
and there is a weight on the edges
given by~$w:E\rightarrow\R_{\geq 0}$.
We assume that $w(v,v') =0$ if and only if $v=v'$.\footnote{This definition is that of a \textit{dissimilarity network} in \cite{Chowdhury2023}.}
A \emph{walk of length $n$} in $D$ is any sequence of
vertices $\gamma=(v_0,\dots,v_n)$ where~$(v_i,v_{i+1})\in E$ for~$0\leq i\leq
n-1$. Here the vertices can be repeated, and we call the walk a \emph{path} if no
vertices are repeated.
The \emph{weight of the walk} is given by the sum of the weight of the edges in the walk and is denoted as
\[
w(\gamma) = w(v_0,\dots,v_n) := \sum_{i=0}^{n-1}
w(v_i,v_{i+1}).
\]
The weighted digraph $D$ is said to be \emph{strongly connected} if, for any
pair $(u,v)\in V\times V$, there exists a walk in $D$ that starts at $u$ and
ends at $v$.
A complete weighted digraph $(V,V\times V, \omega)$ may sometimes be denoted
simply as $(V,\omega)$.
Note that we use the terms complete weighted digraph and \emph{network} interchangeably, following terminology in \cite{Carlsson2014}.
Let the collection of all networks be denoted $\networks$.

Given a strongly connected weighted digraph $D=(V,E,w)$, the
\emph{shortest distance function}~$\omega_D \colon V \times V \to \R_{\geq 0}$
is given by
\[
\omega_D(u,v)
:= \min\{w(\gamma) : \gamma \text{ is a walk in $D$ from $u$ to $v$}\}.
\]
The complete weighted digraph $(V,\omega_D)$ is called the \emph{shortest-distance
digraph associated to $D$}. We also define a \emph{shortest-distance digraph} in general as a graph that is equal to its associated shortest-distance digraph.

Next, we need to define ways to compare input networks to each other. 
To that end, we start with a careful definition of a metric and the variations we will use throughout this paper. 

\begin{definition}\label{def:metricproperties}
    Let $X$ be a set, and $d: X \times X \to \overline{\R}_{\geq 0}$ any function, 
    where $\overline{\R}_{\geq 0}$ denotes the non-negative extended real line $\R_{\geq 0} \cup \{ \infty\}$. 
    Consider the following properties:
    \begin{itemize}
        \item Finiteness: $\forall x,y \in X$, $d(x,y) < \infty$. 
        \item Identity: $\forall x \in X$, $d(x,x) = 0$. 
        \item Symmetry: $\forall x,y \in X$, $d(x,y) = d(y,x)$. 
        \item Separability: $\forall x,y \in X$, $d(x,y) = 0$ implies $x = y$. 
        \item Subadditivity (Triangle inequality): $\forall x,y,z \in X$, $d(x,y) \leq d(x,z) + d(z,y)$. 
    \end{itemize}
    The function $d$ is called a \textit{metric} if it satisfies all five properties,
    a \textit{semimetric} if it satisfies all but subadditivity, and a \textit{pseudometric} if it satisfies all but separability.
\end{definition}

Now we consider the
$\infty$-network distance $d_\cN$ presented in \cite{Chowdhury2023} generalizing the Gromov--Hausdorff distance between metric spaces \cite{Gromov2007}. 
Although \cite{Chowdhury2023} makes careful treatment of infinite sets $X$, in this paper we will focus on the following finite restriction. 
Let $\cX = (X,\omega_X)$, $\cY = (Y,\omega_Y)$ be complete weighted finite digraphs and let $R$ be any nonempty
relation between~$X$ and $Y$, that is, a nonempty subset  $R \subseteq X\times Y$.
The \emph{distortion} of the relation $R$ is given by
\[
\text{dis}(R) := \max_{(x,y),(x',y')\in R} |\omega_X(x,x')-\omega_Y(y,y')|.
\]
A \emph{correspondence} between $X$ and $Y$ is a relation $R$ between $X$
and~$Y$ such that $\pi_X(R)=X$ and $\pi_Y(R)=Y$, where~$\pi_X$ and~$\pi_Y$ denote
the projections into $X$ and $Y$, respectively.  The collection of all correspondences between $X$ and
$Y$ is denoted $\allcorr(X,Y)$. Now we are ready to define a distance between
networks, which came from some restricted version studied in \cite{Carlsson2014,Chowdhury2015,Chowdhury2016}.

\begin{definition}[\cite{Chowdhury2023}]
\label{def:networkDistance}
    The \emph{$\infty$-network distance}
    $d_\networks : \networks \times \networks \rightarrow\mathbb{R}$ is
    defined by
    \[
    d_\cN(\cX,\cY)
    := \frac12 \min_{R\in\allcorr(X,Y)} \dis(R).
    \]
\end{definition}

The $\infty$-network distance is a metric with our assumptions \cite[Theorem 2.8.1]{Chowdhury2023}. 
However, while it can be defined for more general inputs, the $\infty$-network distance is actually a pseudometric without the positivity and 0-weight self loop assumptions used here. 
The set of those more generalized networks for which $d_\cN(X,Y) = 0$ is completely characterized by the idea of \textit{weakly isomorphic networks}; see \cite[Theorem 2.3.7]{Chowdhury2023} for further details. 

The $\infty$-network distance can be reformulated in a way that is more convenient for showing stability and for computation as follows.
Given $\cX=(X,w_X), \cY=(Y,w_Y) \in \networks$
and a pair of maps~$\phi:X\rightarrow Y$ and $\psi:Y\rightarrow X$,
 the \emph{distortion} and \emph{codistortion} of these maps is
\begin{align*}
    &\dis(\phi) := \max_{x,x'\in X} |\omega_X(x,x')-\omega_Y(\phi(x),\phi(x'))|,\\
    &\codis(\phi,\psi) := \max_{(x,y)\in X\times Y} |\omega_X(x,\psi(y))-\omega_Y(\phi(x),y)|,
\end{align*}
and similarly for $\dis(\psi)$ and $\codis(\psi,\phi)$.
Then, as shown in \cite[Proposition~9]{CM2018-Dowker}, the $\infty$-network distance
can be reformulated as
\[
    d_\cN(\cX,\cY) = \frac12 \min_{\phi,\psi}
    \big\{
        \max \left\{\dis(\phi),
        \dis(\psi), \codis(\phi,\psi), \codis(\psi,\phi)
        \right\}
    \big\}.
\]

\subsection{Dowker Persistence}
\label{ssec:dowker}

In this section, we give the definition for the Dowker filtration following \cite{CM2018-Dowker}. 
Let $(X,\omega_X)\in\cN$ be a network.
Given any $\delta\in\R$, define the following relation on $X$:
$$R_{\delta,X} := \{(x,x') : \omega_X(x,x')\leq\delta\}.$$
Note that $R_{\delta,X}=X\times X$ for some sufficiently large $\delta$, and for any $\delta\leq\delta'$, we have $R_{\delta,X}\subseteq R_{\delta',X}$. 
Using this relation, define the simplicial complex
$$\Dsi_\delta := \{\sigma=[x_0,\dots,x_n] : \text{ there exists } x'\in X \text{ such that } (x_i,x')\in R_{\delta,X} \text{ for each } x_i \}.$$
This is called the \textit{Dowker $\delta$-sink simplicial complex}.

Since $R_{\delta,X}$ is an increasing sequence of sets, it follows that $\Dsi_\delta$ is an increasing sequence of simplicial complexes. In particular, for $\delta\leq\delta'$, there is a natural inclusion map $\Dsi_\delta\hookrightarrow\Dsi_{\delta'}$. The filtration $\{\Dsi_\delta \hookrightarrow \Dsi_{\delta'}\}_{\delta \leq \delta'}$ associated to $X$ is denoted $\Dsi_X$. This is called the \textit{Dowker sink filtration on $X$}.

There is a dual construction, $\Dso_X$, called the \textit{Dowker source filtration}, consisting of the simplicial complexes defined as 
$$\Dso_\delta := \{\sigma=[x_0,\dots,x_n] : \text{ there exists } x'\in X \text{ such that } (x',x_i)\in R_{\delta,X} \text{ for each } x_i \}.$$
The sink and source filtrations are not equal in general, but the persistence diagrams are the same (see \cite[Corollary 20]{CM2018-Dowker}), hence we denote the $k$-dimensional persistence diagram arising from either of these filtrations by $\Dgm_k^\mathfrak{D}(X)$. The Dowker persistence diagram is shown to be stable, as stated in the following result.

\begin{theorem}[{\cite[Proposition 15]{CM2018-Dowker}}]
Let $\cX = (X,\omega_X)$, $\cY = (Y,\omega_Y) \in\networks$. Then 
\[
d_B(\Dgm_k^\mathfrak{D}(\cX),\Dgm_k^\mathfrak{D}(\cY)) \leq 2d_\networks(\cX,\cY).
\]
\end{theorem}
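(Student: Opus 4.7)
The plan is to invoke Lemma~\ref{lem:stability} with $\eta := 2d_\cN(\cX,\cY)$, taking $F$ and $G$ to be the Dowker sink filtrations of $\cX$ and $\cY$. By the maps-based reformulation of the $\infty$-network distance given at the end of Section~\ref{sec:graphs-networks}, I may select $\phi\colon X\to Y$ and $\psi\colon Y\to X$ realizing the minimum, so that each of $\dis(\phi),\dis(\psi),\codis(\phi,\psi),\codis(\psi,\phi)$ is at most $\eta$. Since by Corollary~20 of \cite{CM2018-Dowker} the sink and source filtrations yield the same persistence diagram $\Dgm_k^\mathfrak{D}$, it suffices to work entirely with the sink construction.

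First I would promote $\phi$ and $\psi$ to vertex-induced simplicial maps $\phi_\delta([x_0,\dots,x_n]):=[\phi(x_0),\dots,\phi(x_n)]$, and analogously for $\psi_\delta$, and check that they land in the $\eta$-shifted complex. If a simplex $\sigma=[x_0,\dots,x_n]$ of the Dowker sink complex of $\cX$ at scale $\delta$ has sink witness $x'\in X$, then $\dis(\phi)\leq\eta$ gives
\[
\omega_Y(\phi(x_i),\phi(x'))\;\leq\;\omega_X(x_i,x')+\eta\;\leq\;\delta+\eta,
\]
so $\phi(x')$ is a sink witness for $\phi_\delta(\sigma)$ in the sink complex of $\cY$ at scale $\delta+\eta$; the argument for $\psi_\delta$ is symmetric.

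Next I verify the four contiguity conditions. Conditions (i) and (ii) are automatic: in each pair, both maps agree on vertices (both are $\phi$, resp.\ both are $\psi$), so they coincide as simplicial maps and are trivially contiguous. For condition (iii), given $\sigma$ in the sink complex of $\cX$ at scale $\delta$ with witness $x'$, I claim the \emph{same} $x'$ witnesses the union $\sigma\cup\psi_{\delta+\eta}(\phi_\delta(\sigma))$ as a simplex at scale $\delta+2\eta$. The bound on each $x_i$ is immediate, and for each $\psi(\phi(x_i))$ one chains codistortion and distortion:
\[
\omega_X(\psi(\phi(x_i)),x')\;\leq\;\omega_Y(\phi(x_i),\phi(x'))+\eta\;\leq\;\omega_X(x_i,x')+2\eta\;\leq\;\delta+2\eta,
\]
applying $\codis(\psi,\phi)\leq\eta$ followed by $\dis(\phi)\leq\eta$. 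Condition (iv) is symmetric, using $\codis(\phi,\psi)$ and $\dis(\psi)$. Lemma~\ref{lem:stability} then delivers $d_B(\Dgm_k^\mathfrak{D}(\cX),\Dgm_k^\mathfrak{D}(\cY))\leq\eta=2d_\cN(\cX,\cY)$.

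The principal obstacle is the directional bookkeeping forced by the asymmetry of the sink relation $R_{\delta,X}=\{(x,x'):\omega_X(x,x')\leq\delta\}$: the codistortion hypothesis must be invoked with its arguments oriented so that it controls the \emph{outgoing} distance $\omega_X(\psi(\phi(x_i)),x')$ rather than its reverse---only then does the original $x'$ continue to serve as a common sink witness in condition (iii). Once that orientation is set correctly, the three composed inequalities above line up cleanly to give the required $\delta+2\eta$ bound, and the rest is a direct application of the stability lemma.
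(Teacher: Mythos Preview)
Your proposal is correct, and in fact the paper does not supply its own proof of this theorem at all: it is quoted verbatim as \cite[Proposition~15]{CM2018-Dowker} and left without argument. What you have written is precisely the proof strategy of that reference, and it is also the template the present paper adapts for its own stability results (Theorems~\ref{thm:d1-stability-M} and~\ref{thm:d1-stability}): pick $\phi,\psi$ realizing the network distance, push simplices forward using $\dis(\phi)$ to move the witness, and handle the contiguity in (iii)--(iv) by keeping the original witness and chaining one codistortion bound with one distortion bound. Your orientation of $\codis(\psi,\phi)$ in condition~(iii) is the right one, and the inequalities line up exactly as you wrote them.
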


\section{The Walk-Length Filtration}\label{sec:walk-length-filtration}

We now present the new definition of the walk-length filtration and give some properties on its construction using shortest distances,
then we show two versions of stability of walk-length persistence. 
Finally, we give theoretical and practical details on the computation.

\subsection{Definition and Properties}

\begin{definition}[Walk-Length Filtration]\label{def:walk-length}
    Let $D=(V,E,w)$ be a weighted directed graph. For any subset
    of vertices~$\sigma \subseteq V$, define
    \[
    \f_D(\sigma) = \inf\{ w(\gamma) \;:\; \gamma \text{ is a walk in $D$ that contains all vertices in $\sigma$} \}.\]
    Note that, if there is no such walk, $\f_D(\sigma)=\infty$.
    Then, given $\delta \in \R$, we define a simplicial complex associated to
    the length $\delta$ as
    \[
        \Wd = \{\sigma \subseteq V:\; \f(\sigma)\leq\delta \}.
    \]
    The \emph{walk-length filtration of $D$} is the parameterized
    collection of
    simplicial complexes~$\{ \Wd \}_{\delta \in \R}$.
    When necessary, we denote this complex by $\Wd^D$ for clarity.
    We denote as $\Dgm_k^{\text{WL}}(D)$ the $k$-dimensional persistence diagram associated to the corresponding walk-length filtration $\{\Wd^D\}_{\delta\in\R}$.
\end{definition}

To see that the walk-length filtration is indeed a filtration, note
that $\Wd \subseteq \W_{\delta'}$ for $\delta \leq
\delta'$ follows from the definition of $\Wd$.
We next present a lemma providing insights into when the complete complex on $V$
is present in the walk-length filtration.

\begin{lemma}[Realizing Complete Complex]
    Let $D=(V,E,w)$ be a weighted digraph, and $\{\Wd\}_{\delta
    \in \R}$ the associated walk-length filtration.  
    Then, there is a $\delta \in \R$ such that $\Wd$ is the complete simplicial complex on $V$ 
    if and only if $D$ is strongly connected.
\end{lemma}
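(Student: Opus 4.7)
The plan is to prove each direction of the biconditional separately. A useful preliminary observation is that the walk-length filtration is automatically downward closed: if $\sigma' \subseteq \sigma$, then any walk containing every vertex of $\sigma$ also contains every vertex of $\sigma'$, so $\f_D(\sigma') \leq \f_D(\sigma)$. Combined with the definition of $\Wd$, this implies that $\Wd$ is the complete simplicial complex on $V$ exactly when $V \in \Wd$, i.e., when $\f_D(V) \leq \delta < \infty$. So the lemma reduces to characterising when $\f_D(V)$ is finite.

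For the $(\Leftarrow)$ direction, suppose $D$ is strongly connected. Fix any enumeration $V = \{v_1, \ldots, v_n\}$. By strong connectivity, for each $i \in \{1, \ldots, n-1\}$ there is a walk $\gamma_i$ in $D$ from $v_i$ to $v_{i+1}$, with finite weight $w(\gamma_i)$. Concatenating these walks produces a single walk $\gamma = \gamma_1 \cdot \gamma_2 \cdots \gamma_{n-1}$ in $D$ whose vertex set contains $V$, with finite weight $w(\gamma) = \sum_{i=1}^{n-1} w(\gamma_i)$. Therefore $\f_D(V) \leq w(\gamma) < \infty$, and any $\delta \geq w(\gamma)$ realises the complete simplicial complex on $V$ via downward closure.

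For the $(\Rightarrow)$ direction, suppose there exists $\delta \in \R$ with $\Wd = 2^V$, so $\f_D(V) < \infty$, witnessed by a walk $\gamma = (u_0, u_1, \ldots, u_m)$ containing every vertex of $V$. Given arbitrary $u, v \in V$, both occur in $\gamma$: writing $u = u_i$ and $v = u_j$, the sub-walk $(u_i, u_{i+1}, \ldots, u_j)$ is a walk from $u$ to $v$ in $D$ whenever $i \leq j$. The main obstacle is to produce a walk from $u$ to $v$ in the case $i > j$, since a single witnessing walk only certifies reachability along its own order. I would close this gap by leveraging the fact that every other simplex $\{u,v\} \in \Wd$ is itself witnessed by some (possibly different) walk, and by combining the sub-walks extracted from all such witnesses to assemble a walk in each orientation between every pair of vertices, yielding strong connectivity of $D$. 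This final combinatorial step, showing that the full family of pairwise witnesses suffices to recover both orientations, is the key technical point I would need to carry out carefully.
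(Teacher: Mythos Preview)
Your $(\Leftarrow)$ argument is correct and coincides with the paper's. Your reduction to the finiteness of $\f_D(V)$ via downward closure is also valid and is a clean way to frame the problem.

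For the $(\Rightarrow)$ direction, you have correctly identified a genuine obstacle: a single walk visiting all of $V$ only certifies directed reachability along the order in which the walk traverses the vertices, not in both orientations. Your proposed fix---combining witnesses for the individual $1$-simplices $\{u,v\}$---cannot close this gap, because each such witness is again merely a walk containing $u$ and $v$ in \emph{some} order, so it too supplies reachability in only one direction.

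In fact, the $(\Rightarrow)$ direction is false as stated. Take $V=\{1,2,3\}$, $E=\{(1,2),(2,3)\}$ with both edge weights equal to $1$. This digraph is not strongly connected (there is no walk from $3$ to $1$), yet the walk $(1,2,3)$ contains every vertex, so $\f_D(V)=2$ and the complete complex on $V$ is realized at $\delta=2$. The paper's own proof of this direction slips at exactly the point you flagged: from ``there is no walk from $u$ to $v$'' it concludes $\f_D(\{u,v\})=\infty$, but $\f_D$ only requires a walk \emph{containing} both $u$ and $v$, which may exist even when no walk from $u$ to $v$ does. So your suspicion that something was missing was well founded; the gap cannot be filled, because the correct characterization is that some $\Wd$ realizes the complete complex on $V$ if and only if $D$ admits a single walk visiting every vertex, a strictly weaker condition than strong connectivity.
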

\begin{proof}
    Let~$\{v_i\}$ be an arbitrary ordering of the vertices in $V$.
    If $D$ is strongly connected, there is a path from $v_i$ to
    $v_{i+1}$ for every $i$.
    Concatenating these together gives a walk that uses all the vertices. Let
    $\delta$ be the length of this walk; then, $\Wd$ is the complete
    simplicial complex on $V$.

    On the other hand, if $D$ is not strongly connected, then there exists
    a pair $u,v\in V$ such that there is no walk from $u$ to $v$. By definition
    of $\f_D$, we have $\f(\{u,v\})=\infty$. Hence, there does not exist~$\delta
    \in \R$ for which $[u,v] \in \Wd$.
\end{proof}

Now we present two lemmas that give us the ability to compute the walk-length filtration in a more simple manner, improving computation time. 
This is done using shortest-distance digraphs; a simple example is shown in Figure~\ref{fig:shortest-dist-graph}.
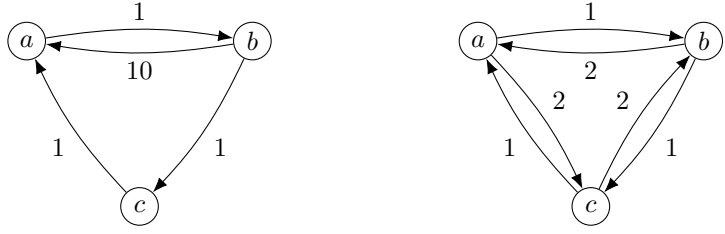
\begin{figure}
\centering
\begin{tikzpicture}[>={Latex[length=2mm]},scale=1]
    \node[minimum size=0.5cm,draw,circle,inner sep=1pt] (a) at (0,2) {$a$};
    \node[minimum size=0.5cm,draw,circle,inner sep=1pt] (b) at (3,2) {$b$};
    \node[minimum size=0.5cm,draw,circle,inner sep=1pt] (c) at (1.5,-0.2) {$c$};
    \draw[->] (a) edge["$1$", bend left=9] (b);
    \draw[->] (b) edge["$1$", bend left=9] (c);
    \draw[->] (c) edge["$1$", bend left=9] (a);
    \draw[->] (b) edge["$10$", bend left=9] (a);

    \begin{scope}[shift={({6},{0})}]
    \node[minimum size=0.5cm,draw,circle,inner sep=1pt] (a) at (0,2) {$a$};
    \node[minimum size=0.5cm,draw,circle,inner sep=1pt] (b) at (3,2) {$b$};
    \node[minimum size=0.5cm,draw,circle,inner sep=1pt] (c) at (1.5,-0.2) {$c$};
    \draw[->] (a) edge["$1$", bend left=9] (b);
    \draw[->] (b) edge["$2$", bend left=9] (a);
    \draw[->] (b) edge["$1$", bend left=9] (c);
    \draw[->] (c) edge["$2$", bend left=9] (b);
    \draw[->] (c) edge["$1$", bend left=9] (a);
    \draw[->] (a) edge["$2$", bend left=9] (c);
    \end{scope}
\end{tikzpicture}
\caption{An example of a shortest-distance digraph (right) obtained from a strongly connected digraph (left).}
\label{fig:shortest-dist-graph}
\end{figure}

\begin{lemma}
\label{lem:WL_same_for_shortest_path}
    Let $D=(V,E,w)$ be a strongly connected weighted digraph, and let
    $\cX=(V,\omega_D)$ be the shortest-distance digraph associated to $D$. Then,
    the walk-length filtrations for $D$ and~$\cX$ are the same.
\end{lemma}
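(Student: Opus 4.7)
The plan is to show, for every subset $\sigma\subseteq V$, that $\f_D(\sigma)=\f_\cX(\sigma)$; since the two filtrations are completely determined by these values, the equality of complexes $\Wd^D=\Wd^\cX$ for every $\delta$ will follow immediately.

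First I would establish the inequality $\f_\cX(\sigma)\leq\f_D(\sigma)$. The key observation is that $\cX=(V,V\times V,\omega_D)$ is complete, so any sequence of vertices in $V$ is automatically a walk in $\cX$. Thus every walk $\gamma=(v_0,\dots,v_n)$ in $D$ containing $\sigma$ is also a walk in $\cX$ containing $\sigma$. By definition of the shortest-distance function, $\omega_D(v_i,v_{i+1})\leq w(v_i,v_{i+1})$ for each consecutive pair (since $(v_i,v_{i+1})\in E$ provides a walk of length $w(v_i,v_{i+1})$), so summing gives that the $\cX$-weight of $\gamma$ is at most its $D$-weight. Taking the infimum over all such $\gamma$ yields the desired inequality.

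Next I would prove $\f_D(\sigma)\leq\f_\cX(\sigma)$ by an explicit expansion argument. Given any walk $\gamma=(v_0,\dots,v_n)$ in $\cX$ that contains $\sigma$, for each consecutive pair $(v_i,v_{i+1})$ choose (using strong connectivity of $D$) a walk $\gamma_i$ in $D$ from $v_i$ to $v_{i+1}$ realizing $\omega_D(v_i,v_{i+1})$. Concatenating $\gamma_0,\gamma_1,\dots,\gamma_{n-1}$ produces a walk $\tilde\gamma$ in $D$ whose vertex set contains $\{v_0,\dots,v_n\}\supseteq\sigma$, and whose weight is exactly $\sum_{i=0}^{n-1}\omega_D(v_i,v_{i+1})=w_\cX(\gamma)$. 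Hence $\f_D(\sigma)\leq w_\cX(\gamma)$, and taking the infimum over walks $\gamma$ in $\cX$ gives the inequality.

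Combining the two inequalities gives $\f_D(\sigma)=\f_\cX(\sigma)$ for every $\sigma\subseteq V$, and therefore $\Wd^D=\Wd^\cX$ for every $\delta\in\R$, which is exactly the claim. I do not anticipate any serious obstacle: the only subtle point worth being explicit about is the use of strong connectivity, which guarantees the shortest-distance minimizing walks $\gamma_i$ actually exist in $D$ (so that $\omega_D$ is finite on all pairs and the expansion step is well-defined).
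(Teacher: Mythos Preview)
Your proposal is correct and follows essentially the same approach as the paper: both argue that $\f_D(\sigma)=\f_\cX(\sigma)$ for every $\sigma$, with the key step being the expansion of each edge of a walk in $\cX$ into a shortest walk in $D$ via concatenation. Your version is in fact more explicit than the paper's, which only spells out the expansion direction and leaves the other inequality implicit.
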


\begin{proof}
    We must show that, for all $\delta \in \R$, the two corresponding simplicial complexes
    associated to $\delta$ are the same, that is, $\Wd^D = \Wd^{\cX}$.
    From the definition, for any directed edge $(a,b)\in V\times V$ there is a
    nontrivial shortest walk~$\gamma = (v_0,\dots,v_m)$ in $\cX$, such that $v_0=a$, $v_m=b$, and
    $w(\gamma)=\omega_D(a,b)$. By concatenation, any walk in $\cX$ can be
    associated to some walk in $D$ that has the same total weight and contains
    all the vertices. This implies that $\f_{\cX}(\sigma)=\f_D(\sigma)$ for any
    subset of vertices $\sigma$; thus, the filtrations are equal.
\end{proof}

\begin{lemma}
    Let $\cX=(V,\omega)$ be a shortest-distance digraph. Given an $n$-simplex $\sigma=(x_0,\dots,x_n)\subset V$, $n\geq1$, in the walk-length filtration $\{\Wd^\cX\}_{\delta\in\R}$, the filtration value of $\sigma$ is given by
    \[
    \f_\cX(\sigma) = \min_{\rho\in S_{n+1}} \sum_{i=0}^{n-1} \omega(x_{\rho(i)},x_{\rho(i+1)})
    = \min_{\rho\in S_{n+1}} \omega(x_{\rho(0)},\dots,x_{\rho(n)}),
    \]
    where $S_{n+1}$ denotes the set of all permutations of $\{0,1,\dots,n\}.$ More concisely, the minimal walk containing $\sigma$ can be given as an ordering of its vertices with no additional vertices needed.
\end{lemma}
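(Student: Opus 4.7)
\medskip

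\noindent\textbf{Proof proposal.}
My plan is to prove the identity by a two-sided inequality, using that $\omega$ is the shortest-distance function on $\cX$ and in particular satisfies $\omega(a,b) \leq w(\eta)$ for every walk $\eta$ in $\cX$ from $a$ to $b$.

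For the upper bound, I would observe that since $\cX$ is complete, for any permutation $\rho \in S_{n+1}$ the sequence $\gamma_\rho = (x_{\rho(0)}, x_{\rho(1)}, \ldots, x_{\rho(n)})$ is a valid walk in $\cX$, its total weight equals $\sum_{i=0}^{n-1} \omega(x_{\rho(i)}, x_{\rho(i+1)})$, and it contains every vertex of $\sigma$. Hence $\f_\cX(\sigma) \leq \min_{\rho \in S_{n+1}} \omega(x_{\rho(0)}, \ldots, x_{\rho(n)})$.

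For the lower bound I would take any walk $\gamma = (v_0,\dots,v_m)$ in $\cX$ that contains every vertex of $\sigma$ and extract a permutation from it. Concretely, for each $i \in \{0,\dots,n\}$ choose an index $j_i$ with $v_{j_i} = x_i$; after relabeling, we may assume $j_0 < j_1 < \cdots < j_n$, which defines a permutation $\rho$ of $\{0,\dots,n\}$ by $v_{j_i} = x_{\rho(i)}$. For each consecutive pair the sub-sequence $(v_{j_i}, v_{j_i+1}, \ldots, v_{j_{i+1}})$ is a walk in $\cX$ from $x_{\rho(i)}$ to $x_{\rho(i+1)}$, so the shortest-distance property of $\cX$ (Lemma~\ref{lem:WL_same_for_shortest_path} and the definition preceding it) gives
\[
\omega(x_{\rho(i)}, x_{\rho(i+1)}) \;\leq\; \sum_{k=j_i}^{j_{i+1}-1} \omega(v_k, v_{k+1}).
\]
Summing from $i=0$ to $i=n-1$ telescopes the right-hand side into the weight of the sub-walk $(v_{j_0},\dots,v_{j_n})$, which is at most $w(\gamma)$. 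Taking the infimum over all such walks yields $\min_{\rho} \omega(x_{\rho(0)},\dots,x_{\rho(n)}) \leq \f_\cX(\sigma)$.

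The only subtle point I expect is bookkeeping when a vertex of $\sigma$ repeats in $\gamma$: one must fix one occurrence per vertex of $\sigma$ so that the indices $j_0,\dots,j_n$ are strictly increasing and pairwise distinct, which is always possible because the $x_i$ are distinct vertices of $\sigma$. Everything else is a routine application of the triangle-like inequality built into the shortest-distance function, and the two inequalities together give the claimed equality. As a consequence, the minimizing walk can be chosen to be an ordering of the vertices of $\sigma$ with no additional vertices, proving the final assertion of the lemma.
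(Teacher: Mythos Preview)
Your proposal is correct and rests on the same core idea as the paper's proof: the shortest-distance triangle inequality lets you drop any vertices of a covering walk that are not in $\sigma$ without increasing total weight. The only difference is organizational---the paper starts from a minimal walk and deletes extraneous vertices one at a time, whereas you prove the two inequalities separately and extract the permutation from an arbitrary covering walk; your version has the minor advantage of not needing to assume the infimum is attained.
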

\begin{proof}
    Let $\gamma = (v_0,\dots,v_m)$ be a walk in $\cX$ which contains all $x_i\in\sigma$ that is minimal in the sense that $\f_\cX(\sigma) = \omega(\gamma)$. 
    Suppose that $v_i$ is not contained in $\sigma$ for some $0\leq i\leq m$. 
    Firstly, $v_i$ cannot be the start or end of $\gamma$ since then either $(v_1,\dots,v_m)$ or $(v_0,\dots,v_{m-1})$ would still contain $\sigma$ and have lower weight; thus $0<i<m$. 
    Now, given that $\cX$ is a shortest-distance digraph, we know that 
    $\omega(v_{i-1},v_{i+1}) \leq \omega(v_{i-1},v_i,v_{i+1})$, meaning that deleting the vertex $v_i$ will not affect the minimality of the walk. 
    Indeed, if $\gamma'$ denotes the walk obtained after deleting all the $v_i$ that are not in $\sigma$, then 
    $\omega(\gamma') \leq \omega(\gamma)$, and since $\gamma$ has a minimal weight, we have the equality.
    We conclude that $\f_\cX(\sigma) = \omega(\gamma')$, where $\gamma'$ is just a walk through the vertices in $\sigma$ in a certain order.
\end{proof}

\subsection{Stability}
Stability is the notion that whatever representation of our data we have---in this case, the resulting persistence diagram---a function of the distance between the input networks is an upper bound for the distance between the representations.
The standard metric to use in the persistence for directed graphs literature is
to use the $\infty$-network distance $d_\cN$ (Definition~\ref{def:networkDistance}) in these
stability statements \cite{CM2018-Dowker,CM2018-PPH,Turner2019a}.
We first show a negative result: The walk-length filtration is not stable under $d_\cN$.

Consider the example graphs $D$ and $D_\e$ shown in \figref{nonstable-triangles}.
\begin{figure}
    \input{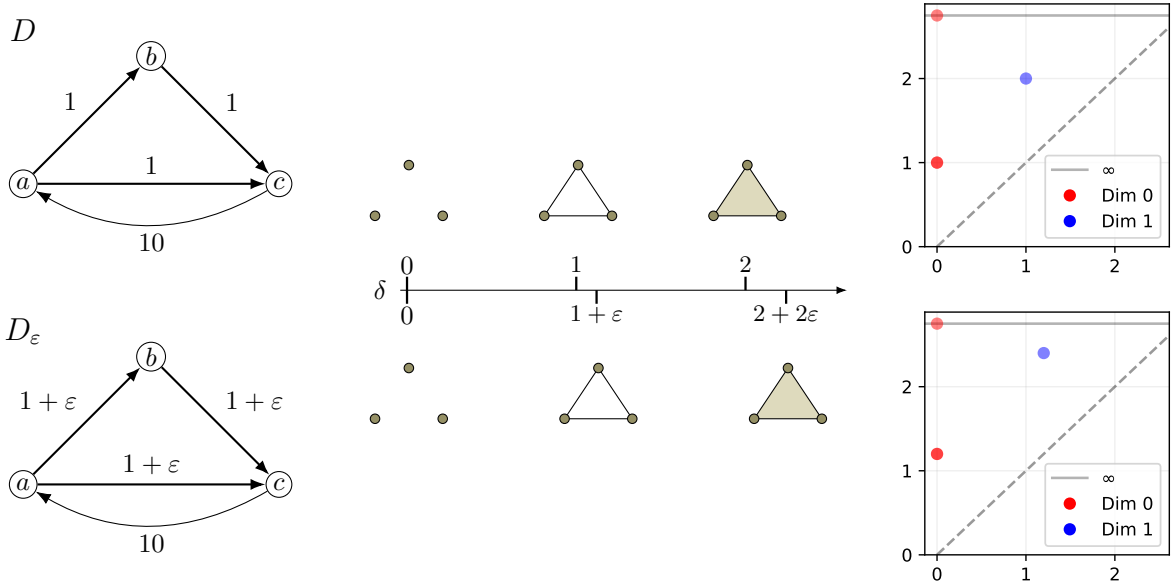}
    \caption{A case where walk-length construction is not stable under an $\ell_\infty$-distance.}
    \label{fig:nonstable-triangles}
\end{figure}
Let $X = \{a,b,c\}$, and let~$\cX = (X,\omega)$ and $\cX_\e = (X,\omega_\e)$ be
the corresponding shortest-distance digraphs of $D$ and~$D_\e$, respectively.
It can be checked that the $\infty$-network distance
$d_\cN(\cX,\cX_\e)$ for this example is given by
the maps $\phi$ and $\psi$ which are both the identity map on $X$.
Then $\dis(\phi) = \dis(\psi) = \codis(\phi,\psi) = \codis(\psi,\phi) =\e$ so
$d_{\cN}(\cX,\cX_\e)=\e/2$.

On the other hand, %
the bottleneck distance between the 1-dimensional diagrams is
\begin{equation*}
    d_B(\DgmWL_1(\cX),\DgmWL_1(\cX_\e)) = \max \left\{2\e, \frac{1+\e}{2} \right\}
\end{equation*}
given by either matching the off-diagonal points to each other, or matching them each to the diagonal.
Regardless of which is larger, we can see that $d_B >\e/2 = d_{\cN}$.
Furthermore, note that the multiplicative factor for $2\epsilon$ in the bottleneck distance %
is directly related to the number of vertices in the cycle representing this one-dimensional
persistence point, while $d_\cN(\cX,\cX_\e)$ is independent of it, so no multiple of $d_\cN$ can bound $d_B$, meaning that walk-length
filtration defies stability with the $\infty$-network distance.

In essence, this issue arises because our filtration function is a sum in the
path length but the $\infty$-network distance is a maximum, hence we should be using a summation version of this distance instead.
We will show two versions of stability: the first one involving a factor of size of the networks (Theorem~\ref{thm:d1-stability-M}), then a second bound free of this factor but restricted to comparison between networks with the same size (Theorem~\ref{thm:d1-stability}).
These two versions use slightly different definitions of network distance, which we give next.

We begin with the summation version of distortion and the corresponding network distance analogue, as seen previously in Section~\ref{sec:graphs-networks}. 
\begin{definition}
    The $\ell_1$-distortion of a relation $R\subset X\times Y$ is given by
    \[\dis^1(R) := \sum_{(x,y),(x',y')\in R} |\omega_X(x,x')-\omega_Y(y,y')|.\]
    Then the network $\ell_1$-distance
    $d_\networks^1:\networks \times \networks \rightarrow \R$ is
    defined as
    \[d_\networks^1(\cX,\cY) := \frac12 \min_{R\in\mathscr{C}(X,Y)} \dis^1(R).\]
\end{definition}

We first verify that $d_\cN^1$ is a semimetric and then we show how $d_\cN^1$ does not satisfy the triangle inequality.

\begin{proposition}
    $d_\cN^1: \cN\times\cN \rightarrow \R$ is a semimetric.
\end{proposition}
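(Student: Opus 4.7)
The plan is to verify each of the four properties (finiteness, identity, symmetry, separability) in turn, leveraging the corresponding facts already established for the $\infty$-network distance $d_\cN$. Throughout, I will use that for finite $X, Y$ the collection $\allcorr(X,Y)$ is finite, so the infimum in the definition of $d_\cN^1$ is actually attained.

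For finiteness, fix any correspondence $R \in \allcorr(X,Y)$ (for instance, $R = X \times Y$). Since $X$ and $Y$ are finite and each $|\omega_X(x,x') - \omega_Y(y,y')|$ is finite and nonnegative, $\dis^1(R)$ is a finite sum of finite nonnegative reals. Hence $d_\cN^1(\cX,\cY) \leq \tfrac{1}{2}\dis^1(R) < \infty$.

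For identity, consider $\cX = (X,\omega_X)$ and take $R = \{(x,x) : x \in X\} \in \allcorr(X,X)$. Then every term $|\omega_X(x,x') - \omega_X(x,x')| = 0$, so $\dis^1(R) = 0$ and hence $d_\cN^1(\cX,\cX) = 0$. For symmetry, given $R \in \allcorr(X,Y)$, its transpose $R^T := \{(y,x) : (x,y) \in R\}$ lies in $\allcorr(Y,X)$ and satisfies $\dis^1(R^T) = \dis^1(R)$ since $|\omega_Y(y,y') - \omega_X(x,x')| = |\omega_X(x,x') - \omega_Y(y,y')|$; the map $R \mapsto R^T$ is a bijection, so the minima agree.

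The main step is separability. Suppose $d_\cN^1(\cX,\cY) = 0$ and let $R$ attain the minimum, so $\dis^1(R) = 0$. Since $\dis^1(R)$ is a sum of nonnegative terms, every term vanishes, i.e., $\omega_X(x,x') = \omega_Y(y,y')$ for all $(x,y),(x',y') \in R$. In particular $\dis(R) = 0$, so by the proof that $d_\cN$ is a metric under our assumptions (zero self-loop weights and strict positivity off-diagonal), one concludes that $\cX$ and $\cY$ agree in $\networks$: fix $x \in X$, pick any $y$ with $(x,y) \in R$, and observe that if $(x,y') \in R$ as well, then taking the pair $(x,y),(x,y')$ yields $\omega_Y(y,y') = \omega_X(x,x) = 0$, forcing $y = y'$ by the positivity assumption. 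Thus $R$ is the graph of a function $X \to Y$, and by the symmetric argument also of a function $Y \to X$, so it is a bijection preserving all pairwise weights; hence $\cX$ and $\cY$ are isomorphic as networks. This reduces the hardest property to the corresponding (already known) property of $d_\cN$, since zero $\ell_1$-distortion is strictly stronger than zero $\ell_\infty$-distortion of the same correspondence.
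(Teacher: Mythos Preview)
Your proof is correct and follows essentially the same approach as the paper: the paper dismisses finiteness, identity, and symmetry as straightforward and proves separability by the same argument you give, namely that $\dis^1(R)=0$ forces $\omega_X(x,x')=\omega_Y(y,y')$ for all $(x,y),(x',y')\in R$, and then setting $x=x'$ (equivalently, taking two pairs $(x,y),(x,y')$) yields $\omega_Y(y,y')=0$ so $y=y'$, making $R$ a weight-preserving bijection.
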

\begin{proof}
    Proving the first three properties in Definition~\ref{def:metricproperties} is straightforward; we give the proof for separability.
    Let $\cX,\cY\in\cN$ be two networks.
    If $d_\cN^1(\cX,\cY)=0$, then there is $R\in\mathscr{C}(X,Y)$ such that $\dis^1(R)=0$, which implies that 
    $\omega_X(x,x')=\omega_Y(y,y')$ for all $(x,y),(x',y')\in R$. In particular, setting $x=x'$ implies $y=y'$ because $\omega(a,b)=0$ if and only if $a=b$ according to our definition. We can conclude that $R$ gives us a bijection between $X$ and $Y$ that preserves all weights and thus the networks $\cX$ and $\cY$ are isomorphic.
\end{proof}

To see why $d_\cN^1$ does not satisfy the triangle inequality, consider the example of Figure~\ref{fig:triangleineq-counterexample}. 
For comparing $\cX$ and $\cY$, combinatorial checking yields that the best relation is $\{(x_1,y_1), (x_2,y_2)\} \subseteq \cX \times \cY$, which has distortion 5.
For comparing $\cX$ and $\cZ$, we use the relation with all all pairs having the same subscript; $R = \{(x_1, z_1), (x_1, z_1'), (x_2,z_2) \} \subseteq \cX \times \cZ$ resulting in a distortion of $10.2$. 
A similar relation for $\cY$ and $\cZ$, $\{(y_1, z_1), (y_1, z_1'), (y_2,z_2) \} \subseteq \cY \times \cZ$, gives a distortion of $0.2$. 
The result is that 
$d_\cN^1(\cX,\cY) + d_\cN^1(\cY,\cZ) = 5 + 0.2 = 5.2$ but this is smaller than $d_\cN^1(\cX,\cZ)=10.2$, thus violating the triangle inequality.

\begin{figure}
    \centering
    \begin{tikzpicture}[>={Latex[length=2mm]}]
    \node at (0,4) {\large{$\cX$}};
    \node[minimum size=0.5cm,draw,circle,inner sep=1pt] (a) at (0,3) {$x_1$};
    \node[minimum size=0.5cm,draw,circle,inner sep=1pt] (b) at (0,0) {$x_2$};
    \draw[->] (a) edge["$10$", bend left=10] (b);
    \draw[->] (b) edge["$1$", bend left=10] (a);

\begin{scope}[shift={({3.5},{0})}]
    \node at (0,4) {\large{$\cY$}};
    \node[minimum size=0.5cm,draw,circle,inner sep=1pt] (a) at (0,3) {$y_1$};
    \node[minimum size=0.5cm,draw,circle,inner sep=1pt] (b) at (0,0) {$y_2$};
    \draw[->] (a) edge["$5$", bend left=10] (b);
    \draw[->] (b) edge["$1$", bend left=10] (a);
\end{scope}

\begin{scope}[shift={({6},{0})}]
    \node at (2,4) {\large{$\cZ$}};
    \node[minimum size=0.5cm,draw,circle,inner sep=1pt] (a) at (0,3) {$z_1$};
    \node[minimum size=0.5cm,draw,circle,inner sep=1pt] (b) at (4,3) {$z_1'$};
    \node[minimum size=0.5cm,draw,circle,inner sep=1pt] (c) at (2,0) {$z_2$};
    \draw[->] (a) edge["0.1", bend left=10] (b);
    \draw[->] (b) edge["0.1", bend left=10] (a);
    \draw[->] (b) edge["5", bend left=10] (c);
    \draw[->] (c) edge["1", bend left=10] (b);
    \draw[->] (c) edge["1", bend left=10] (a);
    \draw[->] (a) edge["5", bend left=10] (c);
\end{scope}
\end{tikzpicture}
    \caption{Counterexample to triangle inequality for $d_\cN^1$.}
    \label{fig:triangleineq-counterexample}
\end{figure}
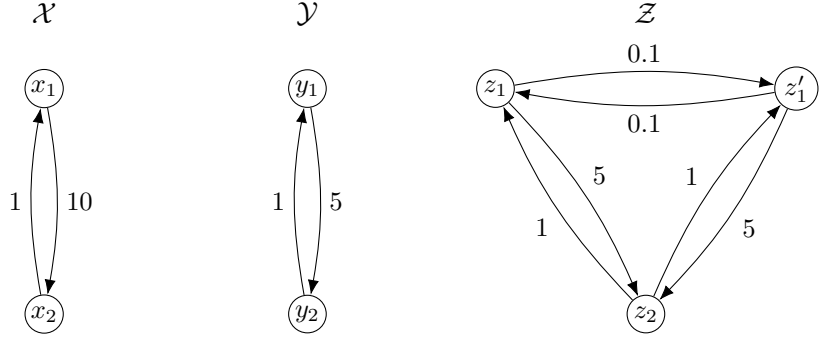

As noted earlier, the  $\infty$-network distance $d_\networks$  can be formulated in two equivalent ways: either using the distortion of a relation or the distortion of a pair of maps. 
However, the parallel formulations are not equivalent in the case of $d_\networks^1$.
We first give an example where the relation version and map version of the distances are different, and then we show the general result.

\begin{definition}
For $\cX,\cY\in \networks$ and any two maps
$\phi:X\rightarrow Y$ and
$\psi:Y:\rightarrow X$
on their sets of vertices,
the \emph{$\ell_1$-distortion} and \emph{$\ell_1$-codistortion} terms are defined respectively as
\begin{align*}
\dis^1(\phi)
    &:= \sum_{x,x'\in X} |\omega_X(x,x')-\omega_Y(\phi(x),\phi(x'))|,\\
\codis^1(\phi,\psi)
    &:= \sum_{(x,y)\in X\times Y} |\omega_X(x,\psi(y))-\omega_Y(\phi(x),y)|,
\end{align*}
with analogous definitions for $\dis^1(\psi)$ and $\codis^1(\psi,\phi)$.
Then, define
\begin{equation}\label{eqn:dn-map}
d_\cN^{1,\mathrm{map}}(X,Y)
    := \frac12 \min_{\substack{\phi,\psi }}
    \big\{
    \max\{\dis^1(\phi), \dis^1(\psi), \codis^1(\phi,\psi), \codis^1(\psi,\phi)\} \big\}.
\end{equation}
\end{definition}

\begin{proposition}\label{prop:reformulation-ineq}
    Let $\cX$ and $\cY$ be two networks. Then,
    \[
    d_\networks^{1,\mathrm{map}}(\cX,\cY)
    \leq
    d_\networks^{1}(\cX,\cY)
    .\]
\end{proposition}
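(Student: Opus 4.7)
The plan is to bound each of the four quantities $\dis^1(\phi)$, $\dis^1(\psi)$, $\codis^1(\phi,\psi)$, $\codis^1(\psi,\phi)$ individually by $\dis^1(R)$, for maps $\phi,\psi$ extracted from any correspondence $R \in \mathscr{C}(X,Y)$. This mirrors the $\infty$-norm argument of \cite[Proposition 9]{CM2018-Dowker}, with one important difference: in the $L_\infty$ case, passing from more summands to fewer summands in a $\max$ is automatic, but in the $L_1$ case we must verify that the terms we drop are distinct nonnegative contributions to $\dis^1(R)$.

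First, fix any $R \in \mathscr{C}(X,Y)$. Since $\pi_X(R) = X$ and $\pi_Y(R) = Y$, for each $x \in X$ choose $\phi(x) \in Y$ with $(x,\phi(x)) \in R$, and for each $y \in Y$ choose $\psi(y) \in X$ with $(\psi(y), y) \in R$. Then for the distortion of $\phi$, each term $|\omega_X(x,x') - \omega_Y(\phi(x),\phi(x'))|$ appearing in the sum defining $\dis^1(\phi)$ is exactly the term contributed to $\dis^1(R)$ by the pair $((x,\phi(x)),(x',\phi(x'))) \in R \times R$. The assignment
\[
(x,x') \longmapsto \bigl((x,\phi(x)),(x',\phi(x'))\bigr)
\]
is injective (the first coordinate $(x,\phi(x))$ already determines $x$), so the terms of $\dis^1(\phi)$ form a subset of the (nonnegative) terms of $\dis^1(R)$. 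Hence $\dis^1(\phi) \leq \dis^1(R)$, and symmetrically $\dis^1(\psi) \leq \dis^1(R)$.

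For the codistortion bound, use instead the assignment
\[
(x,y) \longmapsto \bigl((x,\phi(x)),(\psi(y),y)\bigr) \in R \times R,
\]
which is injective from $X \times Y$ into $R \times R$ since $x$ is recovered from the first entry of the first coordinate and $y$ from the second entry of the second coordinate. The associated term in $\dis^1(R)$ is exactly $|\omega_X(x,\psi(y)) - \omega_Y(\phi(x),y)|$, which is the generic term of $\codis^1(\phi,\psi)$. Again by nonnegativity, $\codis^1(\phi,\psi) \leq \dis^1(R)$, and symmetrically $\codis^1(\psi,\phi) \leq \dis^1(R)$.

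Combining these four inequalities yields
\[
\max\bigl\{\dis^1(\phi), \dis^1(\psi), \codis^1(\phi,\psi), \codis^1(\psi,\phi)\bigr\} \leq \dis^1(R),
\]
so minimizing the left-hand side over all $(\phi,\psi)$ and then the right-hand side over all $R \in \mathscr{C}(X,Y)$ (and multiplying both sides by $\tfrac12$) gives $d_\cN^{1,\mathrm{map}}(\cX,\cY) \leq d_\cN^1(\cX,\cY)$. The only subtle point — and the reason the map version can be strictly smaller — is that $\dis^1(R)$ may contain many additional nonneg\-ative terms coming from pairs in $R$ that are not of the form $(x,\phi(x))$ or $(\psi(y),y)$; in the $L_\infty$ world these extra terms are invisible, but in the $L_1$ world they inflate $\dis^1(R)$ beyond what the map formulation sees.
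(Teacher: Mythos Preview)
Your proof is correct and follows essentially the same approach as the paper's: extract $\phi,\psi$ from a correspondence $R$ and observe that each of the four distortion/codistortion sums is a subsum of $\dis^1(R)$. Your version is in fact slightly more careful than the paper's, since you explicitly verify the injectivity of the assignments $(x,x')\mapsto((x,\phi(x)),(x',\phi(x')))$ and $(x,y)\mapsto((x,\phi(x)),(\psi(y),y))$, which is what guarantees no double-counting in the $L_1$ setting.
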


\begin{proof}
    Let $\cX = (X,\omega_X)$ and $\cY = (Y,\omega_Y)$.
    From any correspondence $R\in\allcorr(X,Y)$ we can define a (non-unique)
    pair of maps $\phi:X\rightarrow Y$ and $\psi:Y\rightarrow X$ as follows.
    Since the
    projections are surjective, for each $x\in X$ take any
    $(x,y)\in\pi_X^{-1}(\{x\})$ and define $\phi(x)=y$. The map $\psi$ is
    defined in the same way.

    Then, the set of all pairs of the form $(x,\phi(x))$ is a subset of
    $R$, so that $\dis^1(\phi)$ is a sum over a subset of the summands in $\dis^1(R)$, and as all values are positive we have $\dis^1(\phi)\leq\dis^1(R)$. Similarly
    $\dis^1(\psi)\leq\dis^1(R)$.
    Lastly, we have $\codis^1(\phi,\psi)\leq\dis^1(R)$ because codistortion
    is the sum over the subset of $R\times R$ given by all pairs of the form
    $(x,\phi(x)),(\psi(y),y)\in R$.
    The same is true for $\codis^1(\psi,\phi)$.
    We conclude that for any $R\in\allcorr$ there exist $\phi,\psi$ such that
    $\dis^1(R) \geq \max\{\dis^1(\phi), \dis^1(\psi),
    \codis^1(\phi,\psi), \codis^1(\psi,\phi)\}$
    and the result follows.
\end{proof}

To see that the inequality in Proposition~\ref{prop:reformulation-ineq} can be strict, consider the networks $\cX$ and $\cY$ in Figure~\ref{fig:strict-ineq}.
To obtain a minimal distortion, $a$ and $b$ in $X$ must be paired with $\alpha$, but also
$\beta$ and $\gamma$ must be paired with $c$, that is, we use the correspondence
$R=\{(a,\alpha),(b,\alpha),(c,\beta),(c,\gamma)\}$ and obtain $\dis^1(R) = 4.4$. Any other pairing would increase distortion due to the higher weights in the edges going up in the figure.
On the other hand,  consider the pair of maps
$\phi: \{a\mapsto\alpha,
b\mapsto\alpha,c\mapsto\gamma\}$
and
$\psi:\{\alpha\mapsto a, \beta\mapsto c, \gamma\mapsto c\}$.
Then
$\dis^1(\phi)=\dis^1(\psi)=2.2$ and
$\codis^1(\phi,\psi)=\codis^1(\psi,\phi)=2.7$.
Thus, we see that $d_\cN^{1,\mathrm{map}} = 2.7 < 4.4 = d_\cN^1$ for this example, so we do not achieve equality of the formulations as in the $\ell_\infty$ version,
in which here all distortions and codistortions
for $R,\phi,\psi$ would be equal to $0.5$.

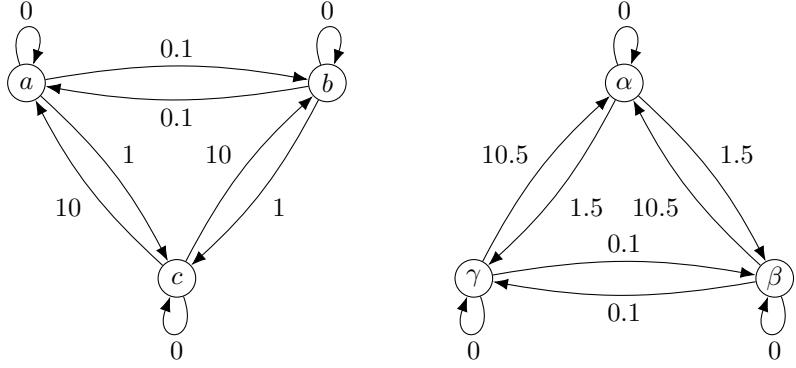
\begin{figure}
    \centering
    \begin{tikzpicture}[>={Latex[length=2mm]}]
        \node[minimum size=0.5cm,draw,circle,inner sep=1pt] (a) at (0,2) {$a$};
        \node[minimum size=0.5cm,draw,circle,inner sep=1pt] (b) at (4,2) {$b$};
        \node[minimum size=0.5cm,draw,circle,inner sep=1pt] (c) at (2,-0.6) {$c$};
        \draw[->] (a) edge["0.1", bend left=10] (b);
        \draw[->] (b) edge["0.1", bend left=10] (a);
        \draw[->] (b) edge["1", bend left=10] (c);
        \draw[->] (c) edge["10", bend left=10] (b);
        \draw[->] (c) edge["10", bend left=10] (a);
        \draw[->] (a) edge["1", bend left=10] (c);
        \draw[->] (a) edge["0",in=70, out=110,looseness=10] (a);
        \draw[->] (b) edge["0",in=70, out=110,looseness=10] (b);
        \draw[->] (c) edge["0",in=-110, out=-70,looseness=10] (c);
    \end{tikzpicture}
    \hspace{30pt}
    \begin{tikzpicture}[>={Latex[length=2mm]}]
        \node[minimum size=0.5cm,draw,circle,inner sep=1pt] (a) at (2,2) {$\alpha$};
        \node[minimum size=0.5cm,draw,circle,inner sep=1pt] (b) at (4,-0.6) {$\beta$};
        \node[minimum size=0.5cm,draw,circle,inner sep=1pt] (c) at (0,-0.6) {$\gamma$};
        \draw[->] (a) edge["1.5", bend left=10] (b);
        \draw[->] (b) edge["10.5", bend left=10] (a);
        \draw[->] (b) edge["0.1", bend left=10] (c);
        \draw[->] (c) edge["0.1", bend left=10] (b);
        \draw[->] (c) edge["10.5", bend left=10] (a);
        \draw[->] (a) edge["1.5", bend left=10] (c);
        \draw[->] (a) edge["0",in=70, out=110,looseness=10] (a);
        \draw[->] (b) edge["0",in=-110, out=-70,looseness=10] (b);
        \draw[->] (c) edge["0",in=-110, out=-70,looseness=10] (c);
    \end{tikzpicture}
    \caption{Networks $\cX$ (left) and $\cY$ (right) showing the inequality of Proposition~\ref{prop:reformulation-ineq} can be strict. }
    \label{fig:strict-ineq}
\end{figure}

We now present a first stability bound for $d_\networks^{1,\mathrm{map}}$, which implies the result for
the $\ell_1$-distance $d_\networks^1$.

\begin{theorem}\label{thm:d1-stability-M}
    Let $\cX$ and $\cY$ be two complete weighted digraphs. 
    If $M=\max\{|X|,|Y|\}$, then, for any $k\in\Z_+$,
    \[
    d_B( \DgmWL_k(\cX),\DgmWL_k(\cY) )
    \leq
    2M\, d_{\cN}^{1,\mathrm{map}}(\cX,\cY)
    \leq
    2M\, d_{\cN}^1(\cX,\cY).
    \]
\end{theorem}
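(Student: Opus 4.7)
The second inequality $d_\cN^{1,\mathrm{map}}(\cX,\cY)\leq d_\cN^{1}(\cX,\cY)$ is exactly Proposition~\ref{prop:reformulation-ineq}, so only the first inequality needs work. The overall strategy is to invoke the Stability Lemma (Lemma~\ref{lem:stability}) with interleaving parameter $\eta := M\mu$, where
\[
    \mu := \max\{\dis^1(\phi),\,\dis^1(\psi),\,\codis^1(\phi,\psi),\,\codis^1(\psi,\phi)\}
\]
for an optimal pair of vertex maps $\phi:X\to Y$ and $\psi:Y\to X$ realizing the minimum in~\eqref{eqn:dn-map} (attained since $X$ and $Y$ are finite). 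By Lemma~\ref{lem:WL_same_for_shortest_path}, I may replace $\cX$ and $\cY$ by their shortest-distance digraphs without changing either side of the claimed inequality, so I can assume throughout that $\omega_X$ and $\omega_Y$ satisfy the triangle inequality.

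I would first define the candidate simplicial maps $\phi_\delta:\W_\delta^\cX\to\W_{\delta+\eta}^\cY$ and $\psi_\delta:\W_\delta^\cY\to\W_{\delta+\eta}^\cX$ by vertex-wise application of $\phi$ and $\psi$. To see $\phi_\delta$ is well-defined, for $\sigma=\{x_0,\ldots,x_n\}\in\W_\delta^\cX$ use the shortest-distance structure to pick a permutation $\rho$ with $\sum_{i=0}^{n-1}\omega_X(x_{\rho(i)},x_{\rho(i+1)})=\f_\cX(\sigma)\leq\delta$; then $(\phi(x_{\rho(0)}),\ldots,\phi(x_{\rho(n)}))$ is a walk in $\cY$ through every vertex of $\phi(\sigma)$ whose weight exceeds $\delta$ by at most
\[
    \sum_{i=0}^{n-1}\bigl|\omega_X(x_{\rho(i)},x_{\rho(i+1)})-\omega_Y(\phi(x_{\rho(i)}),\phi(x_{\rho(i+1)}))\bigr|\;\leq\;\dis^1(\phi)\;\leq\;\mu\;\leq\;\eta.
\]
Conditions~(i) and~(ii) of Lemma~\ref{lem:stability} then hold trivially because both maps being compared coincide strictly on vertices, not merely up to contiguity.

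The crucial step is condition~(iii) (and symmetrically~(iv)): $\sigma\cup\psi(\phi(\sigma))\in\W_{\delta+2\eta}^\cX$ for every $\sigma\in\W_\delta^\cX$. My plan is to exhibit the interleaved walk
\[
    \gamma := \bigl(x_{\rho(0)},\,\psi(\phi(x_{\rho(0)})),\,x_{\rho(1)},\,\psi(\phi(x_{\rho(1)})),\,\ldots,\,x_{\rho(n)},\,\psi(\phi(x_{\rho(n)}))\bigr)
\]
in $\cX$, which manifestly passes through every vertex of $\sigma\cup\psi(\phi(\sigma))$. Each forward edge $x_{\rho(i)}\to\psi(\phi(x_{\rho(i)}))$ appears as a single summand of $\codis^1(\phi,\psi)$ (using $\omega_Y(\phi(x_{\rho(i)}),\phi(x_{\rho(i)}))=0$), so has weight at most $\codis^1(\phi,\psi)\leq\mu$. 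For each return edge $\psi(\phi(x_{\rho(i)}))\to x_{\rho(i+1)}$, the triangle inequality in the shortest-distance digraph yields
\[
    \omega_X(\psi(\phi(x_{\rho(i)})),x_{\rho(i+1)}) \;\leq\; \omega_X(\psi(\phi(x_{\rho(i)})),x_{\rho(i)}) + \omega_X(x_{\rho(i)},x_{\rho(i+1)}),
\]
where the first summand is bounded by $\codis^1(\psi,\phi)\leq\mu$ by the analogous argument, and the second summands telescope over $i$ to at most $\delta$. Summing the $n+1$ forward and $n$ return contributions gives $\omega(\gamma)\leq(2n+1)\mu+\delta\leq 2M\mu+\delta=\delta+2\eta$, as required.

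The main obstacle I anticipate is precisely this counting step: the $M$-factor is unavoidable from this construction because the interleaved walk has $\Theta(n)$ edges, each costing up to $\mu$, which reflects the fundamental mismatch (illustrated by Figure~\ref{fig:nonstable-triangles}) between a summation-style filtration value and a distance defined by a single codistortion term. Once conditions~(iii) and~(iv) are verified, Lemma~\ref{lem:stability} delivers $d_B(\DgmWL_k(\cX),\DgmWL_k(\cY))\leq\eta=M\mu=2M\,d_\cN^{1,\mathrm{map}}(\cX,\cY)$ for every $k\in\Z_+$, completing the argument.
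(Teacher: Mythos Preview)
Your argument has a genuine gap in the very first reduction step. You invoke Lemma~\ref{lem:WL_same_for_shortest_path} to replace $\cX,\cY$ by their shortest-distance digraphs ``without changing either side of the claimed inequality.'' That lemma only guarantees that the walk-length filtrations---and hence the left-hand side $d_B(\DgmWL_k(\cX),\DgmWL_k(\cY))$---are unchanged. It says nothing about the right-hand side: $d_\cN^{1,\mathrm{map}}$ is computed directly from the raw weight functions $\omega_X,\omega_Y$, and replacing these by shortest-path weights alters every distortion and codistortion term. There is no monotonicity here; for instance, if $\omega_X(a,c)=100$ drops to $\omega_{X'}(a,c)=2$ while $\omega_Y(\phi(a),\phi(c))=100$ is already a shortest distance, the summand $|\omega_X-\omega_Y|$ jumps from $0$ to $98$. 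So your argument, as written, only proves $d_B\le 2M\,d_\cN^{1,\mathrm{map}}(\cX',\cY')$, which is not the statement.

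Everything downstream depends on this reduction: the permutation-walk realization of $\f_\cX(\sigma)$ and, more critically, the triangle-inequality bound on the return edges $\psi(\phi(x_{\rho(i)}))\to x_{\rho(i+1)}$ in condition~(iii) both require $\omega_X$ to satisfy the triangle inequality. The paper's proof avoids the reduction entirely. It works with an arbitrary minimal walk $\gamma=(u_0,\dots,u_n)$ in the original $\cX$, observes that any directed edge of $\cX$ can occur in $\gamma$ at most $|\sigma|\le M$ times (since $\gamma$ decomposes into at most $|\sigma|$ shortest-path segments), and uses this to get $\sum_i|\omega_X(u_i,u_{i+1})-\omega_Y(\phi(u_i),\phi(u_{i+1}))|\le M\,\dis^1(\phi)\le\eta$. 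For condition~(iii) the paper uses the zig-zag walk $(u_0,u_0',u_0,u_1,u_1',u_1,\dots)$ that returns to $u_i$ before proceeding to $u_{i+1}$; this needs no triangle inequality, and the detour costs are bounded by $M\,\codis^1(\phi,\psi)$ and $M\,\codis^1(\psi,\phi)$ via the same repetition count. Your interleaved-walk idea is close in spirit, but without the return-to-$u_i$ step it genuinely requires the triangle inequality you do not have.
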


We note that the proof method for this stability theorem closely follows that of \cite[Prop.~15]{CM2018-Dowker}. 
\begin{proof}
    As usual, denote $\cX = (X,\omega_X)$ and $\cY = (Y,\omega_Y)$.
    Let $\eta=2M\,d_{\networks}^{1,\mathrm{map}}(X,Y)$ and $\phi,\psi$ be a pair of maps that achieve
    the minimum in Equation~\ref{eqn:dn-map}, so that
    \[
        \eta = M\cdot \max\big\{
            \dis^1(\phi), \dis^1(\psi),
            \codis^1(\phi,\psi), \codis^1(\psi,\phi)
            \big\}.
    \]
    First we show that $\phi$ and $\psi$ induce families of simplicial maps on the walk-length filtrations of $\cX$ and $\cY$ fitting the assumptions of \lemref{stability}.
    Fix $\delta$ and take a simplex
    $\sigma = [v_0,\dots,v_m] \in K_\delta^\cX$.
    We want to show that
    $\phi(\sigma)=[\phi(v_0),\dots,\phi(v_m)]$
    is a simplex in
    $K_{\delta+\eta}^\cY$.
    By definition, there is a minimal walk
    $\gamma=(u_0,\dots,u_n)$
    in $\cX$, with $m\leq n$, containing all the $v_i$ in $\sigma$ in some order
    such that the weight of $\gamma$ is equal to the filtration value of $\sigma$, 
    that is,
    $w(\gamma) = \f_\cX(\sigma) \leq \delta$.
    Note that there is the possibility that the walk $\gamma$ contains repeated edges; 
    although, from the assumption of minimality, we can be sure that an edge will not be repeated more than $m+1$ times, which is the size of $\sigma$, 
    and $m+1\leq |X|\leq M$ as there are no repeated vertices in $\sigma$. 
    When we sum the differences of edge weights along $\gamma$ and $\phi(\gamma)$, we obtain a distortion-like sum over a subset of vertices where a term can be repeated at most $M$ times, therefore we have the bound
    \[
        \sum_{i=0}^{n-1} | \omega_X(u_i,u_{i+1}) - \omega_Y(\phi(u_i),\phi(u_{i+1}))|
        \leq M\cdot \dis^1(\phi) \leq\eta.
    \]
    The walk $\phi(\gamma)$ contains the vertices in $\phi(\sigma)$,
    so $\f_\cY(\phi(\sigma)) \leq \omega_Y(\phi(\gamma))$. Then using a reverse triangle inequality in each summand we have
    \begin{align*}
        \f_\cY(\phi(\sigma)) &\leq \sum_{i=0}^{n-1} \omega_Y(\phi(u_i),\phi(u_{i+1}))
        \leq \sum_{i=0}^{n-1} \omega_X(u_i,u_{i+1}) \;+\eta
        \;=\; \f_\cX(\sigma) +\eta
        \;\leq\; \delta + \eta,
    \end{align*}
    which means that $\phi(\sigma)\in K_{\delta+\eta}^\cY$.
    Doing a similar check for $\psi$,
    we obtain the collection of simplicial maps
    $\phi:K_\delta^\cX\rightarrow K_{\delta+\eta}^\cY$
    and $\psi:K_\delta^\cY\rightarrow K_{\delta+\eta}^\cX$
    for any $\delta\in\R$.

    Now, our goal is to show that $\eta$, $\phi$, and $\psi$ satisfy the conditions in
    \lemref{stability}, so we want to prove the contiguity between the four pairs of maps.
    Let $s$ and $t$ denote the inclusions in the walk-length filtrations as
    $s_{\delta,\delta'}:K_\delta^\cX\hookrightarrow K_{\delta'}^\cX$ and
    $t_{\delta,\delta'}:K_\delta^\cY\hookrightarrow K_{\delta'}^\cY$
    for all $\delta\leq\delta'\in\R$.

    For pair {(i)} in \lemref{stability}, given that $s$ and $t$ are inclusions and
    $\phi_\delta=\phi$ is not dependent on $\delta$,
    $\phi(s(\sigma)) = t(\phi(\sigma))\in K_{\delta'+\eta}^\cY$
    for any simplex
    $\sigma\in K_\delta^\cX$
    and $\delta\leq\delta'$,
    hence the pair is contiguous.
    The same argument works in pair {(ii)}.

    For pair {(iii)}, note that 
    \begin{align*}
        \qquad \sum_{i=0}^{n-1}
        |\omega_X(u_i,u_{i+1}) \;-\; & \omega_X(\psi(\phi(u_i)),\psi(\phi(u_{i+1})))|  \\
        &\leq \sum_{i=0}^{n-1}
        |\omega_X(u_i,u_{i+1}) - \omega_Y(\phi(u_i),\phi(u_{i+1}))| \\
        &\quad\; + \sum_{i=0}^{n-1}
        |\omega_Y(\phi(u_i),\phi(u_{i+1})) - \omega_X(\psi(\phi(u_i)),\psi(\phi(u_{i+1})))| \\
        &\leq M\cdot \dis^1(\phi) +M\cdot \dis^1(\psi)
        \leq 2\eta,
    \end{align*}
    and then a similar application of a triangle inequality as before gives 
    $\f_\cX(\psi(\phi(\sigma))) \leq \delta +2\eta$.
    Writing $\psi(\phi(\sigma))=\sigma'$, this means that $\sigma'$ is a simplex in
    $K_{\delta+2\eta}^\cX$.
    Next we need to show that
    $\sigma\cup\sigma'$
    is also a simplex in
    $K_{\delta+2\eta}^\cX$.

    Denote $\{u_0',\dots,u_n'\}=\gamma'=\psi(\phi(\gamma))$.
    Taking all pairs $(u_i,\phi(u_i))$ in the codistortions, 
    and applying the same principle for bounding possible repetitions, 
    we have
    \begin{align*}
        \sum_{i=0}^{n-1} \omega_X(u_i,u_i')
        &= \sum_{i=0}^{n-1} |\omega_X(u_i,\psi(\phi(u_i)))-\omega_Y(\phi(u_i),\phi(u_i))| \;\leq\; M\cdot \codis^1(\phi,\psi) \;\leq\; \eta,\\
        \sum_{i=0}^{n-1} \omega_X(u_i',u_i)
        &= \sum_{i=0}^{n-1} |\omega_Y(\phi(u_i),\phi(u_i))-\omega_X(\psi(\phi(u_i)),u_i)| \leq M\cdot \codis^1(\psi,\phi) \leq \eta.
    \end{align*}
    To determine $\f_\cX(\sigma\cup\sigma')$, we need a walk that contains all $u_i$
    and $u_i'$.
    Consider the walk that visits the vertices in the order
    $\{u_0,u_0',u_0,u_1,u_1',u_1,u_2,\cdots\}$.
    The total weight of this walk is
    \[
        \sum_{i=0}^{n-1} \omega_X(u_i,u_{i+1})
        + \sum_{i=0}^{n-1} \omega_X(u_i,u_i')
        + \sum_{i=0}^{n-1} \omega_X(u_i',u_i)
        \leq \delta + 2\eta.\]
    Thus $\f_\cX(\sigma\cup\sigma')\leq\delta+2\eta$, i.e.,~%
    $\sigma\cup\psi(\phi(\sigma))\in K_{\delta+2\eta}^\cX$. The same reasoning can
    be applied to $\phi\circ\psi$ to show the contiguity of the pair {(iv)}.
    The result now follows from \lemref{stability} and Proposition~\ref{prop:reformulation-ineq}.
\end{proof}

The bound in Theorem~\ref{thm:d1-stability-M} is not ideal, since having the number of vertices $M$ as a multiplicative factor implies that even with a small perturbation in the networks, that is, a small network distance, the bound can only guarantee a small perturbation in the persistence diagrams proportional to the size of the networks.

Looking at the proof, we can avoid the factor of size if we avoid any repeated terms in the distortion, and the only way to ensure this is when the maps $\phi,\psi$ are injective.
Hence the next stability result in Theorem~\ref{thm:d1-stability} will be restricted to a distance using bijections between networks with the same number of vertices, which does define a metric in this case.
Note that this parallels the similar setup in the $\infty$ setting given in \cite[Definition~2.7.1]{Chowdhury2023} to handle issues with pseudometric properties; in this paper, we will use a parallel idea to provide better stability bounds. 
We define the distance setup with both relations and maps as before, but show that unlike $d_\cN^1$, the constructions are the same in this setting.

\begin{definition}
Denote $\networks_M$ as the collection of networks with $M$ vertices 
and $\mathscr{B}(X,Y)\subset\mathscr{C}(X,Y)$ as the set of matchings (bijections) between $X$ and $Y$.
Define $d_{\cN,\text{bij}}^1:\networks_M \times \networks_M \rightarrow \R$ as
\[d_{\cN,\text{bij}}^1(\cX,\cY) := \frac12 \min_{R\in\mathscr{B}(X,Y)} \dis^1(R).\]
Then, to match this approach, we must define the map version using bijections that are inverses:
\begin{equation}\label{eqn:dn-map-bij}
d_{\cN,\text{bij}}^{1,\mathrm{map}}(\cX,\cY)
    := \frac12 \min_{\substack{\phi,\psi \\ \text{bijective} \\ \phi = \psi^{-1}}}
    \big\{
    \max\{\dis^1(\phi), \dis^1(\psi), \codis^1(\phi,\psi), \codis^1(\psi,\phi)\} \big\}.
\end{equation}
\end{definition}

First, we see that unlike $d_\cN^1$, the bijection version of the 1-network distance $d_{\cN,\text{bij}}^1$ is a metric on $\cN_M$.

\begin{proposition}
    $\dNbij: \cN_M\times\cN_M \rightarrow \R$ is a metric for any $M$.
\end{proposition}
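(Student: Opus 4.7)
The plan is to verify each of the five properties in \Cref{def:metricproperties} in turn. Finiteness is immediate because each $\dis^1(R)$ is a finite sum of non-negative reals over the finite set $\mathscr{B}(X,Y)$, which is nonempty since $|X|=|Y|=M$. For identity, taking $R$ to be the graph of any bijection $f:X\to X$ equal to the identity yields $\dis^1(R)=0$. Symmetry follows from the observation that if $R=\{(x,f(x)):x\in X\}$ is a bijection, then $R^{-1}=\{(f(x),x):x\in X\}\in\mathscr{B}(Y,X)$, and the sum defining $\dis^1(R^{-1})$ is just a relabeling of that of $\dis^1(R)$.

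For separability, suppose $d_{\cN,\text{bij}}^1(\cX,\cY)=0$. Then there exists a bijection $R=\{(x,f(x)):x\in X\}$ with $\dis^1(R)=0$, so every summand vanishes, giving $\omega_X(x,x')=\omega_Y(f(x),f(x'))$ for all $x,x'\in X$. Hence $f$ is a weight-preserving bijection, so $\cX$ and $\cY$ are isomorphic as networks, which is the appropriate notion of equality on $\cN_M$.

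The main content is the triangle inequality, and this is where the restriction to bijections is crucial. Let $\cX,\cY,\cZ\in\cN_M$, and choose optimal bijections $f:X\to Y$ achieving $d_{\cN,\text{bij}}^1(\cX,\cY)$ and $g:Y\to Z$ achieving $d_{\cN,\text{bij}}^1(\cY,\cZ)$. The composition $h=g\circ f:X\to Z$ is again a bijection, so it is a candidate in the minimization defining $d_{\cN,\text{bij}}^1(\cX,\cZ)$. Adding and subtracting $\omega_Y(f(x),f(x'))$ inside each summand and applying the ordinary triangle inequality termwise yields
\[
\dis^1(h) \;\leq\; \sum_{x,x'\in X}|\omega_X(x,x')-\omega_Y(f(x),f(x'))| \;+\; \sum_{x,x'\in X}|\omega_Y(f(x),f(x'))-\omega_Z(g(f(x)),g(f(x')))|.
\]
The first sum is exactly $\dis^1(f)$. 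Because $f$ is a bijection, as $(x,x')$ ranges over $X\times X$ the pair $(f(x),f(x'))$ ranges over $Y\times Y$ bijectively, so the second sum equals $\dis^1(g)$. Therefore $\dis^1(h)\leq \dis^1(f)+\dis^1(g)$, and dividing by $2$ gives the triangle inequality.

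The main obstacle to anticipate is precisely this reindexing step: for general correspondences (or non-injective maps) the index set over which the sum is taken changes under composition, with each term potentially appearing a different number of times, which is exactly why $d_\cN^1$ fails the triangle inequality as shown in \Cref{fig:triangleineq-counterexample}. Restricting to bijections and enforcing $|X|=|Y|=|Z|=M$ makes the composition $g\circ f$ itself a bijection and preserves the size of the sum, so no multiplicity factor is introduced and the termwise triangle inequality suffices.
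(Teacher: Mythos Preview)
Your proof is correct and follows essentially the same approach as the paper: both verify the easy properties briefly and establish the triangle inequality by composing optimal bijections, applying the termwise triangle inequality after inserting and removing $\omega_Y$, and using bijectivity to reindex the second sum. The only difference is notational---you work with bijections $f,g,h$ while the paper writes the same argument in terms of matchings $R_1,R_2,R_3$---and your explicit remark on why the reindexing step requires bijections (and hence why $d_\cN^1$ fails) is a nice addition.
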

\begin{proof}
    The first four properties in Definition~\ref{def:metricproperties} are obtained similarly as for $d_\cN^1$.
    We give the proof for subadditivity. For any networks $\cX,\cY,\cZ\in \networks$, we want to check the triangle inequality
    \[\dNbij(\cX,\cZ) \leq \dNbij(\cX,\cY) + \dNbij(\cY,\cZ).\]
    Let $R_1\subset X\times Y$ and $R_2\subset Y\times Z$ be two matchings. We
    define the matching $R_3\subset X\times Z$ where a pair $(x,z)$ is in $R_3$
    whenever $\pi_Y(\{x\}\times Y)\cap\pi_Y(Y\times\{z\})\neq\emptyset$, that is,
    there is some $y\in Y$ such that $(x,y)\in R_1$ and $(y,z)\in R_2$. This is also a
    matching, since each $x\in X$ is paired with only one $y\in Y$ and
    this $y$ is paired with only one $z\in Z$; the same is true for each $z\in
    Z$.
    Then any summand in $\text{dis}^1(R_3)$ satisfies
    \begin{align*}
        |\omega_X(x,x') -\omega_Z(z,z')|
            &= |\omega_X(x,x') -\omega_Y(y,y') +\omega_Y(y,y') -\omega_Z(z,z')| \\
            &\leq |\omega_X(x,x') -\omega_Y(y,y')| +|\omega_Y(y,y') -\omega_Z(z,z')|
    \end{align*}
    for a unique pair $y,y'\in Y$. Taking the sum over all $(x,z),(x',z')\in X\times Z$, the
    right side would be a pair of sums on $R_1$ and $R_2$ respectively, thus we see
    that $\text{dis}^1(R_3) \leq \text{dis}^1(R_1) + \text{dis}^1(R_2)$. The
    triangle inequality follows from taking the minimum over
    $R_1\in\mathscr{B}(X,Y)$ and $R_2\in\mathscr{B}(Y,Z)$.
\end{proof}

In additional, also unlike the non-bijective case, we see that the map and relation versions of the definitions coincide. 
\begin{lemma}
\label{lem:Bij_Map_relation_same}
We have
$d_{\cN,\mathrm{bij}}^1(\cX,\cY) =
d_{\cN,\mathrm{bij}}^{1,\mathrm{map}}(\cX,\cY)$.
\end{lemma}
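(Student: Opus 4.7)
The plan is to exhibit an explicit bijection between matchings $R \in \mathscr{B}(X,Y)$ and admissible pairs of maps $(\phi,\psi)$ appearing in the definition of $d_{\cN,\mathrm{bij}}^{1,\mathrm{map}}$, and then verify that the relation-distortion $\dis^1(R)$ equals each of the four quantities $\dis^1(\phi)$, $\dis^1(\psi)$, $\codis^1(\phi,\psi)$, $\codis^1(\psi,\phi)$. Once this is done, the maximum in \eqref{eqn:dn-map-bij} collapses to $\dis^1(R)$ and the two minima coincide.

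First, I would set up the correspondence between the two optimization domains. Any matching $R \subseteq X \times Y$ with $|X|=|Y|=M$ is the graph of a unique bijection $\phi: X \to Y$, namely $R = \{(x,\phi(x)) : x \in X\}$, and conversely its inverse $\psi = \phi^{-1}$ satisfies $R = \{(\psi(y),y) : y \in Y\}$. Thus matchings in $\mathscr{B}(X,Y)$ are in one-to-one correspondence with the pairs $(\phi,\psi)$ of bijections with $\psi = \phi^{-1}$ that index \eqref{eqn:dn-map-bij}.

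Second, I would verify the four equalities by reindexing. The sum defining $\dis^1(R)$ runs over ordered pairs of elements of $R$, which via $R = \{(x,\phi(x))\}$ is exactly the sum over $(x,x') \in X \times X$ of $|\omega_X(x,x') - \omega_Y(\phi(x),\phi(x'))|$, that is $\dis^1(\phi)$. Using instead $R = \{(\psi(y),y)\}$ gives $\dis^1(R) = \dis^1(\psi)$. For the codistortion $\codis^1(\phi,\psi)$, the sum over $(x,y) \in X \times Y$ can be re-indexed by setting $x' = \psi(y)$ (equivalently $y = \phi(x')$), since $\psi$ is a bijection; the summand becomes $|\omega_X(x,x') - \omega_Y(\phi(x),\phi(x'))|$, producing exactly $\dis^1(\phi)$. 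The analogous substitution $y' = \phi(x)$ gives $\codis^1(\psi,\phi) = \dis^1(\psi)$.

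Third, combining these equalities, all four terms inside the maximum in \eqref{eqn:dn-map-bij} agree and equal $\dis^1(R)$, so the maximum collapses to $\dis^1(R)$. Taking the minimum over the admissible pairs $(\phi,\psi = \phi^{-1})$ on the one hand, and over $R \in \mathscr{B}(X,Y)$ on the other, yields the same value, and the factor $1/2$ on both sides gives $d_{\cN,\mathrm{bij}}^1(\cX,\cY) = d_{\cN,\mathrm{bij}}^{1,\mathrm{map}}(\cX,\cY)$. The main subtlety to be careful about is the re-indexing in the codistortion terms, which works cleanly only because $\psi = \phi^{-1}$ is forced to be a bijection; this is precisely the feature that fails in the non-bijective case of Proposition~\ref{prop:reformulation-ineq}, where the pair $(x,\phi(x)),(\psi(y),y)$ only ranges over a subset of $R \times R$ and the inequality can be strict.
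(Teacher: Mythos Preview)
Your proposal is correct and follows essentially the same approach as the paper: both arguments rest on the observation that when $\psi=\phi^{-1}$, all four quantities $\dis^1(\phi)$, $\dis^1(\psi)$, $\codis^1(\phi,\psi)$, $\codis^1(\psi,\phi)$ coincide, so the map-based minimum reduces to $\tfrac12\min_\phi \dis^1(\phi)$, which is exactly $d_{\cN,\mathrm{bij}}^1$. The paper's proof merely asserts this equality as ``easy to see,'' whereas you spell out the re-indexing explicitly; your version is more detailed but not a different route.
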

\begin{proof}
Let $\phi:X\rightarrow Y$ and $\psi:Y\rightarrow X$ be a pair of bijections that achieve the minimum in 
$d_{\cN,\mathrm{bij}}^{1,\mathrm{map}}(\cX,\cY)$. 
Since these are inverses, it is easy to see that $\dis^1(\phi)=\dis^1(\psi)=\codis^1(\phi,\psi)=\codis^1(\psi,\phi)$, so we can write 
\[d_{\networks,\text{bij}}^{1,\mathrm{map}} = \frac12 \min_{\substack{\phi \\ \text{bijective}}} \dis^1(\phi) = d_{\networks,\text{bij}}^1.\]
\end{proof}

Finally, we can combine these lemmas with a variation of the same proof technique of Theorem~\ref{thm:d1-stability-M} and \cite[Proposition15]{CM2018-Dowker}. 

\begin{theorem}\label{thm:d1-stability}
Let $C$ and $D$ be two strongly connected weighted digraphs with the same number of vertices, and let $\cX$ and $\cY$ be their corresponding shortest-distance digraphs. Then, for any $k\in\Z_+$,
    \[
    d_B( \DgmWL_k(C),\DgmWL_k(D) )
    =
    d_B( \DgmWL_k(\cX),\DgmWL_k(\cY) )
    \leq
    2\, d_{\networks,\mathrm{bij}}^{1,\mathrm{map}}(\cX,\cY)
    =
    2\, d_{\cN, \mathrm{bij}}^1(\cX,\cY).
    \]
\end{theorem}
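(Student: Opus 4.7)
The plan is to decompose the claim into three pieces---two equalities flanking the main inequality---and then adapt the argument of Theorem~\ref{thm:d1-stability-M} so as to drop the factor of $M$ by exploiting bijectivity together with the shortest-distance assumption. The equality $d_B(\DgmWL_k(C),\DgmWL_k(D)) = d_B(\DgmWL_k(\cX),\DgmWL_k(\cY))$ is immediate from Lemma~\ref{lem:WL_same_for_shortest_path} applied to both $C$ and $D$, since the walk-length filtration (and hence the persistence diagram) of a strongly connected digraph agrees with that of its shortest-distance digraph. The final equality $2d_{\cN,\mathrm{bij}}^{1,\mathrm{map}}(\cX,\cY) = 2d_{\cN,\mathrm{bij}}^1(\cX,\cY)$ is exactly Lemma~\ref{lem:Bij_Map_relation_same}.

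For the central inequality I would pick a pair of inverse bijections $\phi$ and $\psi=\phi^{-1}$ achieving the minimum in the definition of $d_{\cN,\mathrm{bij}}^{1,\mathrm{map}}$, and set $\eta = 2d_{\cN,\mathrm{bij}}^{1,\mathrm{map}}(\cX,\cY)$; by the proof of Lemma~\ref{lem:Bij_Map_relation_same}, this $\eta$ equals each of $\dis^1(\phi)$, $\dis^1(\psi)$, $\codis^1(\phi,\psi)$, and $\codis^1(\psi,\phi)$. The core step is to show $\phi$ induces a simplicial map $K_\delta^\cX\to K_{\delta+\eta}^\cY$ for every $\delta$ (and symmetrically for $\psi$). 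Fix $\sigma=[v_0,\dots,v_m]\in K_\delta^\cX$. Because $\cX$ is a shortest-distance digraph, the lemma immediately following Lemma~\ref{lem:WL_same_for_shortest_path} guarantees that the minimal walk realizing $\sigma$ can be chosen as a permutation $\gamma=(u_0,\dots,u_m)$ of the vertices of $\sigma$, with no repetitions; in particular the $m$ ordered pairs $(u_i,u_{i+1})$ are pairwise distinct in $X\times X$. Since $\phi$ is a bijection, the image pairs $(\phi(u_i),\phi(u_{i+1}))$ are pairwise distinct in $Y\times Y$ as well, so
\[
\sum_{i=0}^{m-1} |\omega_X(u_i,u_{i+1})-\omega_Y(\phi(u_i),\phi(u_{i+1}))|
\]
is a sum of distinct summands in the defining expression for $\dis^1(\phi)$ and is therefore bounded above by $\dis^1(\phi)\leq\eta$ directly---no multiplicative factor $M$ appears, which is precisely the improvement over Theorem~\ref{thm:d1-stability-M}. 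A reverse triangle inequality in each summand then yields $\f_\cY(\phi(\sigma))\leq \f_\cX(\sigma)+\eta\leq \delta+\eta$, so $\phi(\sigma)\in K_{\delta+\eta}^\cY$.

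To finish I would verify the four contiguity conditions of Lemma~\ref{lem:stability} with this $\eta$, $\phi$, $\psi$. Pairs (i) and (ii) are immediate since $\phi$ and $\psi$ are independent of $\delta$ and the $s,t$ are inclusions. Pairs (iii) and (iv) are in fact trivial here, in sharp contrast to the delicate zig-zag walk needed in Theorem~\ref{thm:d1-stability-M}: since $\psi=\phi^{-1}$, we have $\psi\circ\phi=\mathrm{id}_X$ and $\phi\circ\psi=\mathrm{id}_Y$ on vertices, hence $s_{\delta,\delta+2\eta}(\sigma)\cup\psi(\phi(\sigma))=\sigma$ is a simplex in $K_{\delta+2\eta}^\cX$ automatically, and symmetrically on the $\cY$ side. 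Invoking Lemma~\ref{lem:stability} then yields the desired bottleneck bound.

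The main obstacle is the bookkeeping in the core estimate: confirming that the minimal walk can be taken with distinct vertices (this is where the shortest-distance hypothesis is essential), and that bijectivity of $\phi$ transports this distinctness to the image side so that every term in the telescoped sum appears with multiplicity at most one in $\dis^1(\phi)$. Once this is in hand, the bijective setup actually simplifies the stability verification compared to the non-bijective case because the inverse relation collapses conditions (iii) and (iv) to triviality.
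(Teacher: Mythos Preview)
Your proposal is correct and follows essentially the same approach as the paper's proof: both reduce the outer equalities to Lemmas~\ref{lem:WL_same_for_shortest_path} and~\ref{lem:Bij_Map_relation_same}, then establish the central inequality by using the shortest-distance hypothesis to obtain a repetition-free minimal walk (you invoke the lemma giving the walk as a permutation of~$\sigma$; the paper argues the same inline), bound the edgewise sum by $\dis^1(\phi)$ without the factor $M$, and collapse conditions (iii)--(iv) of Lemma~\ref{lem:stability} via $\psi=\phi^{-1}$. One incidental remark: the distinctness of the image pairs under $\phi$ is true but not actually needed for the bound, since $\dis^1(\phi)$ is indexed over pairs in $X\times X$; the distinctness of the source pairs $(u_i,u_{i+1})$ already suffices.
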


\begin{proof}
The first equality is from Lemma~\ref{lem:WL_same_for_shortest_path} and the last equality is from Lemma~\ref{lem:Bij_Map_relation_same}, so all that remains is the middle inequality. 
    Denote $\cX = (X,\omega_X)$ and $\cY = (Y,\omega_Y)$ as usual.
    Let $\eta=2d_{\networks}^{1,\mathrm{map}}(X,Y)$ and $\phi,\psi$ be a pair of bijective maps that achieve
    the minimum in \eqnref{dn-map-bij}, so that
    \[
        \eta = \max\big\{
            \dis^1(\phi), \dis^1(\psi),
            \codis^1(\phi,\psi), \codis^1(\psi,\phi)
            \big\}.
    \]
    From before we have that $\phi=\psi^{-1}$ and $\eta=\dis^1(\phi)$.
    First we show that $\phi$ and $\psi$ induce families of simplicial maps on the walk-length filtrations of $\cX$ and $\cY$ as in \lemref{stability}.
    Fix $\delta$ and take a simplex
    $\sigma = [v_0,\dots,v_m] \in K_\delta^\cX$.
    We want to show that
    $\phi(\sigma)=[\phi(v_0),\dots,\phi(v_m)]$
    is a simplex in
    $K_{\delta+\eta}^\cY$.
    By definition, there is a minimal walk
    $\gamma=(u_0,\dots,u_n)$
    in $\cX$, with $m\leq n$, containing all the $v_i$ in $\sigma$ in some order,
    such that the weight of $\gamma$ is equal to
    $w(\gamma) = \f_\cX(\sigma) \leq \delta$.
    Since $\cX$ is a shortest-distance digraph, we can assume that the walk $\gamma$ has no repeated vertices or edges. To see this, note that if a vertex $u_{i_0}$ appears a second time in $\gamma$, then the sequence $(u_{i_0-1},u_{i_0},u_{i_0+1})$ can be replaced by $(u_{i_0-1},u_{i_0+1})$ to obtain a walk that visits $\sigma$ with less or equal weight. 
    Next, when we sum the differences of edge weights along $\gamma$ and along $\phi(\gamma)$, we obtain a distortion-like sum over a subset of vertices in $X$, therefore we have the following:
    \[
        \sum_{i=0}^{n-1} | \omega_X(u_i,u_{i+1}) - \omega_Y(\phi(u_i),\phi(u_{i+1}))|
        = \text{dis}^1(\phi|_\gamma) \leq \text{dis}^1(\phi) = \eta.
    \]
    The walk $\phi(\gamma)$ contains the vertices in $\phi(\sigma)$,
    so $\f_\cY(\phi(\sigma)) \leq \omega_Y(\phi(\gamma)$. Then using a reverse triangle inequality in each summand we have
    \begin{align*}
        \f_\cY(\phi(\sigma)) &\leq \sum_{i=0}^{n-1} \omega_Y(\phi(u_i),\phi(u_{i+1}))
        \leq \sum_{i=0}^{n-1} \omega_X(u_i,u_{i+1}) \;+\eta
        \;=\; \f_\cX(\sigma) +\eta
        \;\leq\; \delta + \eta,
    \end{align*}
    which means that $\phi(\sigma)\in K_{\delta+\eta}^\cY$.
    Doing the same check for $\psi$,
    we obtain the simplicial maps
    $\phi:K_\delta^\cX\rightarrow K_{\delta+\eta}^\cY$
    and $\psi:K_\delta^\cY\rightarrow K_{\delta+\eta}^\cX$
    for any $\delta\in\R$.

    Now, our goal is to show that $\eta$, $\phi$, and $\psi$ satisfy the conditions in
    \lemref{stability}, that is, we want to prove the contiguity between the four pairs of maps.
    Let $s$ and $t$ denote the inclusions in the walk-length filtrations as
    $s_{\delta,\delta'}:K_\delta^\cX\hookrightarrow K_{\delta'}^\cX$ and
    $t_{\delta,\delta'}:K_\delta^\cY\hookrightarrow K_{\delta'}^\cY$
    for all $\delta\leq\delta'\in\R$.

    For pair \textbf{(i)} in \lemref{stability}, given that $s$ and $t$ are inclusions and
    $\phi_\delta=\phi$ is not dependent on $\delta$,
    $\phi(s(\sigma)) = t(\phi(\sigma))\in K_{\delta'+\eta}^\cY$
    for any simplex
    $\sigma\in K_\delta^\cX$
    and $\delta\leq\delta'$.
    Thus the square is commutative and the pair is contiguous.
    The same argument works in pair \textbf{(ii)}.

    For pair \textbf{(iii)}, simply note that 
    $s(\sigma) = \psi(\phi(\sigma)) = \sigma \in K_\delta^\cX\subset K_{\delta+2\eta}^\cX$
    and contiguity follows. The same applies using $\phi\circ\psi$ for pair \textbf{(iv)}.
    The result now follows from \lemref{stability}.
\end{proof}

\subsection{Algorithm}
\label{ssec:algorithm}

In this section, we develop the computational details on the construction of the walk-length filtration.
In theory, the structure of the algorithm follows a dynamic programming approach, where the filtration value of a simplex is computed using the filtration value of its faces.

Let $D=(V,E,w)$ be a strongly connected weighted digraph with $n = |V|$. 
First, we replace this input with the shortest-distance digraph, $\cX=(V,\omega)$. 
This can be done, for example, with the Floyd-Warshall algorithm in $O(n^3)$ time \cite{Erickson2019}. 
Given $\cX$ as the weighted adjacency matrix $A \in \R_{\geq 0}^{n\times n}$, our next job is to determine the corresponding walk-length simplicial filtration, denoted as $\{\Wd\}_{\delta\geq0}$. 
We can compute the filtration value of a $d$-simplex $\sigma = [v_0,\dots,v_d]$ assuming we have already computed the filtration value for all $(d-1)$ dimensional simplices as follows. 
Let $\sigma_i=[v_0,\dots,\widehat{v_i},\dots,v_d]$ be the $i$-th face of $\sigma$ and define 
$\beta_i = (s_{i},\dots,t_{i})$ 
to be a walk for which $\omega(\beta_i) = \f(\sigma_i)$. 
Then it is straightforward to check that $\f(\sigma)$ can be expressed in terms of the $\f(\sigma_i)$, either by adding the missing vertex at the start or end of the walk, that is,
\begin{align}
\label{eqn:wl-filtvalue}
\f(\sigma) 
& = \min_{0\leq i\leq d} \bigg\{
\min\{\; \omega(v_i,s_i) +\f(\sigma_i), \;\;
\f(\sigma_i) +\omega(t_i,v_i) \;\}
\bigg\}.
\end{align}

For any simplicial filtration, if we want to compute $k$-dimensional persistent homology, we only need the data up to $(k+1)$-simplices, that is, we only need the filtration value of subsets of at most $k+2$ vertices.
In particular, for walk-length persistence, all $0$-simplices (vertices) are added at filtration value $0$. After that, the filtration value of a $1$-simplex is simply obtained from the shortest distance between the pair of vertices in either direction,
$ \f([v_0,v_1]) = \min\{\omega(v_0,v_1), \omega(v_1,v_0)\}$.
For a general $d$-dimensional simplex $\sigma$ in Equation~\ref{eqn:wl-filtvalue}, we  need to evaluate $2(d+1)$ possibilities. 
Note that in this equation, not only do we need to know the filtration values of the simplicies, but also the starting and ending vertices of a walk which attains it, so in our algorithm we ensure that we have stored this information.
See Algorithm~\ref{alg:WalkLengthFiltration} for the pseudocode. 

\begin{algorithm}
\caption{Walk-length filtration}\label{alg:WalkLengthFiltration}
\begin{algorithmic}
\Require Directed shortest-path weight matrix $A$ for vertex set $V$ and a fixed $k \geq 0$. 
\Ensure $\f: \sigma \mapsto \f(\sigma)$ for all simplices $\sigma$ of dimension at most $k+1$. 
\State Initialize $\f(v) = 0$ for all $v \in V$. 
\State Set $s_{v} = v$ and $t_v = v$ for all $v \in V$. 
\For  {$d \in 1, \cdots, k+1$}
\For {$\sigma = [v_0,\cdots, v_d]$ }
\State $\f(\sigma) = \min_{i \in \{0, \cdots, d \}}\{A[v_i, s_{\sigma_i}] + \f(\sigma_i), \f(\sigma_i) + A[t_{\sigma_i}, v_i]  \}$
\Comment{$\sigma_i = [v_0, \cdots, \hat{v_i}, \cdots, v_d]$}
\State Set $s_\sigma$ to be either $v_i$ or $s_{\sigma_i}$ based on minimum chosen above.
\State Set $t_\sigma$ to be either $t_{\sigma_i}$ or $v_i$  based on minimum chosen above.
\EndFor
\EndFor
\end{algorithmic}
\end{algorithm}

Finally, we evaluate the running time and space of Algorithm~\ref{alg:WalkLengthFiltration}. 
Assume we are interested in computing the persistence of this filtration up to a fixed dimension $k$, and thus we are computing the filtration values for simplices up to dimension $k+1$. 
Further, in practice we may assume that $k$ is likely very small, at most 2 in many applications. 
For the sake of this analysis, we thus assume $k = o(n)$. 
In the case of a complete simplicial complex on $n$ vertices, there are $\binom{n}{d}$ simplices of dimension $d$. 
Each of these simplices has $d+1$ faces of dimension $d-1$, and thus for every $d$-dimensional simplex $\sigma$, we need to check $2(d+1)$ values in the minimum to determine $\f(\sigma)$.
Once this is determined, finding the relevant $s_\sigma$ and $t_\sigma$ is a single check, resulting in $O(1)$ to evaluate. 
The result is that the running time is $\sum_{d= 0} ^{k+1} \binom{n}{d+1} \cdot 2(d+1)$. 
For space, while determining the filtration values for dimension $d$, we need only store the labels of two vertices for the $d-1$ dimensional vertices. 
This means that we need at most $O(\binom{n}{k}) = O(n^k)$ space which will be reasonable for small $k$. 

We next get a more understandable bound on the running time with respect to $n$ for a fixed $k$. 
Manipulating binomials and then using Lemma \ref{lem:binom_bound} in the appendix gets us that 
\begin{equation*}
    \sum_{d= 0} ^{k+1} \binom{n}{d+1} \cdot 2(d+1)  
    = 2n \sum_{d=0}^{k+1} \binom{n-1}{d}
    \leq \binom{n-1}{k+1} \frac{n(n-k-1)}{n-2k-2}.
\end{equation*}
This implies that for a fixed $k$ which is $o(n)$, we have that the running time is $O(n^{k+2} )$.

\section{Examples}\label{sec:examples}

In this section we give two studies that parallel the experiments done to examine the Dowker filtration \cite{CM2018-Dowker}. 
First, we work with cycle networks, and show that walk-length persistence exhibits a higher sensitivity to non-directed cycles versus directed cycles.
Second, we compute the walk-length filtration for the simulated hippocampal network data of \cite{CM2018-Dowker} and compare the results using the diagrams in a classifier.

\subsection{Cycle Networks}

For $n\geq3$, let $D_n$ be the cycle graph given by a directed cycle with
$n$ vertices $\{x_1,\dots,x_n\}$ where all edges $(x_i,x_{i+1})$ for $1\leq i\leq n-1$ and $(x_n,x_1)$ have weight $1$. 
Note that this graph is strongly connected and define the \emph{cycle network}
$(G_n,\omega_{G_n})$ as the
associated shortest-distance digraph.
The Dowker persistence for this cycle graph was given in  \cite{CM2018-Dowker}.  
\begin{proposition}[{\cite[Theorem~33]{CM2018-Dowker}}]
    \label{prop:cycle-persistence-Dowker}
    Let $G_n$ be a cycle network with $n\geq3$, then
    \[
    \Dgm_1^\mathfrak{D}(G_n) = \big\{(1,\lceil n/2\rceil)\in\R^2 \big\}.
    \]
\end{proposition}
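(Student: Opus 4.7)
The plan is to track the unique one-dimensional feature in the Dowker sink filtration of the cycle network $G_n$ by directly analyzing the combinatorial structure of $\Dsi_\delta$. Unwinding the definitions, the shortest-distance function is $\omega_{G_n}(x_i, x_j) = (j - i) \bmod n$, so for integer $\delta$ a simplex $\sigma \subseteq V$ belongs to $\Dsi_\delta$ iff $\sigma \subseteq C_j := \{x_{j - \delta}, x_{j - \delta + 1}, \ldots, x_j\}$ for some $j$ (indices taken mod $n$); the $n$ cyclic arcs $C_j$ are therefore the maximal simplices. In particular $\Dsi_0$ is discrete, while $\Dsi_1$ is the cycle graph $C_n \simeq S^1$, so $H_1(\Dsi_1) \cong \Z$ is born at $\delta = 1$ with generator $[c]$, where $c := \sum_{i=1}^n [x_i, x_{i+1}]$ (cyclic indexing).

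For the persistence of $[c]$ in the range $1 \leq \delta < \lceil n/2 \rceil$, I would show $|\Dsi_\delta| \simeq S^1$ via a deformation retract onto the cycle $|C_n|$. Place the vertices as $n$-th roots of unity by sending $x_j$ to $(\cos(2\pi j/n), \sin(2\pi j/n)) \in \R^2$, and extend piecewise-linearly to a map $f : |\Dsi_\delta| \to \R^2$. Because every simplex sits inside some $C_j$, whose $\delta + 1$ vertices span angle $2\pi \delta/n < \pi$ (using $\delta < \lceil n/2 \rceil \leq n/2$), each $f(\sigma)$ lies in an open half-plane and avoids the origin. Radial projection to $S^1$ composed with the natural identification of $S^1$ with the inscribed $n$-gon $|C_n|$ yields a retraction $r : |\Dsi_\delta| \to |C_n|$; the straight-line homotopy from $x$ to $r(x)$ stays inside the convex simplex $|C_j|$ containing $x$, giving the desired deformation retract. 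Therefore $H_1(\Dsi_\delta) \cong \Z$ persists with generator $[c]$.

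For the death of $[c]$ at $\delta = m := \lceil n/2 \rceil$, I would construct an explicit $2$-chain. For each $i = 2, \ldots, n-1$ the triple $\{x_1, x_i, x_{i+1}\}$ has cyclic span $\min(i, n - i + 1) \leq m$, so it fits inside some $C_j$ of size $m + 1$, making $T_i := [x_1, x_i, x_{i+1}]$ a simplex in $\Dsi_m$. Summing $\partial T_i = [x_i, x_{i+1}] - [x_1, x_{i+1}] + [x_1, x_i]$ from $i = 2$ to $n-1$, the chord edges $[x_1, x_\ell]$ telescope to $[x_1, x_2] - [x_1, x_n]$, leaving behind $\sum_{i=1}^n [x_i, x_{i+1}] = c$. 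Hence $[c] = 0$ in $H_1(\Dsi_m)$.

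The main obstacle is ruling out any additional one-dimensional classes across the filtration, so that the persistence diagram contains only the single point $(1, m)$; this amounts to verifying $H_1(\Dsi_\delta) = 0$ for every $\delta \geq m$. I would approach this via inductive elementary collapses, observing that each maximal simplex $C_j$ of $\Dsi_\delta$ has $\delta - 1$ ``interior'' codimension-one faces (those obtained by removing a non-endpoint vertex of $C_j$) which belong to no other $C_{j'}$ and therefore serve as free faces along which $C_j$ can be collapsed; iterating should reduce the complex to a simpler one with manifestly trivial fundamental group.
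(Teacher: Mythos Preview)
The paper does not itself prove this statement; it is quoted from \cite{CM2018-Dowker}, and the proof of Proposition~\ref{prop:semicycle-persistence} is declared to follow the same five-step template. That template is purely algebraic: one shows that for \emph{every} $\delta\ge 1$ the class $[c]$ spans $H_1(\Dsi_\delta)$, by exhibiting for each edge $[x_p,x_q]\in\Dsi_\delta$ an explicit $2$-chain whose boundary rewrites $[x_p,x_q]$ as the sum of the consecutive edges $[x_s,x_{s+1}]$ along the short arc from $x_p$ to $x_q$; any $1$-cycle is then a multiple of $c$. Combined with the bounding chain $\sum_i T_i$ (which is exactly the $\beta_n$ of the paper), this immediately gives $H_1(\Dsi_\delta)=0$ for all $\delta\ge m$.

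Your treatment of the range $1\le\delta<m$ takes a genuinely different route. Placing the vertices at roots of unity and radially retracting is a clean geometric proof that $|\Dsi_\delta|\simeq S^1$, avoiding the chain-level bookkeeping of the reference; the half-plane observation is exactly what makes the retraction land back in the carrying maximal simplex $|C_j|$. This part, and your death argument at $\delta=m$, are correct.

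The gap is your final step. The free-face claim fails already at $\delta=m$ when $n$ is odd. Take $n=5$, $\delta=3$: then $C_1=\{x_3,x_4,x_5,x_1\}$ and $C_3=\{x_5,x_1,x_2,x_3\}$, and the interior facet $C_1\setminus\{x_4\}=\{x_1,x_3,x_5\}$ coincides with the interior facet $C_3\setminus\{x_2\}$, so it is not free; likewise $C_1\setminus\{x_5\}=\{x_1,x_3,x_4\}=C_4\setminus\{x_2\}$. For even $n$ the claim survives at $\delta=m$ but breaks at $\delta=m+1$ (e.g.\ $n=6$, $\delta=4$, where $C_1\setminus\{x_5\}=C_4\setminus\{x_2\}$). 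Once two arcs of length $\delta+1$ can wrap around and overlap on both sides, interior facets are shared, and a collapse scheme built on them does not go through. The clean repair is to adopt the algebraic step above: the fan of triangles $[x_{p+r},x_{p+r+1},x_q]$ all lie in a single $C_j\subseteq\Dsi_\delta$ and reduce any edge to a signed sum of unit edges $[x_s,x_{s+1}]$, so every $1$-cycle is a multiple of $c$ for all $\delta$; since $c$ bounds for $\delta\ge m$, you get $H_1(\Dsi_\delta)=0$ there and no additional bars.
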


As we show in the next proposition, the walk-length persistence of the cycle network gives the same result.

\begin{proposition}\label{prop:cycle-persistence}
    Let $(G_n,\omega)$ be a cycle network with $n\geq3$. Denote the Dowker sink and walk-length simplicial filtrations as $\Dsi_\delta$ and $\Wd$, respectively. We have $\Dsi_\delta=\Wd$
    for all $\delta\in\R$, so that $\Dgm_k^\mathfrak{D}(G_n) =
    \DgmWL_k(G_n)$ for all $k\in\Z_+$. In particular, by Proposition~\ref{prop:cycle-persistence-Dowker},
    \[
    \DgmWL_1(G_n) = \big\{(1,\lceil n/2\rceil)\in\R^2 \big\}.
    \]
\end{proposition}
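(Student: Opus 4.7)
The plan is to prove the stronger statement that $\Dsi_\delta = \Wd$ as simplicial complexes for every $\delta$, from which the claimed persistence diagram follows immediately by Proposition~\ref{prop:cycle-persistence-Dowker}. I identify the vertex set of $G_n$ with $\Z/n\Z$ so that $\omega(x_i, x_j) = (j - i) \bmod n$, the forward arc-length inherited from the cycle $D_n$.

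First I would give a closed-form expression for $\f_\cX(\sigma)$ on an arbitrary simplex $\sigma = \{x_{j_0}, \dots, x_{j_m}\}$. After sorting the indices cyclically as $j_0 < j_1 < \dots < j_m$ and defining the cyclic gaps $g_k = j_{k+1} - j_k$ (with wrap-around gap $g_m = n + j_0 - j_m$) so that $\sum_k g_k = n$, let $g_{\max}(\sigma)$ be the largest gap. By the preceding lemma for shortest-distance digraphs, $\f_\cX(\sigma)$ equals the minimum over permutations of its vertices of the corresponding walk weight. Starting just after the largest gap and traversing the remaining vertices in forward cyclic order realizes weight $n - g_{\max}(\sigma)$; any other ordering skips a smaller gap (and any backward step in the cycle only lengthens the walk), so $\f_\cX(\sigma) = n - g_{\max}(\sigma)$.

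Next I would characterize membership in $\Dsi_\delta$. Unfolding the definition, $\sigma \in \Dsi_\delta$ iff there exists $x'$ with $(x' - j_k) \bmod n \leq \delta$ for every $k$, i.e., all vertices of $\sigma$ lie in a forward arc of length at most $\delta$ ending at $x'$. The shortest arc containing $\sigma$ is precisely the complement of its largest cyclic gap, of length $n - g_{\max}(\sigma)$; hence $\sigma \in \Dsi_\delta$ iff $n - g_{\max}(\sigma) \leq \delta$. Combining with the previous step gives $\sigma \in \Dsi_\delta$ iff $\f_\cX(\sigma) \leq \delta$ iff $\sigma \in \Wd$, so the two filtrations coincide for all $\delta$, proving the claim.

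The main technical care is in the first step, ruling out that a walk with back-tracking or repeated vertices could beat the simple once-around traversal. This is handled by the lemma cited earlier in the excerpt, which already reduces the infimum to pure permutations of $\sigma$ in a shortest-distance digraph; after that the remaining minimization is a straightforward combinatorial argument using $\sum_k g_k = n$.
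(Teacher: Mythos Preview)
Your proof is correct, and the overall conclusion is reached the same way as in the paper (show $\Dsi_\delta=\Wd$ for every $\delta$, then invoke Proposition~\ref{prop:cycle-persistence-Dowker}), but the route is somewhat different. The paper never computes the filtration value of a simplex; it argues mutual inclusion directly. For $\Wd\subseteq\Dsi_\delta$ it observes that a minimal walk in $G_n$ visiting $\sigma$ is a forward path in $D_n$, and its terminal vertex in $\sigma$ serves as a Dowker sink. For $\Dsi_\delta\subseteq\Wd$ it takes a sink $v$, picks the vertex $v_{i_0}\in\sigma$ maximizing $\omega(v_{i_0},v)$, and notes that the forward path from $v_{i_0}$ to $v$ (of weight $\le\delta$) already contains every other vertex of $\sigma$. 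Your approach instead derives the closed form $\f_\cX(\sigma)=n-g_{\max}(\sigma)$ and shows the Dowker entrance time is the same quantity; this is a bit more work but yields an explicit filtration value, and it makes nice use of the shortest-distance lemma to reduce to permutations of $\sigma$. One small point: the phrase ``any other ordering skips a smaller gap'' only covers cyclic orderings, and ``any backward step only lengthens the walk'' is a gesture rather than a proof; a clean way to finish is to lift the walk to $\Z$ (each step has positive length, so the lifts are strictly increasing), observe that the image in $\Z/n\Z$ is then an arc containing all of $\sigma$, and conclude its length is at least $n-g_{\max}(\sigma)$.
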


\begin{proof}
    We show that Dowker and walk-length filtrations are the same at any step $\delta\in\R$. 
    Note that every shortest walk in $G_n$ comes from a path in the graph $D_n$, which is a cyclic sequence of edges of weight $1$.

    First, take a simplex $\sigma=[v_0,\dots,v_m]\in \Wd$. By definition, there is a path $\gamma$ in
    $D_n$ that includes all the $v_i$ such that $\omega(\gamma)\leq \delta$.
    Take $s\in\{0,\dots,m\}$ such that $v_s$ is the last vertex in $\gamma$, then $\omega(v_i,v_s)\leq \omega(\gamma) \leq\delta$ for all $i$. 
    Thus $v_s$ is a sink for $\sigma$ and $\sigma\in\Dsi_\delta$.

    Now take $\sigma=[v_0,\dots,v_m]\in\Dsi_\delta$, then there is a sink $v$
    such that $\omega(v_i,v)\leq\delta$ for all $i$. 
    Take $i_0\in\{0,\dots, m\}$ such that $\omega(v_{i_0},v)\geq \omega(v_i,v)$ for all $i$, hence the path from $v_{i_0}$ to $v$ with weight
    $\omega(v_{i_0},v)\leq\delta$ contains all the other $v_i$ and $\sigma\in\Wd$.
    Given that the simplicial filtrations are identical, the resulting persistence diagrams are also the same.
\end{proof}

Next, we study what happens when a single edge in this cycle graph $D_n$ is modified to illuminate differences in what the two filtrations can distinguish in this example. 
We define the graph $\widetilde{D}_n$ as follows: Let the weight of
the edge $(x_n,x_1)$ be larger than or equal to $n-2$, then add the edge $(x_1,x_n)$ with weight $1$.
See Figure~\ref{fig:hexagons} for an example of $\widetilde{D}_n$ with $\omega(x_n,x_1) = n-2$. 
In general, this graph is strongly connected and we can define the associated shortest-distance digraph
$(\widetilde{G}_n,\omega_{\widetilde{G}_n})$. The 1-dimensional persistent homology of this network is described in the following proposition.

Note that, when $\omega(x_n,x_1)=n-1$, this modification is similar to the pair swap in \cite[Definition~6]{CM2018-Dowker}.
However, in that setup a single pair of edges in the full network was swapped, 
whereas in our case the swap is done on the graph $D_n$, 
thus the resulting network $\widetilde{G}_n$ differs from $G_n$ in more edge weights than just one pair.

\begin{proposition}\label{prop:semicycle-persistence}
    Let $(\widetilde{G}_n,\omega)$ be as above, then
    \[
    \Dgm_1^\mathfrak{D}(\widetilde{G}_n) = \big\{(1,\lceil(n-1)/2\rceil)\in\R^2 \big\}
    \quad\text{and}\quad
    \DgmWL_1(\widetilde{G}_n) = \big\{(1,n-1)\in\R^2 \big\}.
    \]
\end{proposition}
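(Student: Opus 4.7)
The plan is to compute both persistence diagrams via explicit analysis of the filtered complexes. First I record the shortest distances in $\widetilde{G}_n$: for $i < j$ with $(i,j) \neq (1,n)$, $\omega(x_i, x_j) = j - i$; the shortcut gives $\omega(x_1, x_n) = 1$; and for $i > j$ every path must traverse $(x_n, x_1)$, giving $\omega(x_i, x_j) = (n-i) + W + (j-1) \geq n - 1$, where $W = \omega(x_n, x_1) \geq n - 2$.

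For the walk-length half, the preceding lemma (minimal walks are permutations of $\sigma$) reduces each $\f(\sigma)$ to a minimum over orderings. Since any ordering with a ``backward'' jump picks up the cost $W \geq n - 2$, the forward ordering always dominates until $\delta$ is large enough: a simplex $\sigma$ not containing both $x_1$ and $x_n$ has $\f(\sigma) = \max_\sigma - \min_\sigma$; any simplex with $\{x_1, x_n\} \subseteq \sigma$ and $|\sigma| \geq 3$ has $\f(\sigma) = n - 1$ (via the forward traversal $x_1 \to x_2 \to \cdots \to x_n$); and $\f([x_1, x_n]) = 1$. The key consequence is that for every $\delta \in [1, n-1)$, $\Wd$ coincides with the Vietoris--Rips complex $\mathrm{VR}_\delta(\{1, \dots, n\} \subset \R)$ together with the single extra edge $[x_1, x_n]$, which is contained in no 2-simplex at this stage. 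Since a finite 1D Vietoris--Rips complex is contractible---provable by induction on $n$ via a Mayer--Vietoris decomposition into the star of $x_n$ (a full simplex on $\{x_{\max(1, n-\delta)}, \dots, x_n\}$) and $\mathrm{VR}_\delta(\{1, \dots, n-1\})$ (contractible by induction), meeting in another full simplex---attaching the edge $[x_1, x_n]$ across this connected acyclic base raises $\beta_1$ by exactly one. The resulting class is represented by the outer cycle $C = \sum_{i=1}^{n-1}[x_i, x_{i+1}] - [x_1, x_n]$ and persists throughout $[1, n-1)$. At $\delta = n-1$ the diagonal triangles $[x_1, x_j, x_n]$ enter and the complex becomes the full $(n-1)$-simplex on $\{x_1, \dots, x_n\}$; equivalently the fan $\sum_{j=2}^{n-1}[x_1, x_j, x_{j+1}]$ has boundary $C$, giving $\DgmWL_1(\widetilde{G}_n) = \{(1, n-1)\}$.

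For the Dowker half, I compute the sink complex directly. Each sink $x_k$ witnesses the simplex $\{x_i : \omega(x_i, x_k) \leq \delta\}$, which is the consecutive run $\{x_{\max(1, k-\delta)}, \dots, x_k\}$ for $1 < k < n$, the singleton $\{x_1\}$ for $k = 1$ (since $\omega(x_i, x_1) \geq W$ for $i \neq 1$), and crucially $\{x_1\} \cup \{x_{\max(2, n-\delta)}, \dots, x_n\}$ for $k = n$ because $\omega(x_1, x_n) = 1$. From here the argument of \cite[Theorem~33]{CM2018-Dowker} applies with minor bookkeeping; the extra vertex $x_1$ appearing at sink $x_n$ effectively shortens the cyclic length by one, yielding death time $\lceil (n-1)/2 \rceil$ instead of the $\lceil n/2 \rceil$ for the unmodified $G_n$.

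The technical core, and the main expected obstacle, is establishing contractibility of the finite 1D Vietoris--Rips complex; once this is in hand, the walk-length computation is immediate from the decomposition $\Wd = \mathrm{VR}_\delta \cup \{[x_1, x_n]\}$, and the Dowker calculation reduces to a short adaptation of the argument in \cite{CM2018-Dowker}. A secondary point requiring care is verifying that forward orderings dominate all backward orderings throughout $\delta \in [1, n-1)$, which is exactly where the hypothesis $W \geq n - 2$ is used.
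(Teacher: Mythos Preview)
Your proposal is correct, but the walk-length half takes a genuinely different route from the paper. The paper proceeds by direct chain-level computation: it exhibits the explicit cycle $\alpha_n = \sum_i [x_i,x_{i+1}]$, shows every $1$-cycle is homologous to a multiple of $\alpha_n$ by writing down telescoping $2$-chains, argues $[\alpha_n]$ is nontrivial for $\delta<n-1$ because any killing $2$-chain must contain some $[x_1,x_j,x_n]$, and then kills it with the fan $\beta_n = \sum_{i=2}^{n-1}[x_1,x_i,x_{i+1}]$ at $\delta=n-1$. Your argument instead identifies the structural decomposition $\Wd = \mathrm{VR}_\delta(\{1,\dots,n\}\subset\R)\cup\{[x_1,x_n]\}$ and reads off $H_1\cong\K$ from contractibility of the $1$D Rips complex plus a single free edge. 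This is more conceptual and extends more readily to variants of the construction; the paper's approach is more elementary in that it avoids Mayer--Vietoris and stays entirely at the chain level. Both reach the same fan $\sum_{j}[x_1,x_j,x_{j+1}]$ to exhibit the death at $n-1$.

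For the Dowker half, your treatment and the paper's are essentially the same: both reduce to the argument of \cite[Theorem~33]{CM2018-Dowker} with the observation that the sink $x_n$ now witnesses $x_1$ early. The paper supplies one additional concrete detail you omit, namely that $[x_1,x_i,x_{i+1}]$ enters $\Dsi_\delta$ at $\delta=\min\{i,n-i\}$ (via sinks $x_{i+1}$ and $x_n$ respectively), whose maximum over $2\le i\le n-1$ is $\lceil(n-1)/2\rceil$; you should include this computation rather than only asserting that the cycle is ``effectively shortened by one.''

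One small slip: your bound $\omega(x_i,x_j)\ge n-1$ for $i>j$ fails at the single pair $(i,j)=(n,1)$, where $\omega(x_n,x_1)=W$ may equal $n-2$. This does not affect any subsequent step, since you only use that backward jumps cost at least $W\ge n-2$, which together with the remaining $|\sigma|-2\ge 1$ forward steps of weight $\ge 1$ still forces $\f(\sigma)\ge n-1$ for any $\sigma\supseteq\{x_1,x_n\}$ with $|\sigma|\ge 3$.
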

Note that this proof largely follows the structure of \cite[Theorem~33]{CM2018-Dowker}, stated here as Proposition~\ref{prop:cycle-persistence-Dowker}. 
\begin{proof}
We give a complete proof for walk-length persistence, then we comment on how the Dowker version works using the same approach.
Recall that $\omega(x_n,x_1)\geq n-2$ and $\omega(x_1,x_n)=1$ in the network $\widetilde{G}_n$.
Also note that $\omega(x_i,x_j)=j-i$ for all $1\leq i\leq j\leq n$ with the only exception of $\omega(x_1,x_n)=1$.
For simplicity, we denote $x_{n+1}:=x_1$.

Let $\Wd$ be the walk-length filtration of $(\widetilde{G}_n,\omega)$. 
We denote chain groups as $C_k^\delta:=C_k(\Wd;\K)$ 
where $\K$ is a field, 
boundary maps as 
$\partial_k^\delta:=\partial_k:C_k^\delta\rightarrow C_{k-1}^\delta$,
and homology groups as $H_k^\delta:=H_k(\Wd)=\Ker(\partial_k^\delta)/\Img(\partial_{k+1}^\delta)$. 

The proof is broken down into five steps, where a homology class is identified as the sole generator of all 1-dimensional persistence.
\begin{enumerate}
    \item Define $\alpha_n = [x_1,x_2] + [x_2,x_3] + \dots + [x_n,x_1]$. Note that $[x_i,x_{i+1}]$ enters the filtration at $\delta=1$ for all $1\leq i\leq n$, as well as $[x_n,x_1]$ since there is an edge $(x_1,x_n)$ with weight $1$, so we see that $\alpha_n\in C_1^\delta$ for $\delta\geq 1$. One can check that $\partial_1^\delta(\alpha_n)=0$, so $\alpha_n\in\Ker(\delta_1^\delta)$ and we obtain the homology class $[\alpha_n]_\delta\in H_1^\delta$.
    
    \item We prove that $\alpha_n$ generates $\Ker(\delta_1^\delta)$ for $1\leq \delta < 2$. The only 1-simplices in $\Wd$ are the ones coming from the edges of $\widetilde{D}_n$ since any other pair would require a larger walk. If there is a 1-chain $z := \sum_{i=1}^{n} b_i[x_i,x_{i+1}]$ such that $\partial_1^\delta(z)=0$, one can check that $b_1=\dots=b_n$, hence $z$ is a multiple of $\alpha_n$.
    In terms of vector spaces, in homology, this means span$\{[\alpha_n]_\delta\} = H_1^\delta$.

    \item We prove that $\alpha_n$ generates $\Ker(\delta_1^\delta)$ for $\delta \geq 2$.
    Take any other 1-chain $y = \sum_{i=1}^m a_i\sigma_i$ for some $a_i\in\K$ and 1-simplices $\sigma_i\in\Wd$. 
    Fix $i$ and write $\sigma_i=[x_p,x_q]$.
    We exclude the cases $p+1=q$, or $p=q+1$, or $p=n$ and $q=1$, since then $\sigma_i$ would already be one of the summands in $\alpha_n$.
    We can assume $p<q$ and thus 
    $\f([x_p,x_q]) = \omega(x_p,x_q) =q-p$.
    Also notice that any 2-simplex $[x_j,x_k,x_l]$, with $1\leq j<k<l\leq n$, has filtration value $\f([x_j,x_k,x_l])=l-j$, hence
    \[
    2 = \f([x_{q-2},x_{q-1},x_q]) \leq \f([x_{q-3},x_{q-2},x_q]) 
    \leq \dots \leq 
    \f([x_{p},x_{p+1},x_q]) = q-p = \omega(x_p,x_q) \leq \delta.
    \]
    That is, we have a 2-chain $[x_{q-2},x_{q-1},x_q] +[x_{q-3},x_{q-2},x_q]+ \dots +[x_{p},x_{p+1},x_q] \in C_2^\delta$ for which
    \begin{align*}
    \partial_2^\delta\left( \sum_{r=0}^{q-p-2} [x_{p+r},x_{p+r+1},x_q] \right) 
    & = \sum_{r=0}^{q-p-2} [x_{p+r+1},x_q] -[x_{p+r},x_q] +[x_{p+r},x_{p+r+1}] \\
    & = \sum_{s=p}^{q-2} [x_{s+1},x_q] -[x_{s},x_q] 
    + \sum_{s=p}^{q-2}[x_{s},x_{s+1}] \\
    & = \sum_{s=p}^{q-1}[x_{s},x_{s+1}] \;-[x_p,x_q] = \sum_{s=p}^{q-1}[x_{s},x_{s+1}] \;-\sigma_i.
    \end{align*}
    This means that $\sum_{s=p}^{q-1}[x_{s},x_{s+1}] -\sigma_i \in \Img(\partial_2^\delta)$. 
    After doing the same for each $\sigma_i$, we can obtain a linear combination $z := \sum_{i=1}^{n} b_i[x_i,x_{i+1}]$ for some coefficients $b_i\in\K$ such that
    $z-y \in \Img(\partial_2^\delta)$. Furthermore, we can see that 
    $z \in \Ker(\partial_1^\delta)$,
    and so in homology, similarly as before, we obtain $[y]_\delta = [z]_\delta \in \text{span}\{[\alpha_n]_\delta\} = H_1^\delta$.
    
    \item We show that $[\alpha_n]_\delta$ is not trivial for $1\leq \delta<n-1$. Suppose it is trivial, that is, there is a 2-chain $\beta$ such that $\partial_2^\delta(\beta) = \alpha_n$. This requires that $\beta$ contains the simplex $[x_1,x_j,x_n]$ for some $1<j<n$, but the shortest walk in $\widetilde{D}_n$ that contains these three vertices is the path going all around from $x_1$ to $x_n$, which has a weight of $n-1$.
    
    \item Define $\beta_n = \sum_{i=2}^{n-1} [x_1,x_i,x_{i+1}]$, then $\beta_n\in C_2^\delta$ for $\delta\geq n-1$,
    since the maximum filtration value for the summands is $\f([x_1,x_{n-1},x_n])=n-1$.\footnote{This is the reason to define $\widetilde{D}_n$ with $\omega(x_n,x_1) \geq n-2$.}
    Note that 
    \begin{align*}
    \partial_2^\delta\bigg( \sum_{i=2}^{n-1} [x_1,x_i,x_{i+1}] \bigg) 
    & = \sum_{i=2}^{n-1} [x_i,x_{i+1}] -[x_1,x_{i+1}] +[x_1,x_i] \\
    & = \sum_{i=1}^{n-1} [x_i,x_{i+1}] \;-[x_1,x_n] 
    = \alpha_n \in \Img(\partial_2^\delta),
    \end{align*}
    thus $[\alpha_n]_\delta$ is trivial for $\delta\geq n-1$.
\end{enumerate}
We conclude that all 1-dimensional homology classes in walk-length persistence are generated by $[\alpha_n]_\delta$, which is born at $\delta=1$ and dies at $\delta=n-1$.

For Dowker persistence, we can follow the same approach, where the main difference is the entering of the $2$-chain $\beta_n$. 
Each simplex $[x_1,x_i,x_{i+1}]$, for $2\leq i\leq n-1$, enters the Dowker sink filtration $\Dsi_\delta$ at $\delta=\min\{ i, n-i\}$, where these two values come from using either $x_{i+1}$ or $x_n$ as the sink.
We can check that 
\[
\max_{2\leq i\leq n-1} \min\{ i, n-i\} = \left\lceil\frac{n-1}{2}\right\rceil,
\]
thus $\beta_n \in C_2^\delta$ only for $\delta\geq \lceil(n-1)/2\rceil$. As before, $\partial_2^\delta(\beta_n) = \alpha_n$ and so the homology class $[\alpha_n]_\delta$ dies at $\delta = \lceil(n-1)/2\rceil$ in Dowker persistence.
\end{proof}

We now mention two interesting examples, then a general conclusion from our analysis with cycle networks.

\begin{example}
In the case of three vertices, we have $\Dgm_1^\mathfrak{D}(G_3) =
\DgmWL_1(G_3) = \{(1,2)\}$. After the modification given above, we have $\Dgm_1^\mathfrak{D}(\widetilde{G}_3)=\emptyset$ but for walk-length it is unchanged, 
which means that walk-length persistence cannot differentiate between cyclic and non-cyclic triangles although Dowker can.
\end{example}

\begin{example}
For the cycle graph $G_n$ and modified graph $\widetilde{G}_n$ with $n\geq4$ even, by Propositions~\ref{prop:cycle-persistence} and \ref{prop:semicycle-persistence}, Dowker persistence does not differentiate between $G_n$ and
$\widetilde{G}_n$ in dimension 1, while walk-length does. 
See
\figref{hexagons}
for an example with $6$ vertices. 
To illustrate the difference more concretely in $\widetilde{G}_6$, consider the Dowker sink
filtration $\Dsi_\delta$. 
The simplex $[1,5,6]$ (with vertex $6$ as sink) is
added to the Dowker complex at $\delta=1$, while this simplex is not added
to walk-length filtration until $\delta=5$, which is the cheapest way to include these three vertices in a walk. 
For this reason, the 1-dimensional persistence class has the same death time in Dowker filtration but in walk-length it dies at a later $\delta$.
Note that in this example, however, both Dowker and walk-length persistence differentiate $G_n$ and $\widetilde{G}_n$ for 2-dimensional homology. 
\begin{figure}
    \centering
    \includegraphics[width=0.9\textwidth]{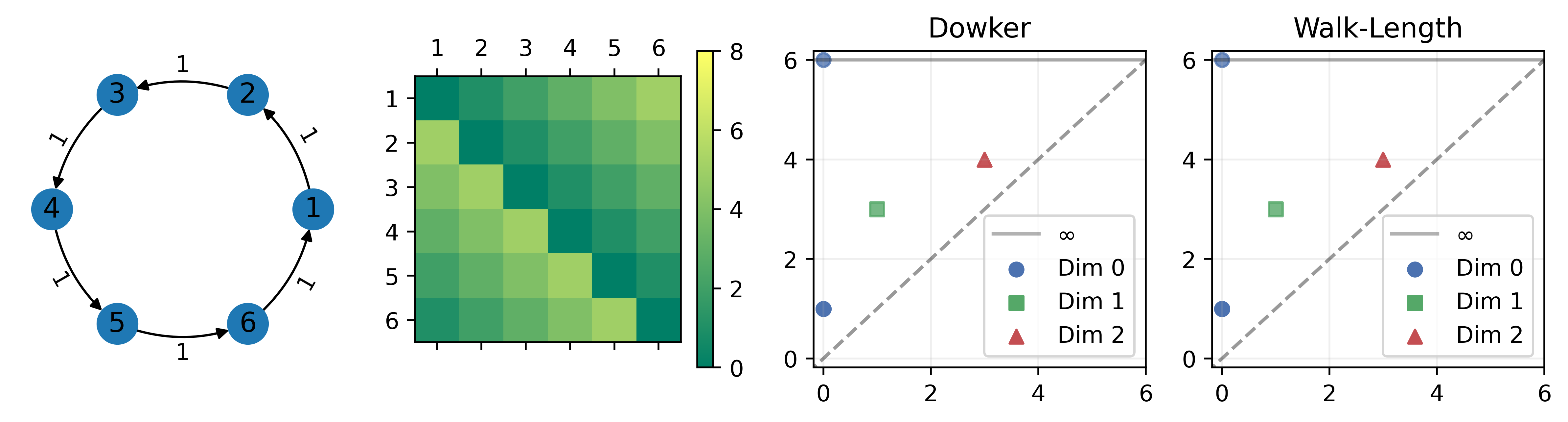}
    \includegraphics[width=0.9\textwidth]{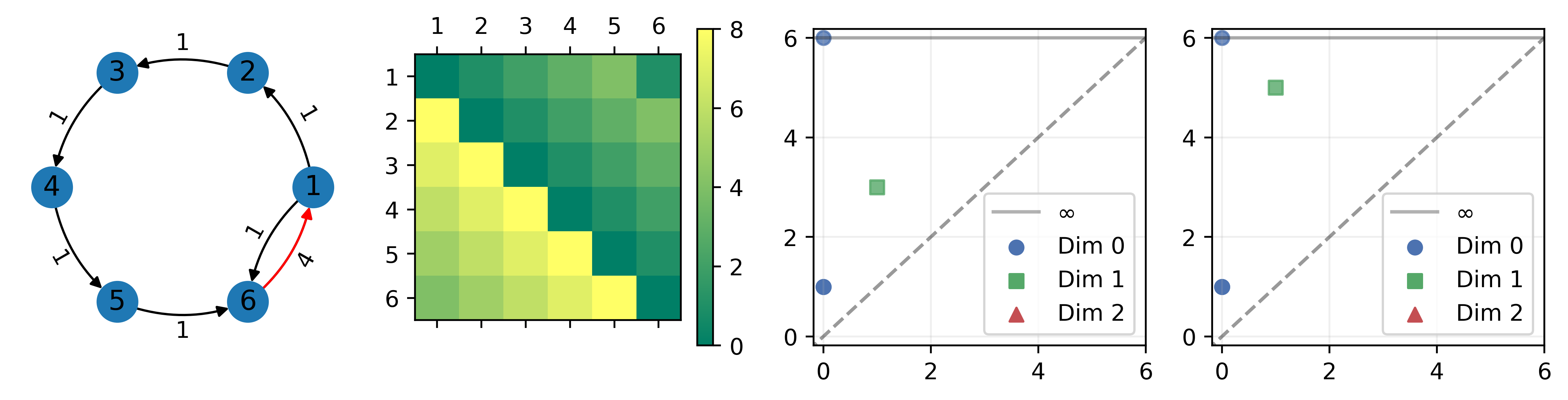}
    \caption{Top left shows the cycle graph $D_6$; bottom left shows the modified graph $\widetilde{D}_6$ where the weight of $(6,1)$ is equal to $4$. On the second column we show the shortest distance matrices, illustrating the weights in the networks $G_6$ and $\widetilde{G}_6$, respectively. Third and fourth columns show the corresponding persistence diagrams.
    }
    \label{fig:hexagons}
\end{figure}
\end{example}

In general, from a topological viewpoint, the early addition of the simplex $[x_{n-1},x_n,x_1]$ to the Dowker filtration makes the homology of $\widetilde{G}_n$ equivalent to that of $G_{n-1}$.
In contrast, the walk-length filtration is completely changed by the higher weight $\omega(x_n,x_1)$ since it breaks the cyclic sequence of edges with weight $1$.
In other words, we see that walk-length filtration is more sensitive to non-directed cycles, since complete directed cycles will die sooner.

\subsection{Hippocampal networks}
\label{ssec:HippocampalNetworks}

In order to see the accuracy and efficiency of walk-length persistence, we replicate the experiment on simulated hippocampal networks developed in \cite[Section 7]{CM2018-Dowker}, where Dowker persistence is employed.
The data used in this experiment simulates hippocampal activity of an animal walking randomly in arenas with different number of holes; a network is induced from each trial and the goal is to classify these networks by number of holes.
This experiment arises from findings in neuroscience \cite{OKeefe1971} that show a correspondence between the animal exploring specific spatial regions, or \textit{place fields}, and increased activity in specific hippocampal cells, or \textit{place cells}.
The dataset generated is described as follows.

Five arenas are set to be squares with side length $L=10$; each has 0, 1, 2, 3, and 4 holes (or obstacles) respectively.
For each arena, there were $20$ simulations, where a random walk of $5000$ steps was generated in a square grid with step length $0.05L$; in each step, the next direction is uniformly chosen at random between all allowed directions, that is, excluding borders and holes.
Then, for each trial $\ell_k$, a collection of $n_k$ locations, the place fields, is allocated randomly inside the corresponding arena as balls of radius $0.05L$, where $n_k$ is a random integer between $150$ and $200$.
Denote the set of time steps as $T=\{1,2,\dots,5000\}$. The trajectory information is registered with spike functions $r_i:T\rightarrow\{0,1\}$ for each place cell $i$, $1\leq i\leq n_k$, such that 
\[
r_i(t) = \begin{cases}
    1,& \text{if trajectory intersects place field $i$ at time $t$;}\\
    0,& \text{otherwise.}
\end{cases}
\]
Figure~\ref{fig:sample-arenas} shows two sample walk simulations $\ell_k$, together with the $n_k$ place fields and the resulting spike data $\{r_i(t)\}_{i,t}$ of place cells.
\begin{figure}
    \centering
    \includegraphics[width=0.85\textwidth]{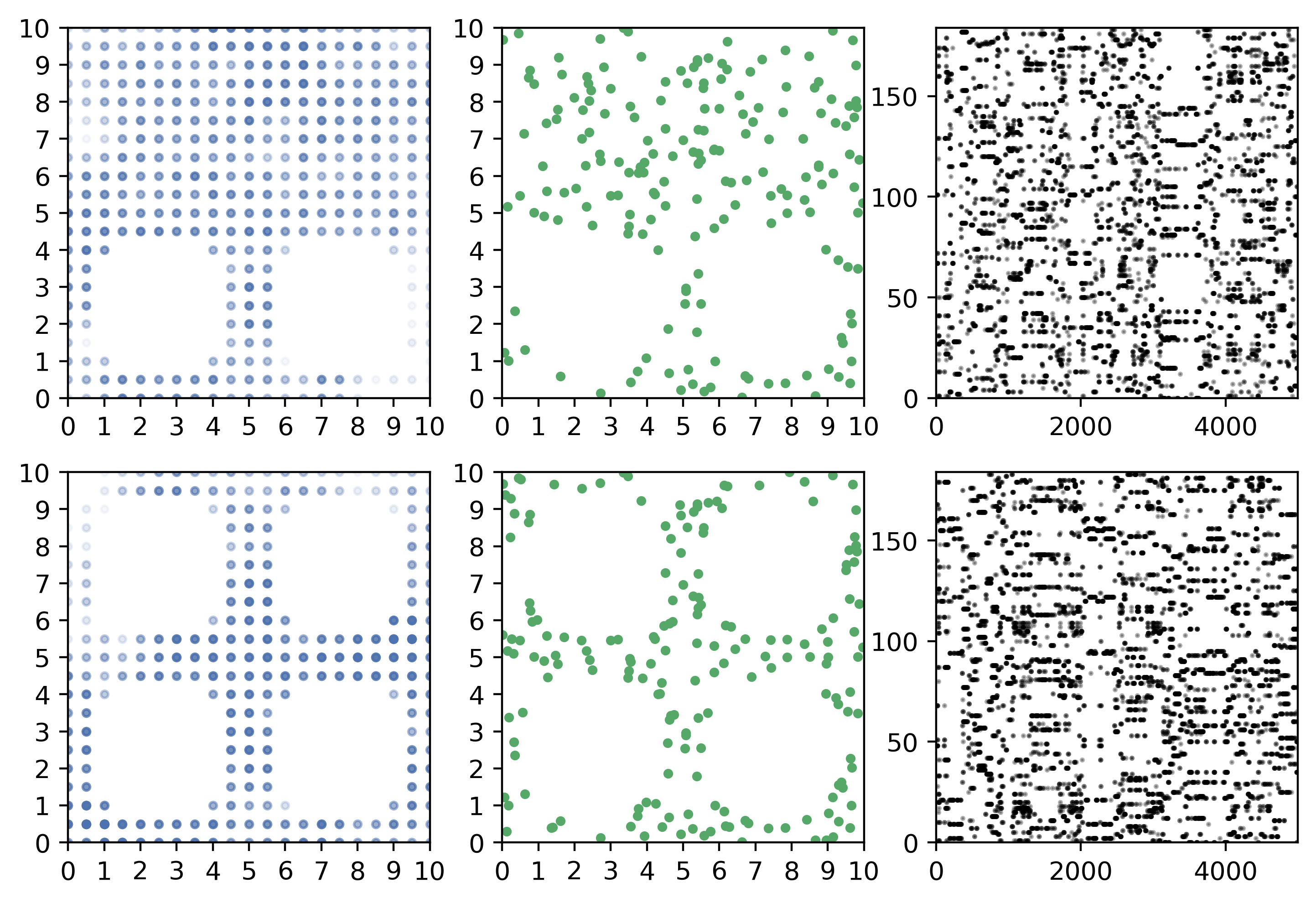}
    \caption{Sample arenas with two (top row) and four (bottom row) holes. Left column shows walks along the grid; shades generated with most visited points. Middle column shows the centers of place fields. Right column shows the corresponding spike pattern on place cells; the $x$-axis represents time steps and the $y$-axis represents the $n_k$ place cells.}
    \label{fig:sample-arenas}
\end{figure}

Now, given the simulated data, networks $(X_k,\omega_k)$ are defined to encode each trajectory. For trial $\ell_k$, the set of vertices $X_k$ in the network is given by the set of place cells, and the weight between two vertices $x_i$ and $x_j$ is defined as
\[
\omega_k(x_i,x_j) := 1-\frac{N_{i,j}(5)}{\sum_{i=1}^{n_k} N_{i,j}(5)}
\]
where
\[
N_{i,j}(5) := \#\{(s,t)\in T^2 \;:\; t\geq2,\;\;
1\leq t-s\leq 5, \;\; r_i(s)=r_j(t)=1\}.
\]
Essentially, this weight function 
measures the complement of how frequently the activation of cell $x_j$ was preceded by $x_i$ within a time span of $5$ steps.
In this way, the more probable it is to go from $x_i$ to $x_j$, the closer the weight will be to zero. For instance, if two place cells match to two place fields that are more than 5 steps apart in the arena, then the weight of the edge will be equal to 1.
The reason for using the complement is that when computing a Dokwer or walk-length filtration, we want the most common connections to appear as simplices earlier, while less related cells will be connected closer to the end of the filtration.

After preprocessing, the results obtained in \cite{CM2018-Dowker} 
show a good classification obtained from Dowker persistence, where the 4, 3, and 2-hole arenas are well separated into clusters by bottleneck distance between persistent diagrams at a certain threshold. Additionally, they show the contrast with an unsuccessful classification using Rips persistence (obtained from the symmetrization of the networks), where the results show how the directions in these networks are necessary to identify number of holes. The dendrograms from that study are reproduced in Appendix~\ref{sec:dendrograms}.

We conducted a thorough analysis using Rips, Dowker, and walk-length persistence and with modifications in preprocessing; these modifications are described as follows.
For the results in \cite{CM2018-Dowker}, all edge weights were shifted by the minimum weight in each network, omitting the edges with weight $1$. That is, each $\omega_k$ is replaced by $\omega_k^{\min,1}$, defined as
\[\omega_k^{\min,1}(x_i,x_j) := \begin{cases}
\omega_k(x_i,x_j) -m_k, & \omega_k(x_i,x_j)<1; \\
1, & \omega_k(x_i,x_j)=1;
\end{cases}
\qquad \text{where } m_k = \min_{i,j}\omega_k(x_i,x_j).
\]
Shifting by the minimum value $m_k$ is not uniform across the data set, and since we want to compare the networks, this shift might lose some useful information to compare the length of the smaller loops in the networks. Hence, for further analysis, we also consider the weight functions given by
\[
\omega_k^{\min,\text{purge}}(x_i,x_j) := \begin{cases}
\omega_k(x_i,x_j) -m_k, & \omega_k(x_i,x_j)<1; \\
\infty, & \omega_k(x_i,x_j)=1.
\end{cases}
\]
and
\[
\omega_k^{\text{purge}}(x_i,x_j) := \begin{cases}
\omega_k(x_i,x_j), & \omega_k(x_i,x_j)<1; \\
\infty, & \omega_k(x_i,x_j)=1.
\end{cases}
\]
The term \textit{purge} is used to indicate the deletion of all edges with original weight equal to 1. 
Such edges correspond to pairs of place cells that never fire subsequently, hence it is reasonable to omit these edges from the networks by setting their weight to $\infty$. 
Note that purged networks will no longer be complete graphs but these are still a valid input for walk-length persistence. 
In the case of Dowker, we can also compute the filtration either directly or by using their associated shortest-distance weights.
The reason for purging is that since all edges have weight less than or equal to 1, all sets of three vertices using the original weights will form a 2-simplex at filtration value $2$ in walk length persistence, potentially preventing the tracking of features that involve longer walks.
We also included Rips persistence with symmetrized versions of these networks, both by min-symmetrization, that is, replacing the weights 
$\omega_k(x_i,x_j)$ with $\min(\omega_k(x_i,x_j),\omega_k(x_j,x_i))$, 
and by max-symmetrization similarly. 
We will not use additional notation for these weights for the sake of simplicity.

After applying persistence, in each case the pairwise bottleneck distances between persistence diagrams are computed. 
We show the resulting MDS (Multidimensional Scaling) plots in Figure~\ref{fig:mds-plots}, where the number of holes in the arena is given by symbol/color.
In this figure, we show the MDS plots for Dowker (top row) and walk length (bottom row), using either the weights $\omega_k^{\min{},1}$ (left column) or $\omega_k^{\text{purge}}$ (right column). 
The symbols for each data point represent the number of 
In this case, qualitatively we see that that the walk length persistence with the purge weight function (bottom right) gives the clearest stratification of the number of holes. 

Thus, to give a quantitative view of the accuracy, we set up a 4-nearest neighbor classifier, so that a new persistence diagram gets a predicted label of the most common of the label from the four closest neighbors in the data set. 
We report the accuracy obtained using this classifier  with leave-one-out cross-validation and  show our findings in Table~\ref{tab:accuracy-results}. 
Amongst all the tests, using $\omega_k^{\text{purge}}$ gives the best accuracy of 86\%.
The results of this test which are equivalent to the tests run in \cite{CM2018-Dowker} is standard Dowker with $\omega_k$, which results in 50\% accuracy. 
However, we do note that different choices of encoding of the edge weight information result in improved versions of the classifier. 
In particular, the results using the setup from \cite{CM2018-Dowker} are given by the row Dowker and column $\omega_k^{\min, 1}$ as 71\%. 
In this case, we see that the classification using the filtration can actually be improved 82\% using a the weight function $\omega_k^{\text{purge}}$. 
However, the overall best classification comes from using the walk-length filtration with the $\omega_k^{\text{purge}}$ weight function. 
These results further emphasize that both Dowker and Walk-Length persistence are sensitive to the choice of function used.

\begin{figure}
    \centering
    \includegraphics[width=0.4\textwidth]{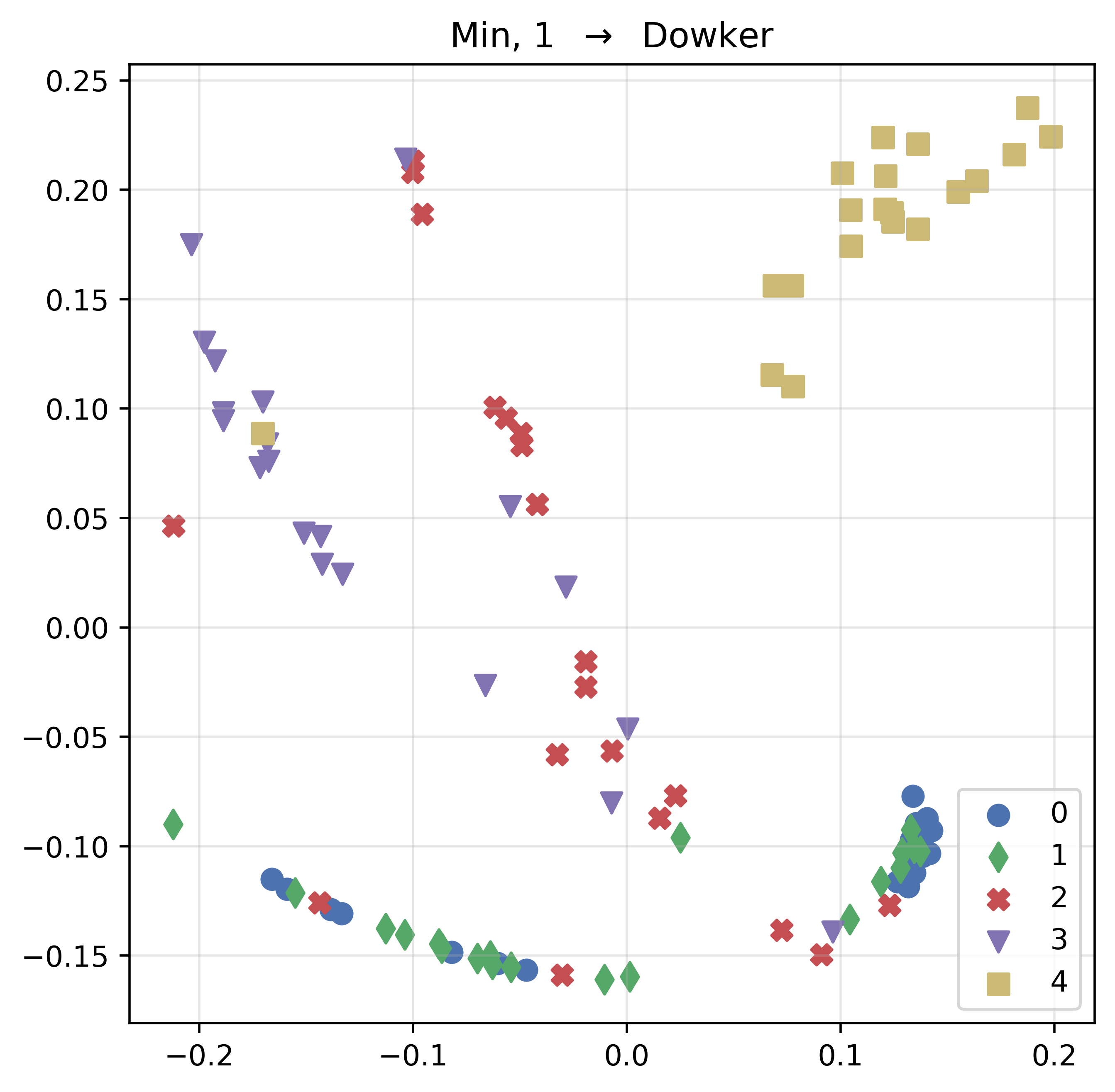} \qquad
    \includegraphics[width=0.4\textwidth]{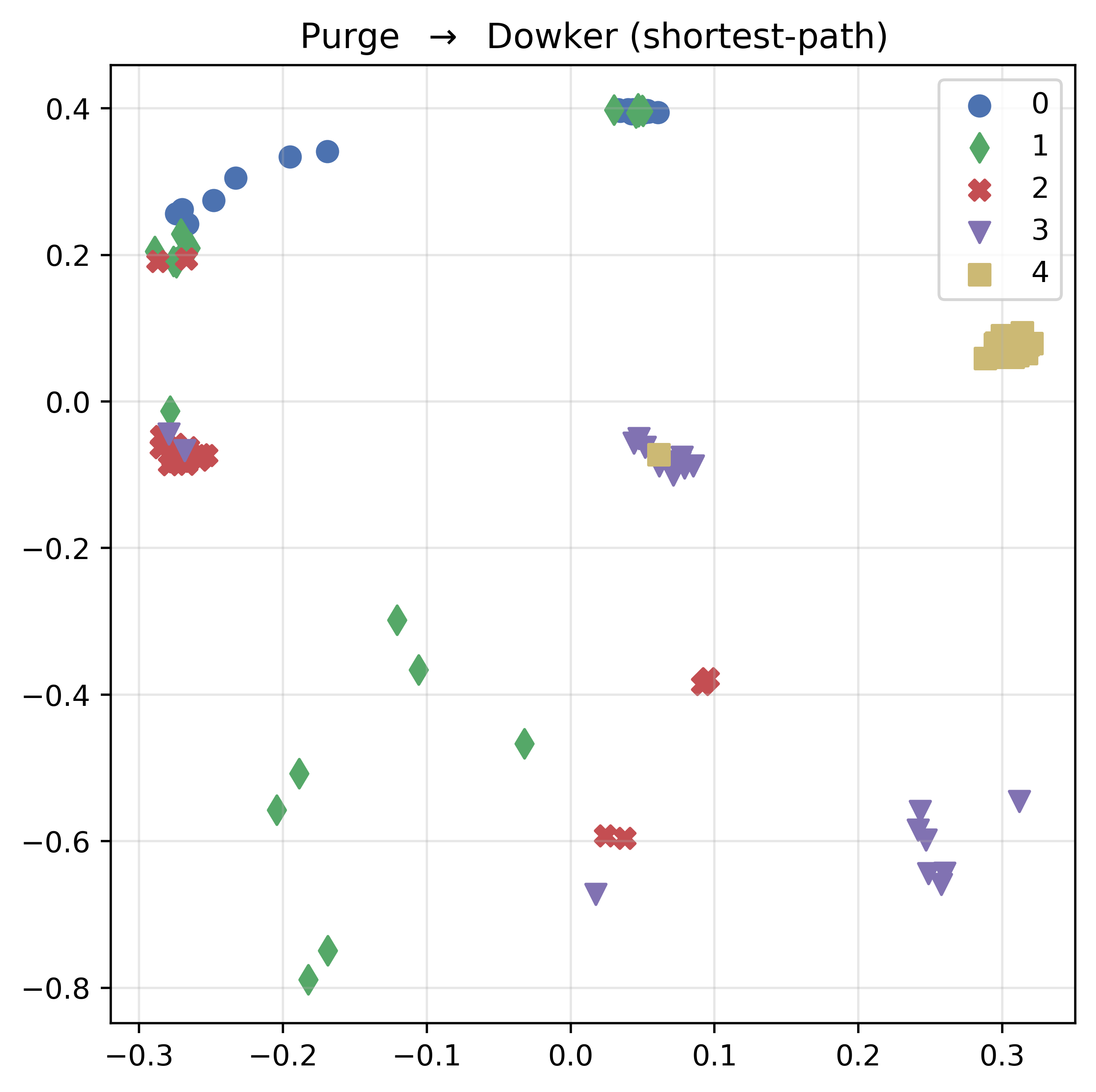}
    \includegraphics[width=0.4\textwidth]{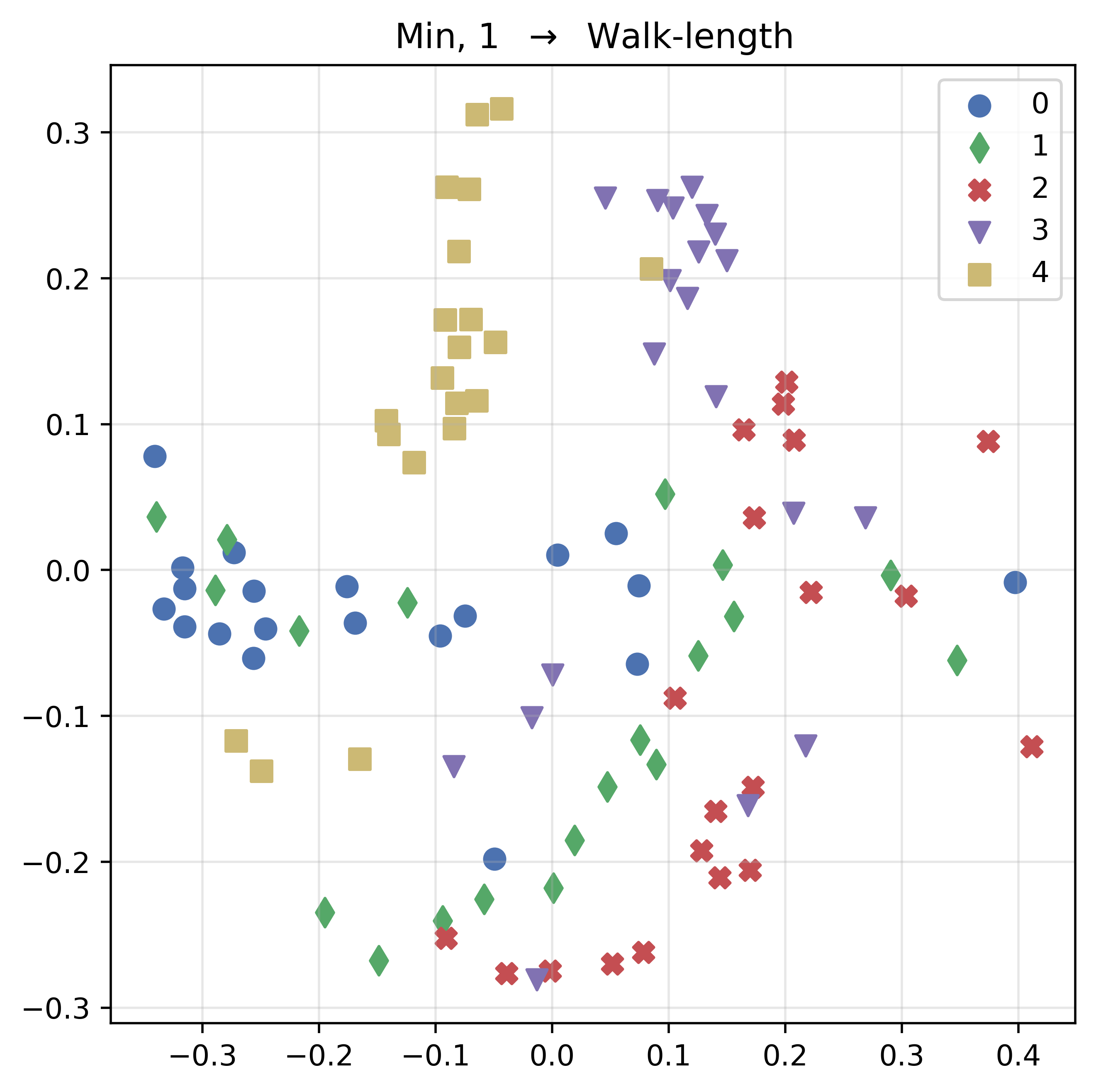} \qquad
    \includegraphics[width=0.4\textwidth]{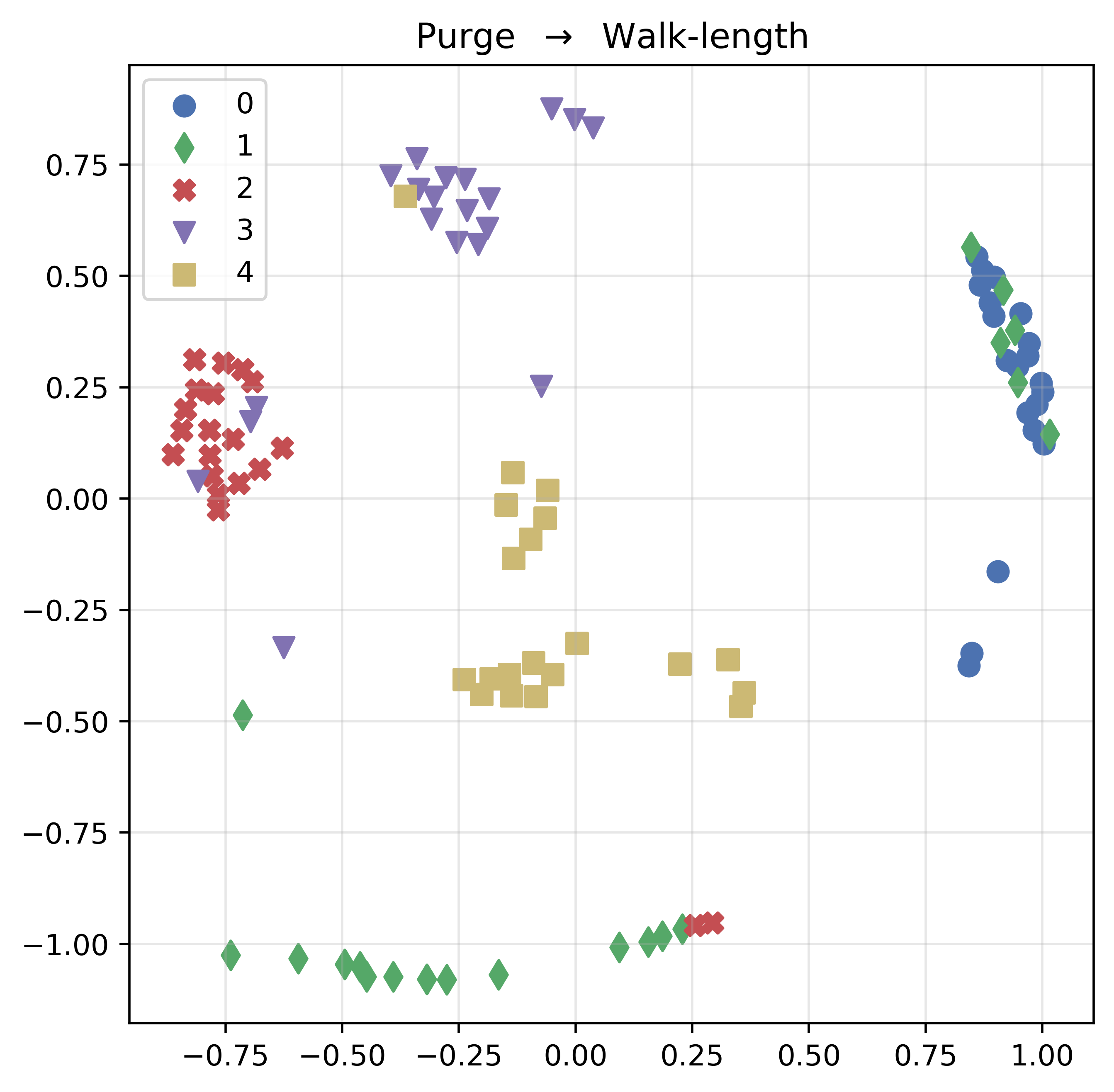}
    \caption{Multidimensional scaling plots from bottleneck distances using Dowker (top row) and walk-length (bottom row) persistence via the preprocessed weights $\omega_k^{\min{},1}$ (left column) and $\omega_k^{\text{purge}}$ (right column).}
    \label{fig:mds-plots}
\end{figure}

\begin{table}
\centering
\begin{tabular}{|l|c|c|c|c|c||c}
\hline & & & & \\[-2.2ex]
 & $\omega_k$ & $\omega_k^{\min,1}$ & $\omega_k^{\min,\text{purge}}$ & $\omega_k^{\text{purge}}$ 
 \\[-2.2ex] & & & & \\ \hline
Rips (min-sym)         & 37\% & 56\% & 63\% & 73\% \\ \hline
Rips (max-sym)         & 57\% & 51\% & 51\% & 59\% \\ \hline
Dowker              & 50\% & 71\% & 73\%$^\dagger$ & 82\%$^\dagger$ \\ \hline
Dowker (Shortest)   & 49\% & 71\% & 71\% & 82\% \\ \hline
Walk-Length         & 7\% & 80\% & 76\% & \textbf{86\%} \\ \hline
\end{tabular}
\caption{Accuracy percentages for different preprocessing methods. 
These are Rips filrations using either min or max symmetrization; standard Dowker filtration; Dowker of the shortest path network; and the walk-length filtration.
In the cases marked with $\dagger$, since input graphs are not complete, Dowker persistence diagrams might have different numbers of infinite classes, resulting in some pairs of diagrams having infinite distance.}
\label{tab:accuracy-results}
\end{table}

An important note is that the Dowker filtration with $\omega_k^{\text{purge}}$ does not converge to a complete simplicial complex, since the networks $(X_k,\omega_k^{\text{purge}})$ are not complete, and thus some 1-dimensional persistence points will have infinite persistence. 
In such cases, the bottleneck distance between two diagrams will be infinite if they have a different number of points at infinity.
In other words, the classification in that case is ultimately determined by the number of loops present in the complex at the end of the filtration.

We also show the confusion matrices obtained from the $k$NN classifier in Figure~\ref{fig:confusionmats}.
\begin{figure}
    \centering
    \includegraphics[width=0.4\textwidth]{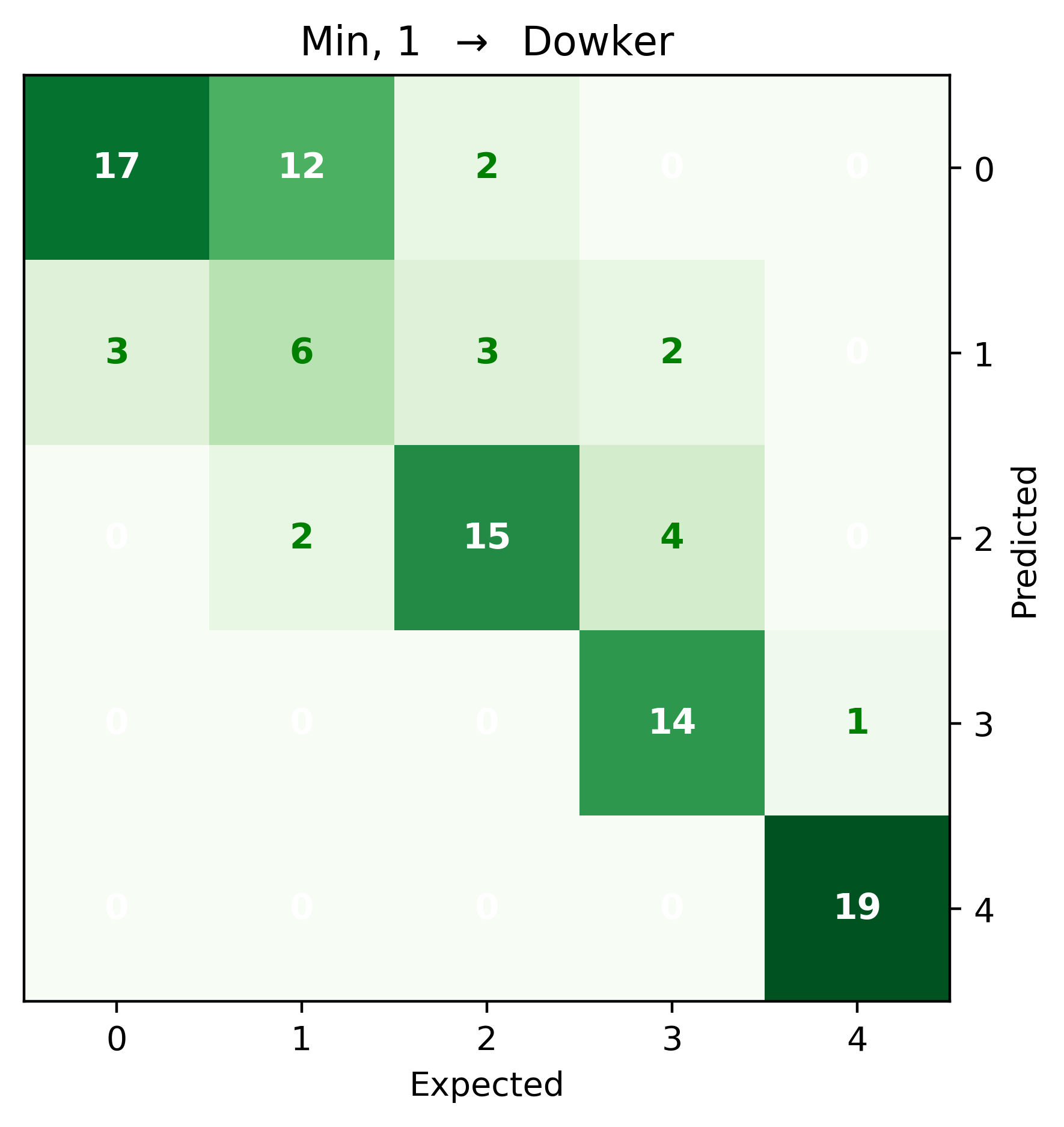} \qquad
    \includegraphics[width=0.4\textwidth]{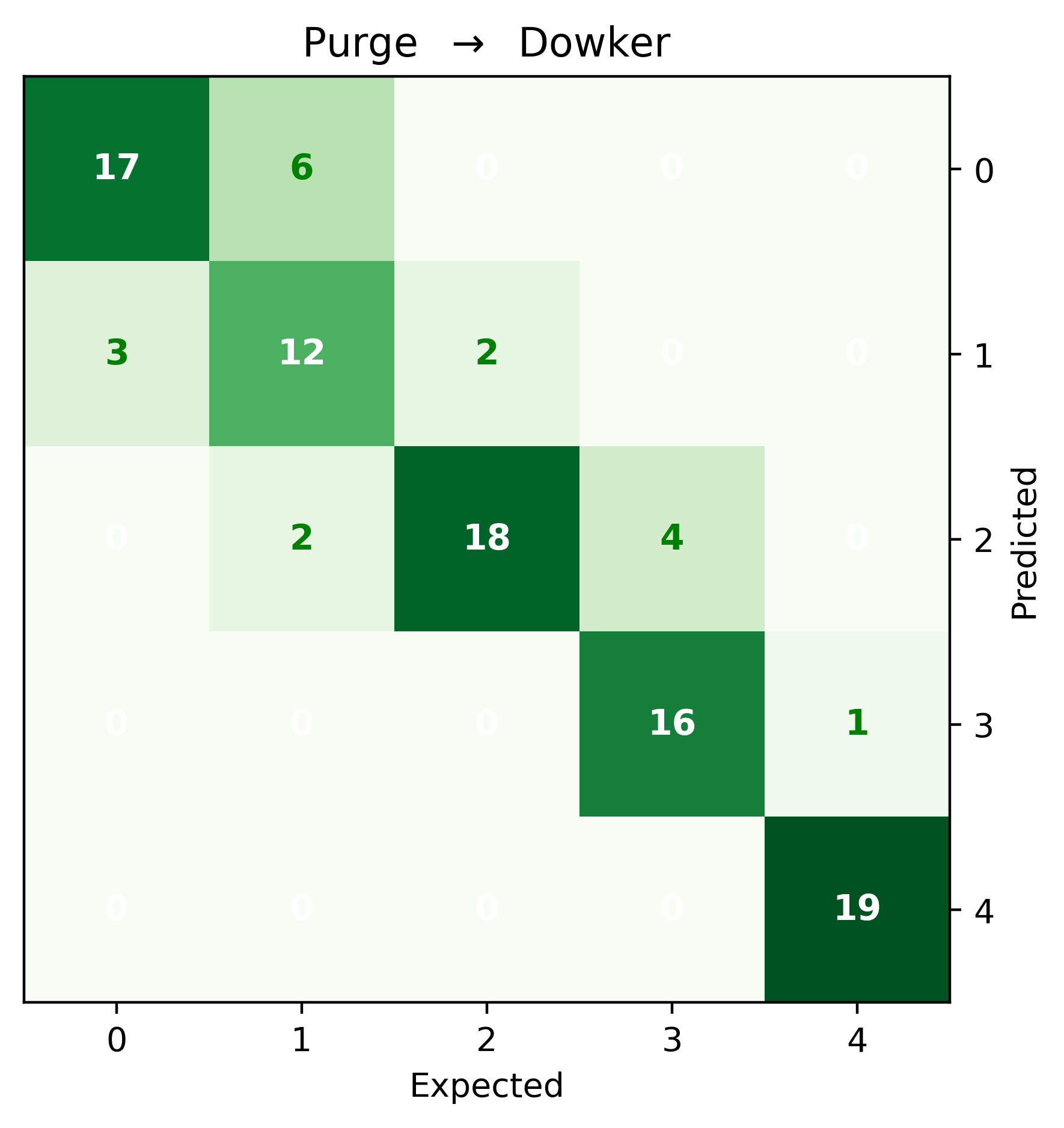}
    \includegraphics[width=0.4\textwidth]{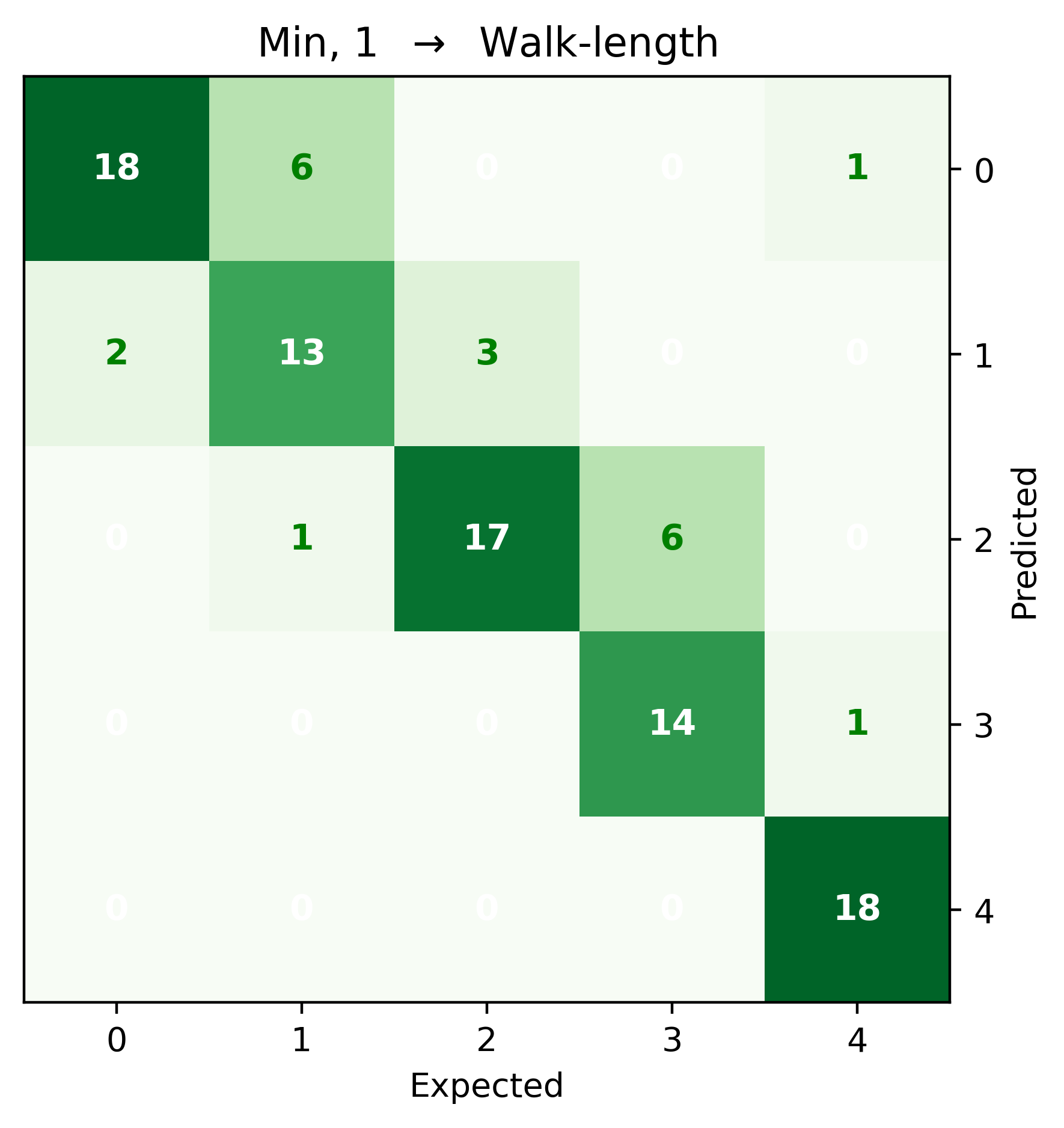} \qquad
    \includegraphics[width=0.4\textwidth]{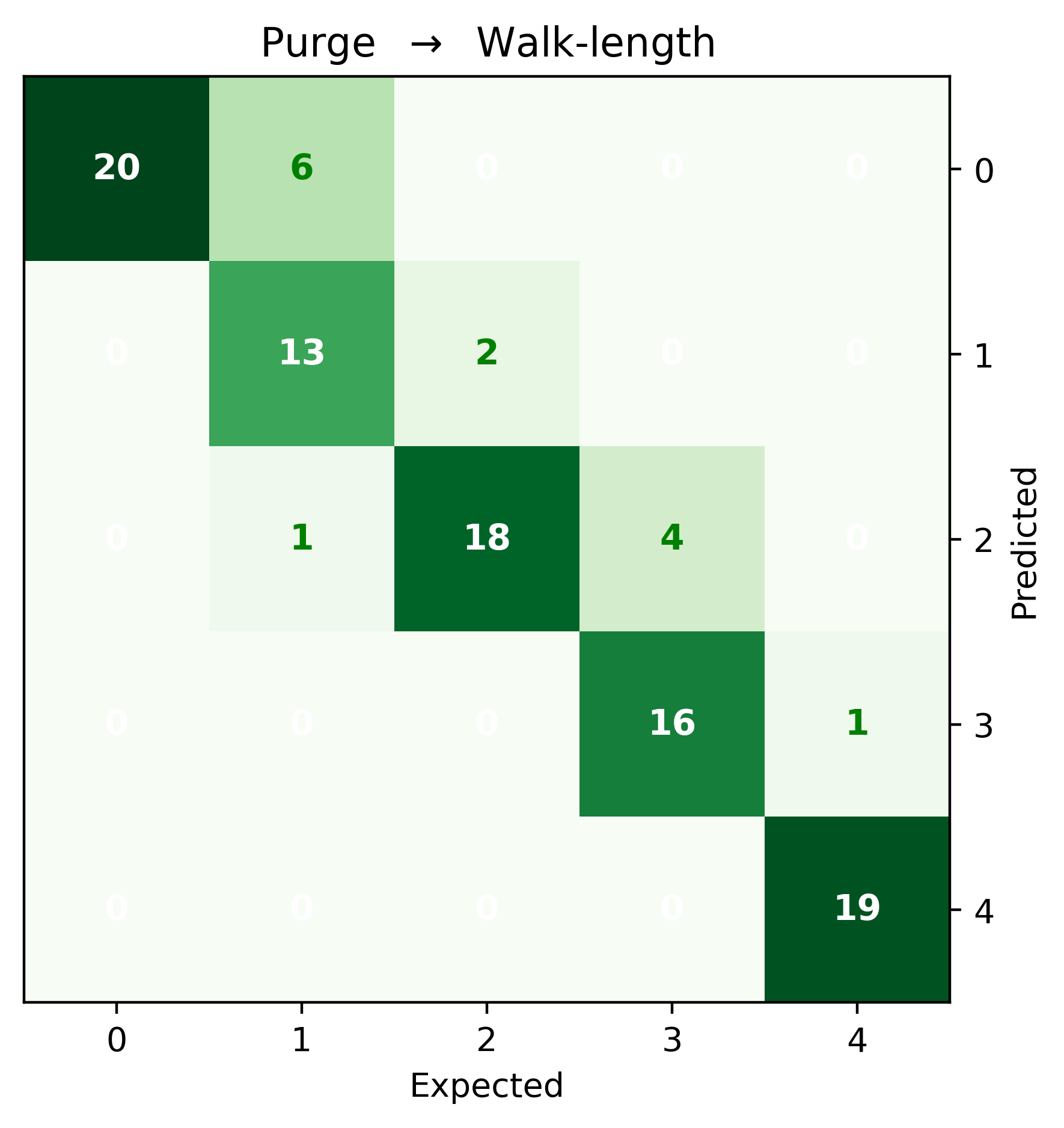}
    \caption{Confusion matrices from classification results using Dowker (left) and Walk-Length (right) persistence on the networks $(X_k,\omega_k^{\text{purge}})$.}
    \label{fig:confusionmats}
\end{figure}
The main difference is seen in the simulations performed in the arena without holes, where Dowker is identifying one persistent loop in $3$ of these networks, leading to misclassification. There is also a considerable improvement from the initial setting (Dowker persistence with weights $\omega_k^{\text{min},1}$) in the simulations with one hole, gained either by replacing the weights with $\omega_k^\text{purge}$ or by using walk-length persistence.

\textbf{Note on computation time for simplicial filtrations.} 
Most computations were run on an Intel(R) Xeon(R) CPU E5-2680 v4 @ 2.40GHz model.
Average recorded run time for building the $2$-dimensional simplicial Dowker $\delta$-sink filtration of each network was $8.70\pm0.31$ seconds, while for walk-length filtration it was $2.79\pm0.18$ seconds. 
Recall that these networks are complete weighted directed graphs with number of nodes between 150 and 200. Also note that these times do not include persistence, which was computed with the Gudhi library \cite{gudhi:FilteredComplexes}.
\section{Discussion and Future Work}

In this paper, we have defined the \textit{walk-length filtration}. 
We investigated stability and computability of the resulting persistence diagrams.
Finally, we provided experiments where it was used to differentiate input directed graphs in several example settings. 
We next note interesting further directions for this work.

In the course of showing stability, we gave a new definition for the distance between networks (that is, complete weighted digraphs) and showed that the resulting persistence diagrams are stable under the bottleneck distance with respect to this new distance. 
However, there is something unsatisyfing in this result which leads to many interesting open questions. 
First, defining the network distance without restricting to isomorphisms of the vertex set resulted in a multiplicative constant in the stability theorem directly related to the size of the inputs. 
In some sense, this should not be surprising since we are relating a $L_\infty$ type distance (bottleneck) with an $L_1$ type distance, and so summing over some maximal value will result in such a constant.
However, our proof method is directly tied to being able to work with interleavings of the resulting persistence modules, and thus is restricted to the bottleneck distance.
It would be interesting to see if a version of stability could be given instead by comparing the persistence diagrams with Wasserstein distance. 
For example, \cite{Skraba2020} gives stability of this form by comparing two input filtrations $f,g: K \to \R$ using the distance $\sum_{\sigma \in K} |f-g|$. 
However, since we start with only graph information as input, it is not immediately clear how to relate the network distance with this distance for the entire simplicial complex. 

Another interesting direction of work is to gain further understanding on the relationship between the walk-length persistence and other available methods for analyzing digraph inputs from the TDA literature. 
In particular, the definition of persistent path homology (PPH), arises from constructing a homology theory tightly bound to small paths in the input directed graph. 
However, a major difference is that given an input network, the paths to be studied are those for which all edges have weight bounded by $\delta$ (again, in an $L_\infty$-style construction) rather than requiring the entire length of the path ($L_1$-style) to be bounded by a particular weight. 
It would be interesting to see if these two definitions are more closely related than it would seem.

\subsection*{Acknowledgements}

This work was supported in part by the National Science Foundation through grants DGE-2152014, CCF-2106578, and CCF-2142713.
The authors would like to thank Brittany Fasy and Facundo M\'emoli for helpful discussions during the course of this work. 

\appendix

\bibliographystyle{plainurl}%
\bibliography{references}

\section{Bounding the running time}
This lemma is used in Section~\ref{ssec:algorithm} to bound the running time. The method largely follows a post\footnote{\url{https://mathoverflow.net/questions/17202/sum-of-the-first-k-binomial-coefficients-for-fixed-n}}  on stack overflow. 
\begin{lemma}
\label{lem:binom_bound}
For a fixed $K$, 
\begin{equation*}
    \sum_{i = 0}^{K} \binom{N}{i} \leq \binom{N}{K}\frac{N-K+1}{N-2K+1}. 
\end{equation*}
\end{lemma}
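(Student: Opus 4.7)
The plan is to bound each binomial coefficient $\binom{N}{i}$ for $i \le K$ by a geometric decay from $\binom{N}{K}$, and then sum the resulting geometric series. Concretely, I will control ratios of consecutive binomials and use that their ratio is monotone, which lets the tail below $K$ be compared termwise against powers of a single ratio $r$.

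First I would compute the ratio of consecutive binomials:
\[
\frac{\binom{N}{i}}{\binom{N}{i+1}} = \frac{i+1}{N-i}.
\]
As a function of $i$, the right-hand side is increasing, so for every $0 \le i \le K-1$ it is at most its value at $i = K-1$, namely $r := K/(N-K+1)$. By induction in $j$, this yields
\[
\binom{N}{K-j} \le r^{\,j}\binom{N}{K}, \qquad 0 \le j \le K.
\]

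Next I would reindex the target sum via $j = K-i$ and apply the pointwise bound to obtain a geometric sum:
\[
\sum_{i=0}^{K}\binom{N}{i} = \sum_{j=0}^{K}\binom{N}{K-j} \le \binom{N}{K}\sum_{j=0}^{K} r^{\,j} \le \binom{N}{K}\cdot\frac{1}{1-r},
\]
valid whenever $r<1$, i.e.\ when $N \ge 2K$ (the regime in which the claimed bound is meaningful, since otherwise its right-hand side is non-positive or undefined). A short simplification gives
\[
\frac{1}{1-r} = \frac{N-K+1}{(N-K+1)-K} = \frac{N-K+1}{N-2K+1},
\]
which is exactly the factor appearing in the statement, completing the argument.

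I do not anticipate a substantive obstacle: the only subtlety is verifying monotonicity of the ratio $(i+1)/(N-i)$ in $i$ (immediate, since the numerator grows and denominator shrinks) and checking that $r<1$ in the intended range. The rest is routine algebra, and the argument parallels the standard Stack Overflow derivation cited in the footnote.
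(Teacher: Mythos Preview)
Your proof is correct and follows essentially the same approach as the paper: factor out $\binom{N}{K}$, bound the ratios $\binom{N}{K-j}/\binom{N}{K}$ by $r^j$ with $r=K/(N-K+1)$, and sum the geometric series. If anything, your version is slightly more explicit about why the termwise bound holds (via monotonicity of $(i+1)/(N-i)$), whereas the paper simply writes out the first few ratio terms and compares.
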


\begin{proof}
    For a fixed $K$ and $N \to \infty$, 
    \begin{align*}
        \frac{1}{\binom{N}{K} } &\left(\binom{N}{K} + \binom{N}{K-1} + \binom{N}{K-2} \cdots  \right)\\
        &=
        1 + \frac{K}{N-K+1} + \frac{K(K-1)}{(N-K+1)(N-K+2)} + \frac{K(K-1)(K-2)}{(N-K+1)(N-K+2)} + \cdots \\
        & \leq 1 + \frac{K}{N-K+1} + \left(\frac{K}{N-K+1} \right)^2 + \left(\frac{K}{N-K+1} \right)^3 + \cdots \\
        & = \frac{1}{1 - \frac{K}{N-K+1}}
        = \frac{N-K+1}{N-2K+1}
    \end{align*}
    where the last line uses the equation for a geometric series after checking that $|\frac{K}{N-K+1}| <1$. 
\end{proof}

\section{Dendrograms}\label{sec:dendrograms}

In Figure~\ref{fig:dendrograms} we show reproduced dendrograms of \cite{CM2018-Dowker}. These show the results of single linkage clustering with respect to bottleneck distance between persistence diagrams of all networks. 
Entries are labeled with the number of O's corresponding to the number of holes in the arena generating the data point. 
No label indicates the arena with no holes.
We show the same analysis corresponding to walk-length persistence in Figure~\ref{fig:dendrogram-WL}.
In these cases, we see a qualitative improvement in clustering for the walk length persistence over the Dowker persistence, which mirrors the quantitative comparison done in Section~\ref{ssec:HippocampalNetworks}.
\begin{figure}
    \centering
    \includegraphics[width=0.49\textwidth]{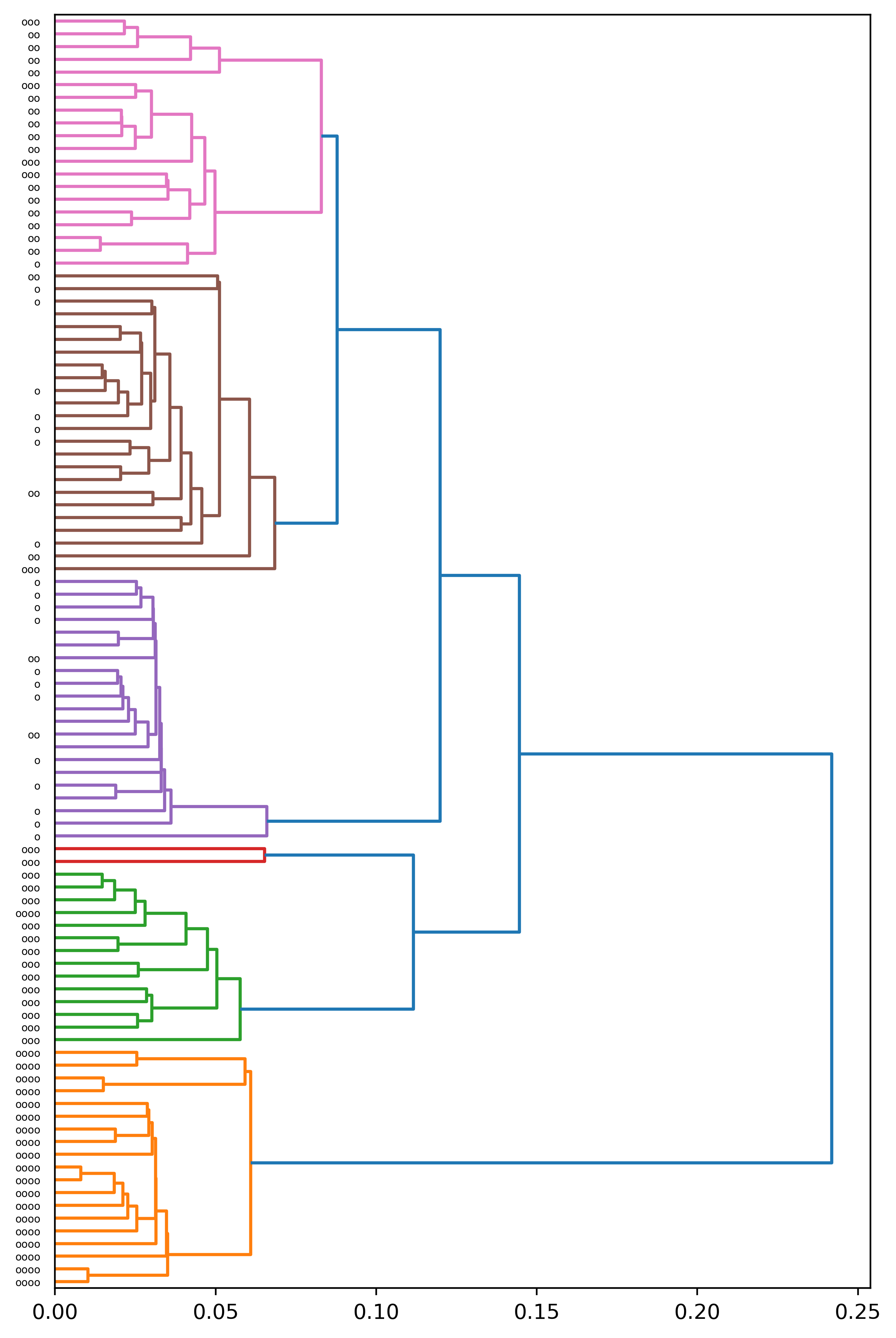}
    \includegraphics[width=0.49\textwidth]{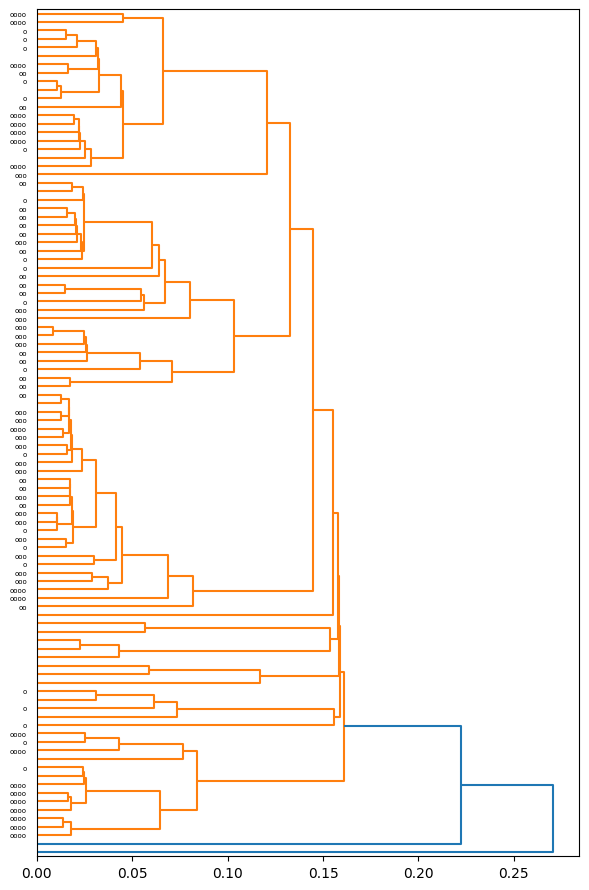}
    \caption{Dendrograms from bottleneck distance between persistence diagrams obtained from directed networks. Left: Dowker (asymmetric) persistence; colors represent clustering at threshold 0.085. Right: Rips persistence after max-symmetrization.}
    \label{fig:dendrograms}
\end{figure}

\begin{figure}
    \centering
    \includegraphics[width=0.8\textwidth]{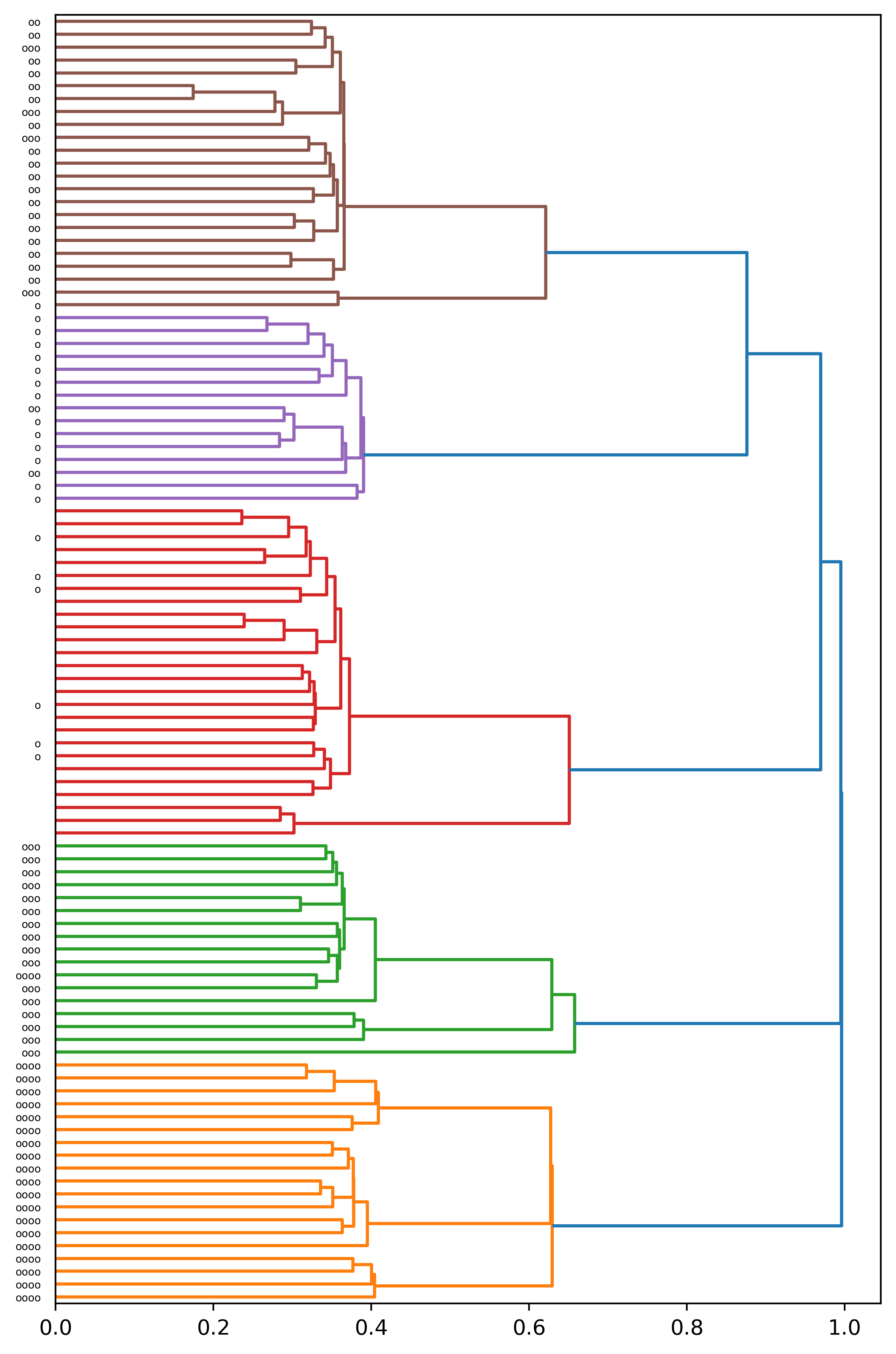}
    \caption{Dendrogram from bottleneck distance between walk-length persistence diagrams obtained from directed networks with weight $\omega_k^{\text{purge}}$. Labels on vertical axis indicate number of holes on each trial, where arenas with no holes do not have a label. Colors represent clustering at threshold 0.8.}
    \label{fig:dendrogram-WL}
\end{figure}

\end{document}